\newtheorem{theorem}{Theorem}[section]
\newtheorem{lemma}[theorem]{Lemma}
\newtheorem{proposition}[theorem]{Proposition}
\newtheorem{corollary}[theorem]{Corollary}
\newtheorem{examplecore}[theorem]{Example}
\newenvironment{example}
  {\begin{examplecore}\rm}
  {\hfill $\Box$ \end{examplecore}}
\newtheorem{definition}[theorem]{Definition}
\newcommand{\Var}{\mathop{\mathbf{Var}}}
\newcommand{\E}{\mathop{\mathbf{E}}}
\newcommand{\maj}{\mathrm{maj}}
\newcommand{\meet}{\wedge}
\newcommand{\join}{\vee}
\newcommand{\Alg}{\mathrm{Alg}}
\newcommand{\Pol}{\mathrm{Pol}}
\newcommand{\CSP}{\mathrm{CSP}}
\newcommand{\eCSP}{\exists\mathrm{CSP}}
\newcommand{\bw}{{\bm w}}
\newcommand{\bbA}{\mathbb{A}}
\newcommand{\bbB}{\mathbb{B}}
\newcommand{\bbC}{\mathbb{C}}
\newcommand{\bbF}{\mathbb{F}}
\newcommand{\bbL}{\mathbb{L}}
\newcommand{\bbR}{\mathbb{R}}
\newcommand{\bfA}{\mathbf{A}}
\newcommand{\bfB}{\mathbf{B}}
\newcommand{\bfP}{\mathbf{P}}
\newcommand{\bfR}{\mathbf{R}}
\newcommand{\caA}{\mathcal{A}}
\newcommand{\caC}{\mathcal{C}}
\newcommand{\caI}{\mathcal{I}}
\newcommand{\caP}{\mathcal{P}}
\newcommand{\caS}{\mathcal{S}}
\newcommand{\caU}{\mathcal{U}}
\newcommand{\caV}{\mathcal{V}}
\newcommand{\pr}{\mathrm{pr}}
\newcommand{\set}[1]{\{#1\}}
\newcommand{\dist}{\mathrm{dist}}
\begin{document}

\title{Testing Assignments to Constraint Satisfaction Problems}
\author{Hubie Chen\\
Univ. Pa\'is Vasco and IKERBASQUE \\
  E-20018 San Sebasti\'{a}n,
  Spain\\
  \texttt{hubie.chen@ehu.es}
  \and
  Matt Valeriote\thanks{Supported by a grant from the Natural Sciences and Engineering Research Council of Canada}\\
  McMaster University\\
  Hamilton, Canada\\
  \texttt{matt@math.mcmaster.ca}
  \and
  Yuichi Yoshida\thanks{Supported by JSPS Grant-in-Aid for Young Scientists (B) (No.~26730009), MEXT Grant-in-Aid for Scientific Research on Innovative Areas (24106003), and JST, ERATO,Kawarabayashi Large
 Graph Project.}\\
  National Institute of Informatics\\
  Chiyoda-ku, Tokyo 101-8430, Japan\\
  \emph{and}\\
  Preferred Infrastructure, Inc.\\
  Bunkyo-ku, Tokyo 113-0033, Japan\\
  \texttt{yyoshida@nii.ac.jp}
}
\date{}
\maketitle

\begin{abstract}
  For a finite relational structure $\bfA$, let $\CSP(\bfA)$ denote the CSP instances whose constraint relations are taken from $\bfA$.
  The resulting family of problems $\CSP(\bfA)$ has been considered heavily in a variety of computational contexts.  In this article, we consider this family from the perspective of property testing:
  given an instance of a CSP and query access to an assignment,
  one wants to decide whether the assignment satisfies the instance,
  or is far from so doing.
  While previous works on this scenario
  studied concrete templates or restricted classes of structures, this article presents comprehensive classification theorems.

  Our first contribution is a dichotomy theorem 
  completely characterizing
  the structures $\bfA$ such that $\CSP(\bfA)$ is constant-query testable:
  \begin{itemize}
  \itemsep=0pt
  \item If  $\bfA$ has a  majority polymorphism and a Maltsev polymorphism, then $\CSP(\bfA)$ is constant-query testable with one-sided error.
  \item Else, testing $\CSP(\bfA)$ requires a super-constant number of queries.
  \end{itemize}

  Let $\eCSP(\bfA)$ denote the extension of $\CSP(\bfA)$
  to instances which may include existentially quantified variables.
  Our second contribution is to classify all structures $\bfA$ in terms of the number of queries needed to test assignments to instances of $\eCSP(\bfA)$, with one-sided error.
  More specifically, we show the following trichotomy:
  \begin{itemize}
  \itemsep=0pt
  \item If $\bfA$ has a majority polymorphism and a Maltsev polymorphism, then $\eCSP(\bfA)$ is constant-query testable with one-sided error.
  \item Else, if $\bfA$ has a $(k+1)$-ary near-unanimity polymorphism for some $k \geq 2$, and no Maltsev polymorphism then $\eCSP(\bfA)$ is not constant-query testable (even with two-sided error) but is sublinear-query testable with one-sided error.
  \item Else, testing $\eCSP(\bfA)$ with one-sided error requires a linear number of queries.
  \end{itemize}
\end{abstract}

\thispagestyle{empty}
\setcounter{page}{0}

\newpage


\section{Introduction}\label{sec:intro}

\subsection{Background}

In property testing, the goal is to design algorithms that distinguish objects satisfying some predetermined property $P$ from objects that are far from satisfying $P$.
More specifically,
for $\epsilon,\delta > 0$, an algorithm is called an \emph{$(\epsilon,\delta)$-tester} for a property $P$, if given an input $I$, it accepts with probability at least $1-\delta$ if the input satisfies $P$, and it rejects with probability at least $1-\delta$ if the input $I$ is $\epsilon$-far from satisfying $P$.
Roughly speaking, we say that $I$ is \emph{$\epsilon$-far} from $P$ if we must modify at least an $\epsilon$-fraction of $I$ to make $I$ satisfy $P$.
When $\delta=1/3$, we simply call it an \emph{$\epsilon$-tester}.
A tester is called a \emph{one-sided error tester} if it always accepts when $I$ satisfies $P$.
In contrast, a standard tester is sometimes called a \emph{two-sided error tester}.
As one motivation of property testing is
to design algorithms that run in time sublinear in the input size, we assume query access to the input, and we measure the efficiency of a tester by its \emph{query complexity}.
We refer to~\cite{Goldreich:2011cg,Ron:2010ua,Rubinfeld:2011ik} for surveys on property testing.

In \emph{constraint satisfaction problems} (for short, \emph{$\CSP$s}),
one is
given a set of variables and a set of constraints imposed on the variables, and the task is to find an assignment of the variables that satisfies all of the given constraints.
By restricting the relations used to specify  constraints, it is known that certain restricted versions of the CSP coincide with many fundamental problems such as SAT, graph coloring, and
solvability of systems of linear equations.
To formally define these restricted versions of the CSP
(and hence, these problems),
we consider \emph{relational structures} $\bfA = (A; \Gamma)$,
where $A$ is a
finite set and $\Gamma$ consists of a finite set of finitary relations over $A$.  In this context, $\Gamma$
is sometimes referred to as a \emph{constraint language} over $A$ and $\bfA$ as a \emph{template}.
Then, we define $\CSP(\bfA)$ to be  those instances of the CSP whose constraint relations are taken from $\Gamma$.
In recent years, computational aspects of $\CSP(\bfA)$ have been
heavily studied, in the decision setting~\cite{IMMVW10-tractabilityfewsubpowers,Bulatov11-conservative-csp,Barto:2014ul,Barto:2014gn},
in counting complexity~\cite{Bulatov13-counting,DyerR13},
in computational learning theory~\cite{IMMVW10-tractabilityfewsubpowers,Chen:2015wh},
and in optimization and approximation~\cite{Raghavendra08,DBLP:conf/focs/ThapperZ12,ChanLRS13,DBLP:conf/stoc/ThapperZ13,DBLP:conf/icalp/ThapperZ15}.
See also the survey by Barto~\cite{barto-survey} for an overview of this line of research.

In this paper, we consider the problem family $\CSP(\bfA)$
from the perspective of property testing,
in particular, we consider the task of
testing assignments to CSPs.
Relative to a relational structure $\bfA$,
an input consists of a tuple $(\caI,\epsilon,f)$, where $\caI$ is an instance of $\CSP(\bfA)$ with weights on the variables, $\epsilon$ is an error parameter, and $f$ is an assignment to $\caI$.
In the studied model, the tester has
full access to $\caI$ and  query access to $f$, that is,
a variable $x$ can be queried to obtain the value of $f(x)$.
In this sense, assignment testing lies in the \emph{massively parameterized model}~\cite{Newman:2010da}.
We say that $f$ is \emph{$\epsilon$-far} from satisfying $\caI$ if one must modify at least an $\epsilon$-fraction of $f$ (with respect to the weights) to make $f$ a satisfying assignment of $\caI$, and we say that $f$ is \emph{$\epsilon$-close} otherwise.
It is always assumed that $\caI$ has a satisfying assignment as otherwise we can immediately reject the input (in this context,
one does not care about time complexity).
The objective of assignment testing of CSPs is to correctly decide whether $f$ is a satisfying assignment of $\caI$ or is $\epsilon$-far from being so with probability at least $2/3$.
When $f$ does not satisfy $\caI$ but is $\epsilon$-close to satisfying $\caI$, we can output anything.

In assignment testing, we say that the query complexity of a tester is constant/sublinear/linear if it is constant/sublinear/linear in the number of variables of an instance.
The main problem addressed in  this paper is to reveal the relationship between a relational structure $\bfA$ and the number of queries needed to test $\CSP(\bfA)$ and a related problem class $\eCSP(\bfA)$.

\subsection{Contributions}

  While previous works on testing assignments to the problems $\CSP(\bfA)$
  studied concrete templates $\bfA$ or restricted classes of structures, this article presents comprehensive classification theorems.

The first contribution of this paper is a dichotomy theorem that
\emph{completely characterizes} the constant-query testable CSPs.
Before describing our characterization, we introduce the
algebraic notion of a polymorphism which is key to the description
and obtention of our results.
Let $R$ be an $r$-ary relation on a set $A$.
A ($k$-ary) operation $f : A^k \to A$ is said to be a \emph{polymorphism} of $R$ (or $R$ is \emph{preserved} by $f$) if for any set of $k$ $r$-tuples $(a_1^1,\ldots,a_r^1), (a_1^2,\ldots,a_r^2), \ldots, (a_1^k,\ldots,a_r^k) \in R$, the tuple
  $(f(a_1^1,\ldots,a_1^k),\ldots,f(a_r^1,\ldots,a_r^k))$
also belongs to $R$.
An operation $f$ is a \emph{polymorphism} of a relational structure $\bfA$ if it is a polymorphism of each of its relation.
We define the \emph{algebra} of $\bfA$, denoted by $\Alg(\bfA)$, to be the pair  $(A; \Pol(\bfA))$, where $\Pol(\bfA)$ is the set of all polymorphisms of $\bfA$.

\begin{definition}
  Let $A$ be a nonempty set.
A \emph{majority operation} on $A$  is a ternary operation $m: A^3 \to A$ such that
 $m(b,a,a) = m(a,b,a) = m(a,a,b) = a$ for all $a$, $b \in A$.
A \emph{Maltsev operation} on $A$  is a ternary
operation $p: A^3 \to A$ such that
$p(b,a,a) = p(a,a,b) = b$ for all $a$, $b \in A$. For $k \geq 2$,
an operation $n: A^{(k+1)} \to A$
 is a \emph{$(k+1)$-ary near unanimity operation} on $A$ if for all $a$, $b \in A$,
  \[
  n(b,a,a,\ldots,a) = n(a,b,a,\ldots, a) = \cdots = n(a,a,\ldots, a,b) = a.
  \]
  (Note that a majority operation is a $3$-ary near-unanimity operation.)
\end{definition}
\begin{theorem}\label{the:intro-csp-main}
  Let $\bfA$ be a relational structure.
  The following dichotomy holds.
  \begin{itemize}
  \itemsep=0pt
  \item[(1)] If $\bfA$ has a majority polymorphism and a Maltsev  polymorphism, then $\CSP(\bfA)$ is constant-query testable (with one-sided error).
  \item[(2)] Else, testing $\CSP(\bfA)$ requires a super-constant number of queries.
  \end{itemize}
\end{theorem}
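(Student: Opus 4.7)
The plan is to treat parts (1) and (2) separately, with the constant-query algorithm in (1) being the constructive part and the lower bound in (2) being the delicate technical core.

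For (1), my approach exploits the strong algebraic consequences of having both a majority and a Maltsev polymorphism: the clone $\Pol(\bfA)$ contains an \emph{arithmetical} collection of operations. This has two structural payoffs. First, by a Baker--Pixley-style argument (obtainable from the majority alone), every subalgebra of any finite power of $\bfA$ is determined by its binary projections; so after preprocessing $\caI$ to enforce 2-consistency (in polynomial time, since the testing model puts no bound on preprocessing time) the solution set is captured by the family of binary projections $\{R_{xy}\}$ indexed by pairs of variables. Second, the Maltsev polymorphism gives the solution set a compact, coset-like description. Given these, the tester samples $O(1/\epsilon)$ variables $x$ from the variable-weight distribution, queries $f$ at a constant-size random neighborhood for each sample, and rejects iff some binary check $(f(x), f(y)) \in R_{xy}$ fails. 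Soundness hinges on using the combined arithmetical structure to show that if $f$ is $\epsilon$-far from every satisfying assignment, then a weight $\Omega(\epsilon)$ fraction of variables must participate in some binary violation, and hence is detected with constant probability. One-sided error is automatic: if $f$ satisfies $\caI$, every pair check succeeds.

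For (2), my approach is to invoke the algebraic dichotomy: if $\bfA$ is missing either a majority polymorphism or a Maltsev polymorphism, then by standard universal-algebra machinery one can pp-interpret in $\bfA$ a template whose CSP already requires super-constant query complexity for assignment testing. Two canonical obstruction families suffice. If there is no majority polymorphism, then the pp-definable relations can encode long implication chains, as in $2$-SAT-style reachability. If there is no Maltsev polymorphism, then the pp-definable relations can encode affine subspaces over a finite abelian group, as in systems of linear equations. For both families, super-constant testing lower bounds follow from Yao/hybrid-style indistinguishability arguments on instances where any satisfying assignment has long-range dependencies.

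The main obstacle is the lower bound: the pp-interpretability reductions standard in decision complexity are not automatically testing-preserving, since a pp-interpretation introduces auxiliary variables whose weights can dilute the $\epsilon$-farness between assignments. The technical core will therefore be to build weight-aware gadgets---concentrating the meaningful weight on the ``original'' variables and assigning small or negligible weight to auxiliary variables---so that an $\epsilon$-far assignment in the hard source instance maps to an $\Omega(\epsilon)$-far assignment in the target $\CSP(\bfA)$ instance. Getting this reduction to work uniformly across the ``no majority'' and ``no Maltsev'' cases, and matching it to the corresponding hardness templates, is where I expect to do the most careful work.
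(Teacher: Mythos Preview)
Your proposal contains a substantive error in part~(2) and leaves the crucial step of part~(1) as an unproven assertion.

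For part~(2), you have the two obstruction cases swapped. The absence of a \emph{Maltsev} polymorphism is what yields a non-rectangular binary relation: there exist $0,1$ in (the domain of some algebra of) the variety and a binary $\gamma$ with $(0,0),(0,1),(1,1)\in\gamma$ but $(1,0)\notin\gamma$. This is precisely the monotone/2-SAT obstruction; indeed the 2-SAT template \emph{has} a majority polymorphism and \emph{lacks} Maltsev, so ``no majority $\Rightarrow$ 2-SAT-style'' cannot be right. Conversely, affine relations over a finite abelian group \emph{have} a Maltsev polymorphism ($x-y+z$) and \emph{lack} a majority polymorphism, so ``no Maltsev $\Rightarrow$ linear equations'' is also backwards. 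The paper's split is: failure of congruence meet semidistributivity gives a linear-query lower bound via the affine/linear-equation mechanism (quoting prior work), while absence of Maltsev gives the super-constant lower bound via a Ruzs\'a--Szemer\'edi construction applied to the non-rectangular $\gamma$. With the cases swapped, your gadget search would fail in both branches. Your concern about weight dilution under pp-definability is well placed; the paper handles it through gap-preserving local reductions rather than ad hoc weight-aware gadgets.

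For part~(1), you correctly invoke Baker--Pixley to reduce to binary constraints, but your proposed tester---sample $O(1/\epsilon)$ variables and check a random constant-size neighborhood---rests entirely on the unproven claim that $\epsilon$-farness forces an $\Omega(\epsilon)$ weight of variables to participate in a binary violation. The paper does not prove anything of this direct form. Instead it builds a chain of gap-preserving reductions: a \emph{factoring} reduction that quotients each domain $A_x$ by $\mu_x=\bigwedge_{y\ne x}\theta_{xy}$; a \emph{splitting} reduction that replaces each domain by a subdirect product of subdirectly irreducible factors; and an \emph{isomorphism} reduction that contracts the $\sim$-classes of variables whose binary constraint is the graph of an isomorphism. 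The only genuine sampling step is the $\sim$-consistency check inside the isomorphism reduction, and its soundness uses that within a $\sim$-class every constraint is a bijection. Iterating these reductions at most $|A|$ times collapses the instance to a trivial one. If you want to bypass this machinery with a one-shot random-pair test, you owe a proof of your soundness claim, and nothing in your outline indicates how the arithmetical hypothesis would deliver it.
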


This theorem generalizes characterizations of constant-query testable List $H$-homomorphisms~\cite{Yoshida:2014zn} and Boolean CSPs~\cite{Bhattacharyya:2013fa} to general CSPs.
In Section~\ref{sec:arithmetic} we will describe the particularly nice structure of relations over templates that have majority and Maltsev polymorphisms and use this to prove the theorem.
For the moment, let us consider a number of example templates
to which the positive result of this theorem applies.
\begin{example}
The template $\bfA$ over the Boolean domain $\{0,1\}$ whose only relation is $\neq$
 has both majority and Maltsev polymorphisms.
 Note that $\CSP(\bfA)$ coincides with the graph 2-coloring problem.

More generally, the template $\bfA$ over a finite domain where each relation is a bijection on $A$ has both majority and Maltsev polymorphisms, and
instances of $\CSP(\bfA)$ for such templates $\bfA$ coincide with instances of the
problem which is the subject of the
\emph{unique games conjecture}~\cite{Khot:2002ju}.
\end{example}

\begin{example}
\label{ex:discriminator}
Another class of finite structures that have both majority and Maltsev polymorphisms are those that have a \emph{discriminator} operation as a polymorphism.  On a set $A$ the discriminator operation $d_A(x,y,z)$ is the operation such that if $x = y$ then $d(x,y,z) = z$ and if $x \ne y$, $d(x,y,z) = x$.  From this definition, it is immediate that $d_A$ is a Maltsev operation on $A$, and that $d(x, d(x,y,z),z)$ is a majority operation on $A$.  Any finite product of finite fields will have a discriminator term operation (\cite{Bu-Sa}) and so any finite relational structure whose relations are compatible with the operations of such a ring will have majority and Maltsev polymorphism.
\end{example}

 \begin{example}\label{RingExample}  
For $p$ a prime number, let $\bbF_p$ be the field of size $p$, and let $\bbR$ be the ring $\bbF_2 \times \bbF_3 \times \bbF_5$.
 Then as noted in Example~\ref{ex:discriminator}, $\bbR$ has a discriminator term operation.  Let $\bfR$ be the structure with domain $R$ and set of relations $\Gamma$  consisting of intersections of the following binary relations on $R$: For $p = 2$, 3, or 5,
\begin{itemize}
\item  $C_p = \{((a_2, a_3, a_5), (b_2, b_3, b_5)) \mid a_p = b_p\}$,
\item For $a \in \bbF_p$, $E_a = \{((a_2, a_3, a_5), (b_2, b_3, b_5)) \mid a_p = a\}$,
\item For $b \in \bbF_p$, $E_b = \{((a_2, a_3, a_5), (b_2, b_3, b_5)) \mid b_p = b\}$,
\end{itemize}
So relations in $\Gamma$ can express that pairs of elements in $R$ are congruent modulo 2, 3, or 5 in the corresponding coordinate and/or that a certain coordinate is equal to some fixed value.  These relations are invariant under the discriminator term operation of $\bbR$ and so according to Theorem~\ref{the:intro-csp-main}, $\CSP(\bfR)$ has constant query complexity.
\end{example}

Examples of structures that satisfy the first condition of Theorem~\ref{the:intro-csp-main} but that do not have a discriminator operation as a polymorphism can be derived from finite Heyting algebras.

\begin{example}
  Consider the five-element Heyting algebra $\mathbb{M}$ presented in~\cite[Figure 1]{IdziakIdziak}. (Heyting algebras are bounded distributive lattices that also have a binary ``implication'' operation; they serve as algebraic models of propositional intuitionistic logic.) This algebra has universe $M = \{0, a, b, e, 1\}$;
   the two
    equivalence relations $\alpha$ and $\beta$ that partition $M$ into blocks $\{\{0,a\}, \{b, e, 1\}\}$ and $\{\{0,b\}, \{a, e, 1\}\}$
    (respectively) are preserved by the operations of the algebra.
    Since $\mathbb{M}$ has majority and Maltsev term operations (the operations $(x\meet y)\join (x\meet z) \join (y\meet z)$ and $(x\rightarrow y)\rightarrow z) \meet(z\rightarrow y)\rightarrow x)$ respectively), then the structure $\mathbf{M} = (M; \alpha, \beta)$ has majority and Maltsev polymorphisms.  The only other non-trivial binary relation on $M$ that is definable by a primitive-positive formula over $\mathbf{M}$ is $\alpha\cap \beta$.
\end{example}

\begin{example}
 Bulatov and Marx provide yet another example
of a structure having both a majority and a Maltsev polymorphism,
 in~\cite[Example 1.1]{Bulatov-Marx}.
\end{example}

We next consider existentially quantified CSPs ($\eCSP$s for short).
The difference between CSPs and $\eCSP$s
is that, in an instance of $\eCSP$, existentially quantified variables may appear.
So, an instance  of $\eCSP$ may be defined as a primitive positive formula (pp-formula) over a relational structure.
Primitive positive formulas are known as \emph{conjunctive queries}
in the database theory literature; they are arguably the
most heavily studied class of database queries, and
the problem $\eCSP$ can be associated with the problem of \emph{conjunctive query evaluation}.

For a relational structure $\bfA = (A; \Gamma)$, we define $\eCSP(\bfA)$ to be the collection of instances of $\eCSP$ whose constraint relations  are taken from $\Gamma$.
Our second contribution is to provide a complete classification of
all structures $\bfA$
in terms of the number of queries needed to test assignments of instances of $\eCSP(\bfA)$ with one-sided error:
\begin{theorem}\label{the:intro-ecsp-main}
  Let $\bfA$ be a relational structure.
  Then, the following trichotomy holds.
  \begin{itemize}
  \itemsep=0pt
  \item[(1)] If $\bfA$ has a majority polymorphism and a Maltsev polymorphism, then $\eCSP(\bfA)$ is constant-query testable with one-sided error.
  \item[(2)] Else, if $\bfA$ has a $(k+1)$-ary near-unanimity polymorphism for some $k \geq 2$, and no Maltsev polymorphism then $\eCSP(\bfA)$ is not constant-query testable (even with two-sided error) but is sublinear-query testable with one-sided error.\footnote{
%
We remark that the combination of having a $(k+1)$-ary near unanimity polymorphism for some $k \ge 2$ and a Maltsev polymorphism is equivalent to having  majority and  Maltsev polymorphisms~\cite{Bu-Sa}.
  }
  \item[(3)] Else, testing $\eCSP(\bfA)$ with one-sided error requires a linear number of queries.
  \end{itemize}
\end{theorem}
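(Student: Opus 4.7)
The proof of Theorem~\ref{the:intro-ecsp-main} proceeds by the three cases in turn. For all three, the unifying observation is that an instance $\caI \equiv \exists \bar{y}.\, \phi(\bar{x}, \bar{y})$ of $\eCSP(\bfA)$ defines, on its free variables $\bar{x} = (x_1, \ldots, x_n)$, a relation $R \subseteq A^n$ that is primitive-positive definable from the relations of $\bfA$, and hence preserved by every polymorphism of $\bfA$. Testing whether a given assignment $f$ satisfies $\caI$ is therefore equivalent to testing whether $f \in R$, and the plan is to exploit the algebraic assumptions on $\bfA$ to analyze this $R$.

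For clause (1), $R$ inherits the majority and Maltsev polymorphisms of $\bfA$. I would revisit the constant-query tester built in Section~\ref{sec:arithmetic} to prove Theorem~\ref{the:intro-csp-main}, and verify that its query complexity depends on the template only through $|A|$, $\epsilon$, and the two witnessing polymorphisms, and not through the explicit list of constraint relations or their arities. Treating $\caI$ as a single-constraint instance with the wide relation $R$ would then immediately yield a one-sided constant-query tester for $\eCSP(\bfA)$.

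For clause (2), the upper bound leverages the Baker--Pixley theorem: a $(k{+}1)$-ary near-unanimity polymorphism on $\bfA$ forces $R$ to be determined by its $k$-ary projections, i.e., $R = \{a \in A^n : \pr_S(a) \in \pr_S(R) \text{ for every } S \in \binom{[n]}{k}\}$. A tester can thus sample random $k$-subsets $S$ of free variables, query $f$ on $S$, and reject iff $\pr_S(f) \notin \pr_S(R)$; a counting argument should show that an $\epsilon$-far assignment produces enough violating $k$-subsets to guarantee a sublinear (in $n$) query bound. For the matching non-constant lower bound, the absence of a Maltsev polymorphism implies, via the standard algebraic characterisation, that $\bfA$ pp-interprets a structure admitting a family of paired yes- and $\epsilon$-far-instances whose marginals on every constant-sized window of free variables coincide, ruling out constant-query testing via Yao's minimax principle.

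For clause (3), the absence of every $(k{+}1)$-ary near-unanimity polymorphism kills every Baker--Pixley-style local certificate for membership in pp-definable relations, and the plan is to translate this into a linear-query lower bound by constructing, for each $n$, two distributions on pairs $(\caI, f)$ --- one concentrated on yes-instances, one on $\epsilon$-far instances --- whose marginals agree on every set of $o(n)$ free variables, so that Yao's principle forces $\Omega(n)$ queries. The main obstacle, I expect, is the lower bound in clause (2): the hard distributions there must exploit the \emph{absence} of a Maltsev polymorphism while the \emph{presence} of a near-unanimity polymorphism rules out the most obvious linear-algebra-style hard instances, so the algebraic hypotheses pull in opposite directions. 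Reconciling these constraints with the testing-theoretic indistinguishability requirement is where I foresee the most delicate technical work.
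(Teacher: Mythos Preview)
Your outlines for clauses~(1) and the upper bound in~(2) are essentially what the paper does: the reduction from $\eCSP(\bfA)$ to $\CSP(\bfA)$ via the near-unanimity polymorphism (Lemma~\ref{lem:reduction-from-ecsp-to-csp}) is exactly ``treat the instance as a wide pp-definable relation on the free variables,'' and the Baker--Pixley/$k$-Helly argument for sublinear testability is the content of Section~\ref{sec:cd}. For the lower bound in~(2) you are also pointed in the right direction, though you should be aware that the actual indistinguishable distributions are built from Ruzs\'a--Szemer\'edi graphs (Theorem~\ref{the:non-cp}), not from an abstract pp-interpretation argument.

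The genuine gap is clause~(3). Your plan is to build, directly, two distributions over $(\caI,f)$ whose marginals agree on every $o(n)$-subset of free variables. Two problems. First, this would yield a \emph{two-sided} linear lower bound, but the theorem only claims a one-sided lower bound, and the paper explicitly lists the two-sided version as open (see the ``Open problems'' subsection). So if your approach worked it would resolve an open question; more to the point, there is no known way to produce such distributions from the bare hypothesis ``no near-unanimity polymorphism of any arity.'' Second, you give no mechanism for constructing these distributions, and ``no Baker--Pixley certificate'' is far too weak a starting point: that tells you local consistency does not certify global satisfiability, but it does not by itself produce far instances that are locally indistinguishable from satisfying ones.

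The paper's route is entirely different and you should know it. The absence of a near-unanimity polymorphism is equivalent (via Barto's proof of Z\'adori's conjecture) to $\caV(\Alg(\bfA))$ failing to be congruence distributive, hence failing to be either congruence meet-semidistributive or congruence modular. The non-SD$(\wedge)$ case was already handled in Section~\ref{sec:non-cp}. For the non-modular case, the paper builds a chain of gap-preserving linear reductions: testing the majority function (trivially $\Omega(n)$ for one-sided testers, Lemma~\ref{lem:non-cm-reduction-1}) $\to$ evaluation over a non-trivial finite lattice (Lemma~\ref{lem:non-cm-reduction-2}) $\to$ evaluation over a pentagon structure extracted from the non-modular variety (Lemma~\ref{lem:non-cm-reduction-3}, Lemma~\ref{lemma:non-cm-info}) $\to$ $\eCSP(\bfA)$ (Theorem~\ref{thm:non-cm}). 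The existential quantifiers are essential here: they absorb the auxiliary variables created when encoding lattice circuits as pp-formulas, which is exactly why this argument does not descend to $\CSP(\bfA)$ and why the trichotomy for $\CSP$ remains open. Your proposal does not anticipate any of this machinery, and in particular misidentifies where the difficulty lies: the hard part is clause~(3), not the lower bound in clause~(2).
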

Let us point out that the problem families $\CSP(\bfA)$ and
$\eCSP(\bfA)$ exhibit the same dichotomy for constant-query testability,
and in particular the positive result there is robust
with respect to the introduction of quantifiers.
An implication of Theorem~\ref{the:intro-ecsp-main} is this:
if the dichotomy for sublinear-query testability
for $\CSP(\bfA)$ was not the same as that for $\eCSP(\bfA)$,
then the positive result for that dichotomy would not
enjoy this robustness property, and hence such a positive result
would have to crucially exploit the absence of quantifiers.
Hence Theorem~\ref{the:intro-ecsp-main} reveals information
about the form of a potential trichotomy for $\CSP(\bfA)$.

A special feature of templates $\bfA$ that have a $(k+1)$-ary near-unanimity polymorphism is that any relation that is definable by a pp-formula over $\bfA$ can be decomposed into a number of $k$-ary relations that are also  pp-definable over $\bfA$.

\begin{example}
Consider
 the relational structure
$\bfA$
 over the Boolean domain $\{0,1\}$ whose only relation is $\le$.
This structure is readily verified to have a majority polymorphism
(note that over the Boolean domain, there is indeed a unique
majority operation), and does not have a Maltsev polymorphism:
for any Maltsev operation $p$, it holds that
applying $p$ to the tuples $(1,1),(0,1),(0,0)$, which are in the relation $\le$,
yields $(1,0)$, which are not in the relation $\le$.
Thus, Theorem~\ref{the:intro-ecsp-main} implies that
$\eCSP(\bfA)$ is not constant-query testable but is sublinear-query testable with one-sided error.
\end{example}

\begin{example}
We can generalize the previous example as follows.
Let $D$ be any finite set of size greater than or equal to $2$,
and consider the \emph{dual discriminator operation $\Delta$}
defined as follows: $\Delta(x,y,z)$ is equal to $x$ if $x=y$,
and is equal to $z$ otherwise.
Consider the relational structure $\bfA$ with universe $D$
and the following relations: each unary relation; each graph of a permutation (on $D$); and, each \emph{two-fan relation}.
Here, a \emph{two-fan relation} is a binary relation
$R \subseteq D \times D$ such that there exist
elements $a, b  \in D$ where
$R = (\{a\} \times \pi_2(R)) \cup (\pi_1(R) \times \{ b \})$.
It is straightforward to verify that
$\Delta$ is a majority polymorphism of $\bfA$.
On the other hand,
let $a, b \in D$ be arbitrary elements,
and consider the relation
$S = (\{ a \} \times D) \cup (D \times \{ b \})$.
The relation $S$ is a two-fan relation and so is a relation of $\bfA$,
but does not have a Maltsev polymorphism; we argue this as follows.
Let $a'$ be an element of $D$ distinct from $a$,
and let $b'$ be an element of $D$ distinct from $b$.
We have that the tuples $(a',b),(a,b),(a,b')$ are in $S$,
but if we apply any Maltsev polymorphism $p$ to them,
we obtain $(a',b')$ which is not in $S$.
The structure $\bfA$ thus does not have a Maltsev polymorphism;
we obtain by
Theorem~\ref{the:intro-ecsp-main}  that
$\eCSP(\bfA)$ is not constant-query testable but is sublinear-query testable with one-sided error.
\end{example}

\subsection{Proof outline}

We now describe outlines of our proofs of Theorems~\ref{the:intro-csp-main} and~\ref{the:intro-ecsp-main}.

\paragraph{$\bfA$ has majority and Maltsev polymorphisms $\Rightarrow$ $\eCSP(\bfA)$ is constant-query testable.}

We first look at~(1) of Theorem~\ref{the:intro-csp-main} and~\ref{the:intro-ecsp-main}.
As $\eCSP$s are a generalization of $\CSP$s, it suffices to consider $\eCSP$s.
Let $(\caI,\epsilon,f)$ be an input of the assignment testing of $\eCSP(\bfA)$.
First, we preprocess $\caI$ so that it becomes $2$-consistent (see Section~\ref{sec:pre} for the formal definition).
Using the 2-consistency of $\caI$ and the majority polymorphism of $\bfA$ we can assume that for each variable $x$ of $\caI$, the set of allowed values for $x$ forms a domain $A_x$ that is the universe of an algebra $\bbA_x$ that is a factor (i.e., a homomorphic image of a subalgebra) of  $\Alg(\bfA)$, the algebra of polymorphisms of $\bfA$. Also, we can assume that for each pair of variables $x$, $y$ of $\caI$  there is a unique binary constraint of $\caI$ with scope $(x,y)$ and constraint relation $R_{xy}$, with $R_{xy}$ the universe of some subalgebra of $\bbA_x \times \bbA_y$.  Furthermore  these are the only constraints  of $\caI$.

In order to test whether $f$ satisfies $\caI$, we use three types of reductions: a factoring reduction, a splitting reduction, and  an isomorphism reduction.
Each reduction produces an instance $\caI'$ and an assignment $f'$ such that $f'$ satisfies $\caI'$ if $f$ satisfies $\caI$, and $f'$ is $\Omega(\epsilon)$-far from satisfying $\caI'$ if $f$ is $\epsilon$-far from satisfying $\caI$.
For simplicity, we focus on how we create a new instance $\caI'$ here.

The objective of the factoring reduction is to factor, for each variable $x$ of $\caI$, the domain $A_x$ by any congruence $\theta$ of $\bbA_x$ (i.e., an equivalence relation on $A_x$ that is compatible with the operations of $\bbA_x$) for which none of the constraint relations of $\caI$ distinguish between $\theta$-related values of $A_x$.

%

After ensuring that all of the domains $A_x$ of $\caI$ cannot be factored, we then employ a  splitting reduction to ensure that for each variable $x$ of $\caI$ the algebra $\bbA_x$ is subdirectly irreducible, i.e., cannot be represented as a subdirect product of non-trivial algebras.
For any  variable $x$ for which $\bbA_x$ can be represented as a subdirect product of non-trivial algebras $\bbA_x^1$ and $\bbA_x^2$ we replace the variable $x$ by the new variables $x_1$ and $x_2$ and the domain $A_x$ by the domains $A_x^1$ and $A_x^2$.  For any other variable $y$ of $\caI$, we  ``split'' the constraint relation $R_{yx}$ (and its inverse $R_{xy}$) into two  relations $R_{yx_1}$ and $R_{yx_2}$ that are  together equivalent to the original one.  We then add these two new relations (and their inverses) to $\caI$, along with $A_x$, now regarded as a binary relation from the variable $x_1$ to $x_2$.

%

After performing the splitting reduction and the factoring reduction, we next define a binary relation~$\sim$ on the set of variables of $\caI$ such that $x \sim y$ if and only if the constraint relation $R_{xy}$ is the graph of an isomorphism from $\bbA_x$ to $\bbA_y$.  Using  2-consistency and the fact that the domains of $\caI$ are subdirectly irreducible and cannot be factored, it follows that, unless $\caI$ is trivial, the relation $\sim$ will be a non-trivial equivalence relation.  Within each $\sim$-class, the domains are isomorphic via the corresponding constraint relations of $\caI$, and this allows us to  produce an isomorphism-reduced instance $\caI'$ by restricting $\caI$ to a set of variables representing each of the $\sim$-classes.

After performing this isomorphism reduction, the resulting instance may have domains which can be further factored, allowing us to apply the factoring reduction to produce a smaller instance. We show that if we reach a point at which none of the three reductions can be applied, the instance must be trivial, either having just a single variable, or for which $|A_x| = 1$ for all variables $x$.  We also show that this point will be reached after applying the reductions at most $|A|$-times.

In Section~\ref{sec:arithmetic}, we will see how these reductions work on the template in Example~\ref{RingExample}.

%

\paragraph{$\CSP(\bfA)$ is constant-query testable $\Rightarrow$ $\bfA$ has majority and Maltsev polymorphisms.}
Now we look at~(2) of Theorem~\ref{the:intro-csp-main} and the hardness part of~(2) of Theorem~\ref{the:intro-ecsp-main}.
As $\eCSP$s are a generalization of $\CSP$s, it suffices to consider $\CSP$s.
We show that if $\bfA$ does not have these two types of polymorphisms, then we cannot test $\CSP(\bfA)$ with a constant number of queries.  We use that having these two types of polymorphisms is equivalent to $\bfA$ having a Maltsev polymorphism and that the variety of algebras generated by $\Alg(\bfA)$ is congruence meet semidistributive~\cite{Ho-McK}.  The paper~\cite{kkvw} provides a characterization of this condition in terms of the existence of two special polymorphisms of $\bfA$.
When the variety generated by $\Alg(\bfA)$ is not congruence meet semidistributive, then it can be easily shown from~\cite{Bhattacharyya:2013fa,Yoshida:2014zn} that testing $\CSP(\bfA)$ requires a linear number of queries.
When $\bfA$ does not have a Maltsev polymorphism, then we can reduce $\CSP(\bfA')$ to $\CSP(\bfA)$, where the structure $\bfA'$ has a binary non-rectangular relation.
Then, by replacing the $2$-SAT relations with this binary non-rectangular relation, we can reuse the argument for showing a super-constant lower bound for $2$-SAT in~\cite{Fischer:2002ev} to obtain a super-constant lower bound for $\CSP(\bfA)$.

\paragraph{$\bfA$ has a $(k+1)$-ary near-unanimity polymorphism,
for some $k \geq 2$ $\Rightarrow$ $\eCSP(\bfA)$ is sublinear-query testable.}

Now we consider the testability part of~(2) of Theorem~\ref{the:intro-ecsp-main}.
It is known that, if $\bfA$ has a $(k+1)$-ary near unanimity polymorphism, then $\CSP(\bfA)$ is sublinear-query testable with one-sided error in the unweighted case~\cite{Bhattacharyya:2013fa}.
We slightly modify their argument so that we can handle weights.

\paragraph{$\eCSP(\bfA)$ is sublinear-query testable with one-sided error $\Rightarrow$ $\bfA$ has a $(k+1)$-ary near unanimity polymorphism, for some $k \geq 2$.}
Finally, we consider~(3) of Theorem~\ref{the:intro-ecsp-main}.
We use that $\bfA$ has a $(k+1)$-ary near unanimity polymorphism for some $k \geq 2$ if and only if the variety of algebras generated by $\Alg(\bfA)$ is  congruence meet semidistributive and congruence modular~\cite{Ho-McK, Barto:2013uo}.
We already mentioned that if this variety is not congruence meet semidistributive then $\eCSP(\bfA)$ is not sublinear-query testable (even with two-sided error).
To complete the argument we show that if the variety is not congruence modular then,
by building on ideas developed in~\cite{Chen:2015wh},
we can reduce
the problem of testing assignments of a circuit in monotone $\mathrm{NC}^1$ to $\eCSP(\bfA)$.
Note that majority functions are in monotone $\mathrm{NC}^1$~\cite{Valiant:1984eu}, and we can easily show a linear lower bound for one-sided error testers that test assignments of majority functions.
Hence, we get a linear lower bound for $\eCSP(\bfA)$.

\subsection{Related work}

Assignment testing of CSPs was implicitly initiated by~\cite{Fischer:2002ev}. 
There, it was shown that 2-CSPs are testable with $O(\sqrt{n})$ queries and require $\Omega(\log n/\log \log n)$ queries for any fixed $\epsilon > 0$.
On the other hand, $3$-SAT~\cite{BenSasson:2005we}, $3$-LIN~\cite{BenSasson:2005we}, and Horn SAT~\cite{Bhattacharyya:2013fa} require $\Omega(n)$ queries to test.

The universal algebraic approach was first used in~\cite{Yoshida:2014zn} to study the assignment testing of the list $H$-homomorphism problem.
For graphs $G$, $H$, and list constraints $L_v \subseteq V(H)\;(v \in V(G))$,
we say that a mapping $f:V(G) \to V(H)$ is a \emph{list homomorphism} from $G$ to $H$ with respect to the list constraints $L_v\;(v \in V(G))$ if $f(v) \in L_v$ for any $v \in V(G)$ and $(f(u),f(v)) \in E(H)$ for any $(u,v) \in E(G)$.
Then, the corresponding assignment testing problem, parameterized by a graph $H$, is the following:
The input is a tuple $(G,\set{L_v}_{v \in V(H)},f,\epsilon)$,
where $G$ is a (weighted) graph, $L_v \subseteq V(H)\;(v \in V(G))$ are list constraints, $f:V(G)\to V(H)$ is a mapping given as a query access, and $\epsilon$ is an error parameter.
The goal is testing whether $f$ is a list $H$-homomorphism from $G$ or $\epsilon$-far from being so, where $\epsilon$-farness is defined analogously to testing assignments of CSPs.
It was shown in~\cite{Yoshida:2014zn} that the algebra (or the variety) associated with the list $H$-homomorphism characterizes the query complexity, and that list $H$-homomorphism is constant-query (resp., sublinear-query) testable  if and only if $H$ is a reflexive complete graph or an irreflexive complete bipartite graph (resp., a bi-arc graph).

Testing assignments of Boolean CSPs was studied in~\cite{Bhattacharyya:2013fa}, and in that paper relational structures were classified into three categories: (i) structures $\bfA$ for which $\CSP(\bfA)$ is constant-query testable, (ii) structures $\bfA$ for which $\CSP(\bfA)$ is not constant-query testable but sublinear-query testable, and (iii) structures $\bfA$ for which $\CSP(\bfA)$ is not sublinear-query testable.
They also relied on the fact that algebras (or varieties) can be used to characterize  query complexity.


\subsection{Open problems}

Theorem~\ref{the:intro-csp-main} characterizes relational structures $\bfA$ on general domains for which $\CSP(\bfA)$ is constant-query testable.
Obtaining a characterization for the sublinear-query testable case is a tantalizing open problem.
The main obstacle of this is that we obtained~(3) of Theorem~\ref{the:intro-ecsp-main} by reducing the problem of testing assignments of monotone circuits to $\eCSP$s.
If we do not allow existentially quantified variables, then the number of variables blows up polynomially in the reduction, and a linear lower bound for monotone circuits does not imply a linear lower bound for CSPs.

Theorem~\ref{the:intro-ecsp-main} provides a trichotomy for $\eCSP$s in terms of the number of queries needed to test with one-sided error.
Obtaining a similar trichotomy for two-sided error testers is also an interesting open problem.
Again the obstacle is that we reduce from the problem of testing assignments of monotone circuits.
It is not clear whether this problem is hard even for two-sided error testers.

\subsection{Organization}
Section~\ref{sec:pre} introduces the basic notions used throughout this paper.
We show the constant-query testability of $\eCSP$s with majority and Maltsev polymorphisms in Section~\ref{sec:arithmetic}.
Super-constant lower bounds of CSPs without majority or Maltsev polymorphisms is discussed in Section~\ref{sec:non-cp}.
We give a sublinear-query tester for $\eCSP$s having a $(k+1)$-ary near unanimity polymorphism, for some $k \ge 2$, in Section~\ref{sec:cd}.
In Section~\ref{sec:non-cd}, we show that, when there is no $(k+1)$-ary near unanimity polymorphism for any $k \ge 2$,  testing $\eCSP$s with one-sided error requires a linear number of queries.


\section{Preliminaries}\label{sec:pre}

For an integer $k$, let $[k]$ denote the set $\set{1,\ldots,k}$.


\paragraph{Constraint satisfaction problems}
For an integer $k \geq 1$, a $k$-ary relation on a domain $A$ is a subset of $A^k$.
A \emph{constraint language} on a domain $A$ is a finite set of relations on $A$.
A \textit{(finite) relational structure}, or simply a \emph{structure} $\bfA = (A; \Gamma)$ consists of a non-empty set $A$ and a constraint language $\Gamma$ on $A$.

For a structure $\bfA = (A; \Gamma)$, we define the problem $\CSP(\bfA)$ as follows.
An instance $\caI = (V,A,\caC,\bw)$ consists of a set of variables $V$, a set of constraints $\caC$, and a weight function $\bw$ with $\sum_{x \in V}\bw(x) = 1$.
Here, each constraint $C \in \caC$ is of the form $\langle (x_1,\ldots,x_k),R\rangle$, where $x_1,\ldots,x_k \in V$ are variables, $R$ is a relation in $\Gamma$ and $k$ is the arity of $R$.
An \emph{assignment} for $\caI$ is a mapping $f:V \to A$, and we say that $f$ is a \emph{satisfying assignment} if $f$ satisfies all the constraints, that is, $(f(x_1),\ldots,f(x_k)) \in R$ for every constraint $\langle (x_1,\ldots,x_k),R\rangle \in \caC$.

For a structure $\bfA = (A; \Gamma)$, we define the problem $\eCSP(\bfA)$ as follows.
An instance $\caI = (V,V^\exists, \caC,\bw)$ consists of a set of free variables $V$, a set of existentially quantified variables $V^\exists$, a set of constraints $\caC$, and a weight function $\bw$ with $\sum_{x \in V}\bw(x) = 1$.
Constraints are imposed on $V \cup V^\exists$.
An \emph{assignment} for $\caI$ is a mapping $f:V \to A$, and we say that $f$ is a \emph{satisfying assignment} if there exists an extension $f':V \cup V^\exists \to A$ of $f$ such that $(f'(x_1),\ldots,f'(x_k)) \in R$ for every constraint $\langle (x_1,\ldots,x_k),R\rangle \in \caC$.




\paragraph{Algebras and Varieties:}
Let $\bbA = (A; F)$ be an algebra.
A set $B \subseteq A$ is a \textit{subuniverse} of $\bbA$ if for every operation $f \in F$ restricted to $B$ has
image contained in $B$.
For a nonempty subuniverse $B$ of an algebra $\bbA$, $f|_{B}$ is the
restriction of $f$ to $B$.
The algebra $\bbB = (B,F|_{B})$, where $F|_{B} = \{f|_{B} \mid f \in F\}$
is a \textit{subalgebra} of $\bbA$.
Algebras $\bbA,\bbB$ are of the \textit{same type} if they have the same number of operations and corresponding operations have the same arities.
Given algebras $\bbA,\bbB$ of the same type, the \textit{product}
$\bbA\times \bbB$ is the algebra with the same type as $\bbA$ and $\bbB$
with universe $A \times B$ and operations computed
coordinate-wise.  A subalgebra $\bbC$ of $\bbA\times \bbB$ is a \emph{subdirect product} of $\bbA$ and $\bbB$ if the projections of $C$ to $A$ and $C$ to $B$ are both onto.
An equivalence relation $\theta$ on $A$ is called a \textit{congruence} of an algebra $\bbA$ if $\theta$ is a subalgebra of $\bbA \times \bbA$.  The collection of congruences of an algebra naturally forms a lattice under the inclusion ordering, and this lattice is called the \emph{congruence lattice} of the algebra.
Given a congruence $\theta$ on $A$, we can form the \textit{homomorphic image} $\bbA /_{\theta}$, whose elements are the equivalence classes
of $\bbA$ and the operations are defined so that the natural
 mapping from $\bbA$ to $\bbA/_{\theta}$
 is a homomorphism.  An operation $f(x_1, \dots x_n)$ on a set $A$ is \emph{idempotent} if $f(a, a, \dots, a) = a$ for all $a \in A$, an algebra $\bbA$ is \emph{idempotent} if each of its operations is, and a class of algebras is idempotent if each of its members is.  We note that if $\bbA$ is \emph{idempotent}, then for any congruence $\theta$ of $\bbA$, the $\theta$-classes are all subuniverses of $\bbA$.


A \textit{variety} is a class of algebras of the same type closed under the formation of homomorphic images, subalgebras,
and  products.
For any algebra $\bbA$,
there is a smallest variety containing $\bbA$, denoted by $\caV(\bbA)$ and called the \textit{variety generated} by $\bbA$.
It is well known that any variety is generated by an algebra and that any member of $\caV(\bbA)$ is a homomorphic image of a subalgebra of a power of $\bbA$.

Many important properties of the algebras in a variety can be correlated with properties of the congruence lattices of it member algebras.  In this work we consider several congruence lattice conditions for varieties, including congruence modularity, congruence distributivity, congruence meet semidistributivity, and congruence permutability.  Details of these conditions can be found in~\cite{Ho-McK} and  more details on the basics of algebras and varieties can be found in \cite{Bu-Sa}.

\subsection{Assignment problems}

An \emph{assignment problem} consists of a set of \emph{instances},
where each instance $\caI$ has associated with it
a set of variables $V$, a domain $A_v$ for each variable $v \in V$,
and a weight function
$\bw: V \to [0,1]$ with $\sum_{v \in V} \bw(v) = 1$.
An assignment of $\caI$
is a mapping $f$ defined on $V$
with $f(x) \in A_x$ for each variable $x \in V$.
Each instance $\caI$ of an assignment problem
has associated with it a notion of a \emph{satisfying assignment}.
For two assignments $f$ and $g$ for $\caI$, we define their distance as $\dist(f,g) := \sum_{x \in V:f(x) \neq g(x)}\bw(x)$.
We define $\dist_\caI(f) = \min_g\dist(f,g)$, where $g$ is over all satisfying assignments of $\caI$.
Then, for $\epsilon \in [0,1]$, we say that an assignment $f$ for $\caI$ is \emph{$\epsilon$-far from satisfying $\caI$} if $\dist_\caI(f) > \epsilon$.
In the \emph{assignment testing problem} corresponding to an assignment problem, we are given an instance $\caI$ 
of the assignment problem and a query access to an assignment $f$ for $\caI$, that is, we can obtain the value of $f(x)$ by querying $x \in V$.
Then, we say that an algorithm is a \emph{tester} for the assignment problem if it accepts with probability at least $2/3$ when $f$ is a satisfying assignment of $\caI$, and rejects with probability at least $2/3$ when $f$ is $\epsilon$-far from satisfying $\caI$.
The \emph{query complexity} of a tester is the number of queries to $f$.

We can naturally view $\CSP(\mathbf{A})$ and $\eCSP(\mathbf{A})$ as assignment problems:
for each instance on a set of (free) variables $V$,
the associated assignments are the mappings from $V$ to $A$,
and the notion of satisfying assignments is as described above.
Note that an input to the assignment testing problem corresponding to $\CSP(\bfA)$ or to $\eCSP(\bfA)$ is a tuple $(\caI,\epsilon,f)$, where $\caI$ is an instance of $\CSP(\bfA)$ or $\eCSP(\bfA)$, respectively, $\epsilon$ is an error parameter, and $f$ is an assignment to $\caI$.
In order to distinguish $\caI$ from the tuple $(\caI,\epsilon,f)$, we always call the former \emph{instance} and the latter \emph{input}.

%



\subsubsection{Gap-preserving local reductions}

We will frequently use the following reduction when constructing algorithms as well as showing lower bounds.
\begin{definition}[Gap-preserving local reduction]\label{def:gap-preserving-local-reduction}
  Given assignment problems $\caP$ and $\caP'$,
  there is
  a {\em (randomized) gap-preserving local reduction from $\caP$ to $\caP'$}
  if
  there exist a function $t(n)$ and constants $c_1,c_2$ satisfying the following:
  given a $\caP$-instance
  $\caI$ of with variable set $V$ and an assignment $f$ for $\caI$,
there exist a
$\caP'$-instance $\caI'$ with variable set $V'$
and an assignment $f'$ for $\caI'$
such that the following hold:
  \begin{enumerate}
  \item $|V'| \leq t(|V|)$.
  \item If $f$ is a satisfying assignment of $\caI$,
      then $f'$ is a satisfying assignment of $\caI'$.
  \item For any $\epsilon \in (0,1)$, if $\dist_{\caI}(f) \geq \epsilon$,
  then $\Pr[\dist_{\caI'}(f') \geq c_1\epsilon] \geq 9/10$ holds, where the probability is over internal randomness.
  \item Any query to $f'$ can be answered by making at most $c_2$ queries to $f$.
  \end{enumerate}
\end{definition}

A \emph{linear reduction} is defined to be a gap-preserving
local reduction for which the function $t(n) = O(n)$, $c_1 = O(1)$, and $c_2 = O(1)$.

\begin{lemma}[\cite{Yoshida:2014zn}]\label{lmm:gap-preserving-local-reduction}
  Let $\caP$ and $\caP'$ be assignment problems.
  Suppose that there exists an $\epsilon$-tester for $\caP'$
  with query complexity $q(n,\epsilon)$ for any $\epsilon \in (0,1)$,
  where $n$ is the number of variables in the given instance of $\caP'$, and that there exists a gap-preserving local reduction from $\caP$ to $\caP'$ with a function $t$ and $c_1=c_2=O(1)$.
  Then, there exists an $\epsilon$-tester for $\caP$ with query complexity $O(q(t(n), O(\epsilon)))$ for any $\epsilon > 0$, where $n$ is the number of variables in the given instance of $\caP$.
  In particular, linear reductions preserve constant-query and sublinear-query testability.
\end{lemma}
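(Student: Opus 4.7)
The plan is to use the given reduction to turn any input $(\caI,\epsilon,f)$ for testing $\caP$ into an (implicit) input $(\caI',\epsilon',f')$ for testing $\caP'$, simulate the tester for $\caP'$ on it with error parameter $\epsilon' := c_1\epsilon$, and translate each of its queries to $f'$ into at most $c_2$ queries to $f$ using condition~4 of Definition~\ref{def:gap-preserving-local-reduction}. Since $\caI'$ and $f'$ are determined by $\caI$, $f$, and independent random bits of the reduction, no queries to $f$ are needed before the simulation begins, so the total number of queries is at most $c_2\cdot q(|V'|, c_1\epsilon) \leq c_2\cdot q(t(|V|), c_1\epsilon) = O(q(t(n), O(\epsilon)))$, as required.

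I would verify correctness by treating completeness and soundness separately. Completeness is immediate from condition~2: whenever $f$ satisfies $\caI$, the assignment $f'$ satisfies $\caI'$ (regardless of the reduction's randomness), so the $\caP'$-tester accepts with probability at least $2/3$. For soundness, suppose $f$ is $\epsilon$-far from satisfying $\caI$. By condition~3, with probability at least $9/10$ over the reduction's randomness we have $\dist_{\caI'}(f')\geq c_1\epsilon$; conditioned on this event, the $\caP'$-tester invoked with parameter $\epsilon'=c_1\epsilon$ rejects with probability at least $2/3$. Hence a single simulation rejects with probability at least $(9/10)(2/3)=3/5$.

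To raise this rejection probability to the required $2/3$, I would independently repeat the entire procedure, \emph{including} drawing fresh reduction randomness, a constant number $k$ of times and output the majority vote of the simulated testers' answers. Since each trial has two-sided advantage bounded away from $1/2$ (completeness gives acceptance probability $\geq 2/3$, soundness gives rejection probability $\geq 3/5$), a standard Chernoff bound makes both error probabilities drop below $1/3$ at constant $k$. The total query complexity increases only by the factor $k$, so it remains $O(q(t(n), O(\epsilon)))$. Specializing to the case $t(n)=O(n)$, $c_1,c_2=O(1)$ then immediately yields that linear reductions preserve constant-query and sublinear-query testability.

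I do not anticipate a genuine obstacle; the only care required is the probability bookkeeping, in particular using independent majority amplification to combine the $9/10$ confidence of the reduction with the $2/3$ confidence of the $\caP'$-tester rather than naively multiplying them and losing the threshold.
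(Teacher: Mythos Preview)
Your argument is correct and is the standard one for this type of reduction lemma. The paper itself does not give a proof of this statement; it simply cites it from~\cite{Yoshida:2014zn}, so there is nothing to compare against beyond noting that your reasoning (simulate the $\caP'$-tester via the reduction, answer its queries using condition~4, and amplify by constant independent repetition to reconcile the $9/10$ from condition~3 with the $2/3$ guarantee of the tester) is exactly what one expects and would find in the cited reference.
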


As another application of gap-preserving local reductions, the following fact is known.
\begin{lemma}[Lemma~6.4 and~6.5 of~\cite{Yoshida:2014zn}]\label{lem:reduction-for-csps}
  Let $\bfA, \bfA'$ be relational structures.
  If the relations of $\bfA$ are preserved by the operations of some finite algebra in $\caV(\Alg(\bfA'))$,
  then $\CSP(\bfA)$ is constant-query testable if $\CSP(\bfA')$ is constant-query testable.
\end{lemma}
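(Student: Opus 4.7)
The plan is to exhibit a gap-preserving local reduction from $\CSP(\bfA)$ to $\CSP(\bfA')$; the lemma then follows from Lemma~\ref{lmm:gap-preserving-local-reduction}. By hypothesis, there is a finite algebra $\bbB \in \caV(\Alg(\bfA'))$ whose operations are polymorphisms of $\bfA$; in particular, $\bbB$ has universe $A$. Since every member of a variety is a homomorphic image of a subalgebra of a power of a generator, one may fix an $n \geq 1$, a subalgebra $\bbS \leq \Alg(\bfA')^n$, and a congruence $\theta$ of $\bbS$ with $\bbS/\theta \cong \bbB$. Fix also a section $a \mapsto s_a$ choosing, for each $a \in A$, a representative $s_a \in \bbS \subseteq (A')^n$ in the $\theta$-class corresponding to $a$.

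Given an instance $\caI = (V, \caC, \bw)$ of $\CSP(\bfA)$ with assignment $f \colon V \to A$, I construct an instance $\caI'$ of $\CSP(\bfA')$ as follows. For each $x \in V$, introduce $n$ coordinate variables $x_1, \ldots, x_n$ over $A'$, each of weight $\bw(x)/n$, and set $f'(x_i) = (s_{f(x)})_i$. For each constraint $\langle (x^1,\ldots,x^k), R \rangle \in \caC$, consider the lifted relation $R' = \{(t^1,\ldots,t^k) \in \bbS^k \colon (t^1/\theta,\ldots,t^k/\theta) \in R\}$, a $kn$-ary relation on $A'$. Since $\bbS$ is a subuniverse of $\Alg(\bfA')^n$, $\theta$ is a congruence, and $R$ is preserved by the induced operations of $\bbB$ on $\bbS/\theta$, the relation $R'$ is preserved by every polymorphism of $\bfA'$, hence is pp-definable over $\bfA'$ by the standard inv-pol correspondence. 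Replace the original constraint by the $\bfA'$-constraints witnessing this pp-definition, adding a bounded number of fresh auxiliary variables per original constraint, of weight $0$; define $f'$ on these auxiliaries by choosing witnesses whenever $f$ satisfies the corresponding $\bfA$-constraint and by an arbitrary default otherwise, so that any query to $f'$ reduces to one query to $f$.

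The verification of Definition~\ref{def:gap-preserving-local-reduction} is then as follows. The variable blow-up is polynomial in $|V| + |\caC|$, which suffices to transport constant-query testability via Lemma~\ref{lmm:gap-preserving-local-reduction}. Preservation of satisfying assignments is immediate. Each query to $f'$ costs one query to $f$. The main obstacle is the gap preservation condition~(3): given any satisfying assignment $g'$ of $\caI'$, define $g(x) = (g'(x_1),\ldots,g'(x_n))/\theta$; the pp-constraints guarantee that $(g'(x_1),\ldots,g'(x_n)) \in \bbS$ for every $x$ appearing in a constraint, so $g$ is a well-defined satisfying assignment of $\caI$. Whenever $g(x) \neq f(x)$, the tuples $(g'(x_1),\ldots,g'(x_n))$ and $s_{f(x)}$ lie in distinct $\theta$-classes and therefore differ in at least one coordinate, contributing at least $\bw(x)/n$ to $\dist(f', g')$. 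Summing over $x$ yields $\dist(f', g') \geq \tfrac{1}{n}\dist(f, g)$, and minimizing over $g'$ gives $\dist_{\caI'}(f') \geq \tfrac{1}{n}\dist_{\caI}(f)$. Hence $c_1 = 1/n$ is an admissible gap-preservation constant, and Lemma~\ref{lmm:gap-preserving-local-reduction} finishes the proof.
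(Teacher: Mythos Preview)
Your argument is essentially the standard one and matches the approach of the cited proof (the paper itself does not reproduce a proof, merely cites Lemmas~6.4 and~6.5 of~\cite{Yoshida:2014zn} and remarks that the variable blow-up comes from ``introducing new variables for each constraint,'' which is exactly your pp-definition auxiliaries). The structure---representing $\bbB$ as $\bbS/\theta$ with $\bbS \le \Alg(\bfA')^n$, replacing each $x$ by its $n$ coordinates, lifting each relation $R$ to $R' \subseteq S^k$ and then pp-defining $R'$ over $\bfA'$---is correct, and your gap computation $\dist(f',g') \ge \tfrac{1}{n}\dist(f,g)$ is the right one.

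Two small points to clean up. First, your claim that ``any query to $f'$ reduces to one query to $f$'' is not literally true for the auxiliary variables introduced by the pp-definitions: to evaluate a witness for a constraint on $(x^1,\dots,x^k)$ you need $f(x^1),\dots,f(x^k)$, i.e.\ up to $k$ queries, where $k$ is the maximum arity in $\bfA$. This is still a constant, so $c_2$ exists, but the sentence as written is inaccurate. Second, the bound $|V'| \le t(|V|)$ in Definition~\ref{def:gap-preserving-local-reduction} is stated in terms of $|V|$, not $|V|+|\caC|$; you should note that without loss of generality $|\caC| \le |\Gamma|\cdot |V|^r$ (remove duplicate constraints), so $t$ can indeed be taken as a function of $|V|$ alone.
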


In the proof of Lemma~\ref{lem:reduction-for-csps}, the only obstacle that prevents linear reductions is that the number of variables blows up by introducing new variables for each constraint.
However, we can get rid of this obstacle by replacing them with existentially quantified variables and we get the following.
\begin{lemma}\label{lem:reduction-for-ecsps}
  Let $\bfA, \bfA'$ be relational structures.
  If the relations of $\bfA$ are preserved by the operations of some finite algebra in $\caV(\Alg(\bfA'))$,
  then there exists a linear reduction from $\eCSP(\bfA)$ to $\eCSP(\bfA')$.
  In particular, $\eCSP(\bfA)$ is constant-query (resp., sublinear-query) testable if $\eCSP(\bfA')$ is constant-query (resp., sublinear-query) testable.
\end{lemma}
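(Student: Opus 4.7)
\medskip

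\noindent\textbf{Proof proposal.} The plan is to mimic the proof of Lemma~\ref{lem:reduction-for-csps}, the one place it loses linearity being the introduction of fresh \emph{free} variables used as witnesses in a pp-definition; in the $\eCSP$ setting these witness variables can simply be declared existential, so the free-variable count only grows by a constant factor. Concretely, by hypothesis there is a finite algebra $\bbB \in \caV(\Alg(\bfA'))$ whose operations preserve every relation of $\bfA$. Since $\bbB$ is a homomorphic image of a subalgebra of a finite power of $\Alg(\bfA')$, we can write $\bbB = \bbC/\theta$ for some subalgebra $\bbC$ of $\Alg(\bfA')^{n}$ and some congruence $\theta$ of $\bbC$. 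Fix a choice of representatives $\rho : B \to C$ and let $\pi : C \to B$ be the quotient map. Each $k$-ary relation $R$ of $\bfA$ lifts to the relation $\widehat R = \{(c_1,\ldots,c_k) \in C^k : (\pi(c_1),\ldots,\pi(c_k)) \in R\}$, which is preserved by all operations of $\Alg(\bfA')^n$; by the Inv--Pol correspondence, $\widehat R$ is therefore pp-definable over $\bfA'$, say by a formula $\phi_R(u_1,\ldots,u_k;z_1,\ldots,z_m)$ in which each $u_i$ is an $n$-tuple of variables.

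\medskip

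Given an instance $\caI=(V,V^\exists,\caC,\bw)$ of $\eCSP(\bfA)$, build $\caI'=(V',V'^\exists,\caC',\bw')$ as follows. Replace each free variable $x\in V$ by $n$ free variables $x^{(1)},\ldots,x^{(n)}\in V'$ of weight $\bw(x)/n$, and each $y\in V^\exists$ by $n$ existentially quantified variables $y^{(1)},\ldots,y^{(n)}\in V'^\exists$. For each constraint $\langle(x_1,\ldots,x_k),R\rangle$ of $\caI$, add to $\caC'$ the constraints obtained by instantiating $\phi_R$ with the $n$-tuples associated to $x_1,\ldots,x_k$, together with fresh existentially quantified witnesses for $z_1,\ldots,z_m$. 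Finally, for each $x\in V\cup V^\exists$, add pp-constraints (again with existential witnesses) asserting that $(x^{(1)},\ldots,x^{(n)})\in C$; this is possible because $C$ is a subalgebra of $\Alg(\bfA')^n$ and hence pp-definable over $\bfA'$. Given $f:V\to A$, define $f':V'\to A'$ by $(f'(x^{(1)}),\ldots,f'(x^{(n)}))=\rho(f(x))$; each query $f'(x^{(i)})$ is answered by one query to $f(x)$ and reading off the $i$-th coordinate of $\rho(f(x))$, so $c_2=1$.

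\medskip

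The four conditions of Definition~\ref{def:gap-preserving-local-reduction} are then checked as follows. We have $|V'|=n\cdot|V|$, so $t(N)=nN$. If $f$ satisfies $\caI$, then the pp-witnesses exhibited by the definitions of $\phi_R$ and of membership in $C$, combined with $\rho$ applied to the witness values of $V^\exists$, show that $f'$ satisfies $\caI'$. Conversely, any satisfying assignment $g'$ of $\caI'$ projects under $\pi$ (coordinatewise on the $n$-tuples) to a well-defined satisfying assignment $g$ of $\caI$, because the added $C$-membership constraints ensure the projection makes sense and the constraints from $\phi_R$ force $(g(x_1),\ldots,g(x_k))\in R$. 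For the gap: if $\dist(f',g')<\epsilon'$ then for every free $x\in V$ with $f(x)\ne g(x)$ at least one of the $n$ coordinates of $x$ differs between $f'$ and $g'$, contributing at least $\bw(x)/n$ to $\dist(f',g')$; summing gives $\dist_\caI(f)\le n\cdot\dist_{\caI'}(f')$, so $c_1=1/n$ works (deterministically, no randomness needed). The ``in particular'' clause follows from Lemma~\ref{lmm:gap-preserving-local-reduction}.

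\medskip

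The main obstacle is entirely conceptual rather than technical: one must recognize that the constructions producing auxiliary variables in the CSP reduction (both the pp-witnesses for $\phi_R$ and those for $C$-membership) have no analogue in the original instance and are only used to certify existential claims, so declaring them existentially quantified in the $\eCSP$ instance is sound and preserves both the weight distribution on the genuine variables and the one-to-one correspondence of query access. Once this is seen, the verification is a straightforward adaptation of the argument in~\cite{Yoshida:2014zn}.
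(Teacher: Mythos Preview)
Your proposal is correct and takes essentially the same approach as the paper: the paper does not give a separate proof of this lemma, but simply observes (in the paragraph preceding the statement) that the only obstacle to linearity in the proof of Lemma~\ref{lem:reduction-for-csps} is the blow-up from auxiliary variables introduced per constraint, and that this is eliminated by declaring those variables existentially quantified. Your write-up is a faithful and somewhat more detailed elaboration of exactly that remark.
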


\subsubsection{$(k+1)$-ary near unanimity polymorphisms}

Let $\caI = (V,V^\exists,\caC,\bw)$ be an instance of $\eCSP(\bfA)$.
A \emph{partial solution} of $\caI$ on a set of variables $W \subseteq V \cup V^\exists$ is a mapping $\psi:W \to A$ that satisfies every constraint $\langle W \cap s, \pr_{W \cap s} R\rangle$ where $\langle s,R\rangle \in \caC$ and $\pr_{W \cap S}R$ is the projection of $R$ to $W \cap S$.
Here $W \cap s$ denotes the subtuple of $s$ consisting of those entries of $s$ that belong to $W$, and we consider the coordinate positions of $R$ indexed by variables from $s$.
Instance $\caI$ is said to be \emph{$k$-consistent} if for any $k$-element set $W \subseteq V \cup V^\exists$ and any $v \in (V \cup V^\exists) \setminus W$ any partial solution on $W$ can be extended to a partial solution on $W \cup \set{v}$.
It is well known that, for any constant $k$, any $\CSP$ instance can be transformed to a $k$-consistent instance in polynomial time without changing the set of satisfying assignments.
See~\cite{Jeavons:1998gb} for more details.

Let $\bfA = (A;\Gamma)$ be a relational structure with a $(k+1)$-ary near unanimity polymorphism,
and let $\caI = (V,V^{\exists},\caC,\bw)$ be a $k$-consistent instance of $\eCSP(\bfA)$.
Because of the existence of a $(k+1)$-ary near unanimity polymorphism, we can assume that every constraint is $k$-ary.
Hence, we can write $\caI = (V,V^{\exists},\set{R_S}_{(V \cup V^\exists)^k},\bw)$.
Further, we can say that, for any set $S \subseteq V \cup V^\exists$ of size $k$,
any partial assignment $f:S \to A$ with $f|_S \in R_S$ can be extended to a satisfying assignment for the whole instance~\cite{Feder:1998iu}.
This property is called the \emph{$k$-Helly property}.
We call a subset of variables $S \subseteq V$ of size $k$ \emph{violated} with respect to an assignment $f:V \to A$ if $f|_S \not \in R_S$.

As an application of gap-preserving local reductions, we observe that if a relational structure $\bfA$ has a $(k+1)$-ary near unanimity  polymorphism for some $k \ge 2$, then testing $\eCSP(\bfA)$ can be reduced to $\CSP(\bfA)$.
\begin{lemma}\label{lem:reduction-from-ecsp-to-csp}
  Let $\bfA$ be a relational structure with a $(k+1)$-ary near unanimity polymorphism for some $k \geq 2$.
  Then, there is a linear reduction from $\eCSP(\bfA)$ to $\CSP(\bfA)$.
\end{lemma}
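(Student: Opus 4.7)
The plan is to combine the $k$-consistency algorithm with the $k$-Helly decomposition afforded by the $(k+1)$-ary near-unanimity polymorphism to eliminate the existentially quantified variables. Given an $\eCSP(\bfA)$-instance $\caI = (V,V^\exists,\caC,\bw)$ with assignment $f\colon V \to A$, I would first run the polynomial-time $k$-consistency algorithm to obtain an equivalent instance (same $\mathrm{Sol}(\caI)$) in which, by the near-unanimity polymorphism, all constraints are $k$-ary, written as $\{R_S\}_S$ with $S$ ranging over the $k$-subsets of $V\cup V^\exists$.

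Next, I would use that the projection $\pi_V(\mathrm{Sol}(\caI))$ is pp-definable over $\bfA$ and hence preserved by the $(k+1)$-ary near-unanimity polymorphism. The Baker--Pixley decomposition (equivalently, the $k$-Helly property for relations preserved by an NU operation) yields
\[
\pi_V(\mathrm{Sol}(\caI)) \;=\; \bigcap_{T\subseteq V,\ |T|=k}\pi_T^{-1}(R_T^{*}),\qquad R_T^{*}:=\pi_T(\mathrm{Sol}(\caI)),
\]
where $R_T^{*}=R_T$ by $k$-consistency. Thus $f$ extends to a satisfying assignment of $\caI$ iff $f|_T\in R_T$ for every $k$-subset $T\subseteq V$, a criterion that no longer mentions $V^\exists$. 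Letting $\bfA^+$ denote the expansion of $\bfA$ by all $k$-ary pp-definable relations and setting $\caJ = (V,\{\langle T,R_T\rangle\}_T,\bw)\in\CSP(\bfA^+)$, this gives an exact, variable-preserving (hence linear) reduction $\eCSP(\bfA)\to\CSP(\bfA^+)$. Since the relations of $\bfA^+$ are preserved by $\Alg(\bfA)\in\caV(\Alg(\bfA))$, Lemma~\ref{lem:reduction-for-csps} then supplies a reduction $\CSP(\bfA^+)\to\CSP(\bfA)$, and composing yields the desired $\eCSP(\bfA)\to\CSP(\bfA)$.

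The main obstacle is ensuring that the composed reduction is truly linear in $|V|$. The generic pp-unfolding inside Lemma~\ref{lem:reduction-for-csps} introduces $O(1)$ fresh variables per constraint, and applied to the up to $\binom{|V|}{k}$ constraints of $\caJ$ this would yield a super-linear $O(|V|^k)$ blow-up. I expect the resolution to exploit that $\bfA^+$ contributes only finitely many new relations (since $|A|$ and $k$ are constants) and to share the witnesses for the existential variables across constraints that instantiate the same pp-definition; alternatively, one can bypass the detour through $\bfA^+$ entirely by treating the existentials of $\caI$ directly as weight-$0$ free variables, after a preliminary step that prunes the existentials rendered redundant by $k$-consistency so that only $O(|V|)$ remain. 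This final linear accounting is the delicate part of the argument.
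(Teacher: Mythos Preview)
Your core argument is exactly the paper's: enforce $k$-consistency, then keep only the constraints $R_S$ with $S\subseteq V^k$; the $k$-Helly property guarantees that an assignment on $V$ satisfies this restricted instance iff it extends to a satisfying assignment of the original. The paper's proof is literally these three sentences, and records $t(n)=n$, $c_1=1$, $c_2=1$.

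Where you diverge is in the detour through $\bfA^+$ and Lemma~\ref{lem:reduction-for-csps}. You are right that the relations $R_S$ produced by $k$-consistency are only pp-definable over $\bfA$, not necessarily in $\Gamma$, so strictly speaking the reduced instance lives in $\CSP(\bfA^+)$. The paper simply does not make this distinction; it writes the reduced instance as an instance of $\CSP(\bfA)$ and moves on. Your attempt to close this gap via Lemma~\ref{lem:reduction-for-csps} is what creates the linearity problem you flag, and that lemma indeed does not yield a linear reduction (the paper itself says so in the paragraph following it). Neither of your proposed workarounds resolves it: there is no a~priori bound on $|V^\exists|$ in terms of $|V|$, so promoting existentials to weight-$0$ free variables need not keep $t(n)=O(n)$; and ``pruning existentials to $O(|V|)$ after $k$-consistency'' is not a known fact.

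The pragmatic resolution is to observe that the lemma is only ever invoked to transfer constant- or sublinear-query testability from $\CSP(\bfA)$ to $\eCSP(\bfA)$, and for those applications landing in $\CSP(\bfA^+)$ is just as good: $\bfA^+$ has the same $(k+1)$-ary near-unanimity polymorphism, so the testers of Sections~\ref{sec:arithmetic} and~\ref{sec:cd} apply to it directly. In other words, your ``delicate part'' is an artefact of insisting on the literal target $\CSP(\bfA)$; drop the detour and state the reduction as landing in $\CSP(\bfA^+)$, and your proof is complete and identical to the paper's.
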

\begin{proof}
  Let $\caI = (V,V^\exists,\set{R_S}_{S \in (V \cup V^\exists)^k},\bw)$ be a $k$-consistent instance of $\eCSP(\bfA)$.
  Then, we consider the instance $\caI' = (V,\set{R_S}_{S \subseteq V^k},\bw)$ and the assignment $f' = f$.

  If $f$ satisfies $\caI$, then $f'$ also satisfies $\caI'$ because the constraints of $\caI'$ are also constraints of $\caI$.

  Suppose that $f'$ is $\epsilon$-close to satisfying $\caI'$ and let $g'$ be a satisfying assignment of $\caI'$ with $\dist_{\caI'}(f',g') \leq \epsilon$.
  Then, we define $g = g'$.
  Note that $g$ satisfies $\caI$ because there is no violated constraint caused by $g$, and from the $k$-Helly property, we can always extend it to a satisfying assignment for the whole instance.
  Hence, $f$ is $\epsilon$-close to $g$.

  To summarize, this reduction is a gap-preserving local reduction with $t(n) = n$, $c_1 = 1$, and $c_2 = 1$.
\end{proof}


\section{Constant-Query Testability}\label{sec:arithmetic}

In this section, assume that $\bfA = (A;\Gamma)$ is a structure that has a majority polymorphism $m(x,y,z)$ and a Maltsev polymorphism $p(x,y,z)$. It is known,~\cite{Bu-Sa}, that this is equivalent to the variety $\caA$ generated by the algebra $\Alg(\bfA)$ being \emph{congruence distributive} and \emph{congruence permutable} and also to $\bfA$ having a $(k+1)$-ary near unanimity polymorphism for some $k \ge 2$ and a Maltsev polymorphism.  This means that for each algebra $\bbB \in \caA$, the lattice of congruences of $\bbB$ satisfies the distributive law and that for each pair of congruences $\alpha$ and $\beta$ of $\bbB$, the relations $\alpha \circ \beta$ and $\beta \circ \alpha$ are equal.  Such varieties are also said to be \emph{arithmetic}.

An important feature of $\caA$ (and in fact of any congruence distributive variety generated by a finite algebra) is that every subdirectly irreducible member of $\caA$ has size bounded by $|A|$~(\cite{Bu-Sa}). 
We will make use of the fact that an algebra is \emph{subdirectly irreducible} if and only if the intersection of all of its non-trivial congruences is non-trivial.  This is equivalent to the algebra having a smallest non-trivial congruence.
In this section, we will show that $\eCSP(\bfA)$ is constant-query testable.  Some of the ideas found in this section were inspired by the paper~\cite{Bulatov-Marx}.

We first note that, since $\bfA$ has a majority operation, that is, a $3$-ary near unanimity operation, as a polymorphism, it suffices to consider $\CSP(\bfA)$ by Lemma~\ref{lem:reduction-from-ecsp-to-csp}.

For our analysis, it is useful to introduce $\CSP(\caV)$ for a variety $\caV$.
An instance of $\CSP(\caV)$ is of the form $(V, \set{A_x}_{x \in V}, \caC,\bw)$.
Each $A_x$ is the domain of an algebra, denoted by $\bbA_x$, in $\caV$, and each constraint in $\caC$ is of the form $\langle (x_1,\ldots,x_k), R\rangle$, where $R$ is the domain of a subalgebra $\bbR$ of $\bbA_{x_1} \times \cdots \times \bbA_{x_k}$.
In particular, $R$ is also the domain of an algebra in $\caV$.
The definitions of $2$-consistency and an assignment testing problem naturally carry over to instances of $\CSP(\caV)$.


Let $\caI = (V, \set{A_x}_{x \in V}, \caC,\bw)$ be an instance of $\CSP(\caA)$.
Since $\caA$ is arithmetic, we can assume that each constraint in $\caC$ is binary~\cite{Baker:1975wn}.
Hence, we also write
\[
\caI = (V,\set{A_x}_{x \in V}, \set{R_{xy}}_{(x,y) \in V^2}, \bw)
\]
 or simply $\caI = (V, \set{A_x}, \set{R_{xy}}, \bw)$.
Moreover, we can assume that $\caI$ is $2$-consistent because the set of satisfying assignments does not change after making $\caI$ $2$-consistent.
For $x \in V$, $R_{xx}$ is the equality relation $0_{A_x}$ on the set $A_x$, and for distinct variables $x \neq y \in V$, $R_{xy}$ denotes the (unique) binary constraint relation from $A_x$ to $A_y$.
We always have $R_{yx} = R_{xy}^{-1} = \set{(b,a) \mid (a,b) \in R_{xy}}$ for any $x,y \in V$.
We note that by 2-consistency, it follows that for distinct variables $x$ and $y$, the relation $R_{x,y}$ is subdirect in $A_x \times A_y$.
Throughout the remainder of this section, we will assume that any instance of $\CSP(\caA)$ considered will be 2-consistent and has only binary constraints.

Since $\caA$ is assumed to be congruence permutable ( then for any $x \ne y \in V$, the binary relation $R_{xy}$ is \emph{rectangular}, that is, $(a,c),(a,d),(b,d) \in R_{xy}$ implies $(b,c) \in R_{xy}$.
As noted in Lemma 2.10 of~\cite{Bulatov-Marx}, this is equivalent to $R_{xy}$ being a \emph{thick mapping}.  This means that there are congruences $\theta_{xy}$ of $\bbA_x$ and $\theta_{yx}$ of $\bbA_y$ such that modulo the congruence $\theta_{xy}\times \theta_{yx}$ on $\bbR_{xy}$, the relation $R_{xy}$ is the graph of an isomorphism $\phi_{xy}$ from $\bbA_x/\theta_{xy}$ to $\bbA_y/\theta_{yx}$ and such that for all $a \in A_x$ and $b \in A_y$, $(a,b) \in R_{xy}$ if and only if $\phi_{xy}(a/\theta_{xy}) = b/\theta_{yx}$.  In this situation, we say that $R_{xy}$ is a thick mapping with respect to $\theta_{xy}$, $\theta_{yx}$ and $\phi_{xy}$.  For future reference, we note that if for some variables $x \ne y$, the congruence $\theta_{xy} = 0_{A_x}$ then the relation $R_{yx}$ is the graph of a surjective homomorphism from $\bbA_{y}$ to $\bbA_x$.

\subsection{A factoring reduction}

Let $\caI = (V,\set{A_x}, \set{R_{xy}}, \bw)$ be an instance of $\CSP(\caA)$ and for each $x \in V$
let $\mu_x = \bigwedge_{y\ne x} \theta_{xy}$, a congruence of $\bbA_x$.
We say that $A_x$ is \emph{prime} if $\mu_x$ is  the equality congruence $0_{A_x}$ and \emph{factorable} otherwise.
Roughly speaking, if $A_x$ is not prime, then we can factor $A_x$ by $\mu_x$ without changing the problem, because no constraint of $\caI$ distinguishes values within any $\mu_x$-class.
Formally, we define the factoring reduction as in Algorithm~\ref{alg:factoring-reduction}.
\begin{algorithm}
  \caption{}\label{alg:factoring-reduction}
  \begin{algorithmic}[1]
  \Procedure{Factor}{$\caI = (V, \set{A_x}, \set{R_{xy}}, \bw), \epsilon, f$}
    \For{$x \in V$}
    \State{$A_x \leftarrow A_x / \mu_x$.}
    \State{$f(x) \leftarrow f(x)/ \mu_x$.}
    \EndFor
    \For{$(x,y) \in V \times V$}
    \State{$R_{xy} \leftarrow \set{(a / \mu_x,b/\mu_y) \mid (a,b) \in R_{xy}}$.}
    \EndFor
    \State{\Return $(\caI,\epsilon,f)$.}
  \EndProcedure
  \end{algorithmic}
\end{algorithm}

Let $(\caI,\epsilon,f)$ be an input of $\CSP(\caA)$ and let $(\caI',\epsilon',f') = \Call{Factor}{\caI,\epsilon,f}$.
It is clear that since the instance $\caI$ of $\CSP(\caA)$ is assumed to be 2-consistent then the instance $\caI'$ will also be 2-consistent.  Furthermore, the sizes of the domains of $\caI'$ are no larger than the sizes of the domains of $\caI$.
Now we show that the factoring reduction is a linear reduction.
\begin{lemma}
  Let $(\caI,\epsilon,f)$ be an input of $\CSP(\caA)$ and let $(\caI',\epsilon',f') = \Call{Factor}{\caI,\epsilon,f}$.
  If $(\caI',\epsilon',f')$ is testable with $q(\epsilon')$ queries, then $(\caI,\epsilon,f)$ is testable with $q(O(\epsilon))$ queries.
\end{lemma}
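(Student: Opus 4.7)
The plan is to verify that \textsc{Factor} is a gap-preserving local reduction in the sense of Definition~\ref{def:gap-preserving-local-reduction}, with $t(n)=n$, $c_1=1$, and $c_2=1$; the lemma then follows from Lemma~\ref{lmm:gap-preserving-local-reduction}. Since $\caI'$ has the same variable set $V$ as $\caI$, condition~(1) is immediate, and a single query to $f(x)$ followed by quotienting by $\mu_x$ answers any query to $f'(x)$, giving condition~(4) with $c_2=1$. Completeness, condition~(2), is likewise immediate: if $(f(x),f(y))\in R_{xy}$ then by construction $(f(x)/\mu_x, f(y)/\mu_y)\in R'_{xy}$, so $f'$ satisfies $\caI'$ whenever $f$ satisfies $\caI$.

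The heart of the argument is soundness, condition~(3), which I would establish in the equivalent contrapositive form $\dist_{\caI}(f)\le \dist_{\caI'}(f')$. Given any satisfying assignment $g'$ of $\caI'$, I lift $g'$ to an assignment $g$ of $\caI$ by setting $g(x):=f(x)$ whenever $f(x)/\mu_x=g'(x)$ (i.e., whenever $f'(x)=g'(x)$), and otherwise choosing $g(x)$ to be any element of the $\mu_x$-class $g'(x)\subseteq A_x$. Then $g(x)\ne f(x)$ only when $g'(x)\ne f'(x)$, so $\dist(f,g)\le \dist(f',g')$; the soundness bound with $c_1=1$ follows once one checks that this $g$ actually satisfies $\caI$.

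The verification that $g$ satisfies $\caI$ is where the choice $\mu_x=\bigwedge_{y\ne x}\theta_{xy}$ pays off. Fix any binary constraint with relation $R_{xy}$ of $\caI$. By $2$-consistency together with congruence permutability, $R_{xy}$ is a thick mapping with respect to $\theta_{xy}$ and $\theta_{yx}$, hence a union of full $\theta_{xy}\times \theta_{yx}$-classes; since $\mu_x\le \theta_{xy}$ and $\mu_y\le \theta_{yx}$ by definition of $\mu$, the relation $R_{xy}$ is also a union of full $\mu_x\times \mu_y$-classes. Because $(g'(x),g'(y))\in R'_{xy}$, there exist $(a,b)\in R_{xy}$ with $a/\mu_x=g'(x)$ and $b/\mu_y=g'(y)$, and the union-of-classes property then forces $(g(x),g(y))\in R_{xy}$, as required. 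I do not anticipate any serious obstacle; the only subtlety is noticing that the definition of $\mu_x$ uses \emph{all} of the congruences $\theta_{xy}$ simultaneously, which is precisely what ensures that every constraint relation of $\caI$ is saturated under the $\mu$-classes used to form $\caI'$, so that no information relevant to satisfiability is lost by the quotient.
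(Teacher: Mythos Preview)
Your proposal is correct and follows essentially the same approach as the paper: verify that \textsc{Factor} is a gap-preserving local reduction with $t(n)=n$ and $c_1=c_2=1$ by lifting a satisfying assignment $g'$ of $\caI'$ to a satisfying assignment $g$ of $\caI$ via representatives of the $\mu_x$-classes. Your version is in fact more careful than the paper's: you choose $g(x)=f(x)$ whenever $f'(x)=g'(x)$, which cleanly gives $\dist(f,g)\le\dist(f',g')$, and you spell out (via the thick-mapping structure and $\mu_x\le\theta_{xy}$) why the lifted $g$ satisfies every $R_{xy}$, whereas the paper simply asserts this.
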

\begin{proof}
  We show that the factoring reduction is a linear reduction.
  Let $\caI = (V, \set{A_x}, \set{R_{xy}}, \bw)$ and $\caI' = (V',\set{A'_x}_{x \in V}, \set{R'_{xy}}, \bw')$ be the original instance and the reduced instance, respectively.

  Note that $|V'| = |V|$ and we can determine the value of $f'(x)$ by querying $f(x)$.

  If $f$ satisfies $\caI$, then $f'$ also satisfies $\caI'$.
  Suppose that $f'$ is $\epsilon$-close to satisfying $\caI'$ and let $g'$ be a satisfying assignment of $\caI'$ with $\dist_{\caI'}(f',g') \leq \epsilon$.
  Then, we define  $g$ to be any assignment for $\caI$ such that for $x \in V$, $g(x)$ is taken to be an arbitrary element in the $\mu_x$-class  $g'(x)$.
  Then, $g$  satisfies $\caI$ and $\dist_{\caI'}(f,g) = \dist_{\caI}(f',g') \leq \epsilon$.

  To summarize, the factoring reduction is a gap-preserving local reduction with $t(n) = n$, $c_1 = 1$, and $c_2 = 1$.
\end{proof}

%

\subsection{Reduction to instances with subdirectly irreducible domains}

In this section, we provide a reduction that produces instances whose domains are all subdirectly irreducible.
Suppose that $\bbA$ is a subdirect product of two algebras $\bbA_1$, $\bbA_2$ from $\caA$ and that $\bbR$ is a subdirect product of $\bbA$ and $\bbB$ for some $\bbB \in \caA$.  We can project the relation $R$ onto the factors of $\bbA$ to obtain two new binary relations from $A_1$ to $B$ and from $A_2$ to $B$, respectively:
\begin{align*}
R_1 &= \{(a_1,b)\mid \text{there is some $(a_1, c_2) \in A$ with $((a_1, c_2), b) \in R$}\},\\
R_2 &= \{(a_2,b)\mid \text{there is some $(c_1, a_2) \in A$ with $((c_1, a_2), b) \in R$}\}.
\end{align*}
The following shows that the relation $R$ can be recovered from the relations $R_1$, $R_2$, and $A$ (considered as a relation from $A_1$ to $A_2$).

\begin{lemma}\label{lem:subdirectly-irreducible}
  For all $a_1 \in A_1$, $a_2 \in A_2$, and $b \in B$, the following are equivalent:
  \begin{itemize}
    \item $((a_1, a_2), b) \in R$
    \item $(a_1,b) \in R_1$, $(a_2,b) \in R_2$ and $(a_1, a_2) \in A$.
  \end{itemize}
\end{lemma}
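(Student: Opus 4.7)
The forward direction is immediate: if $((a_1,a_2),b)\in R$, then $(a_1,b)\in R_1$ and $(a_2,b)\in R_2$ by definition, while $(a_1,a_2)\in A$ since $R$ projects into $A\times B$. So the only content is the reverse implication, which I would prove by exploiting the majority polymorphism $m$ of $\bfA$.

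The plan is to locate three tuples of $R$ whose componentwise image under $m$ equals $((a_1,a_2),b)$. From $(a_1,b)\in R_1$ pick a witness $c_2\in A_2$ with $((a_1,c_2),b)\in R$; from $(a_2,b)\in R_2$ pick a witness $c_1\in A_1$ with $((c_1,a_2),b)\in R$; and from $(a_1,a_2)\in A$, using subdirectness of $R$ in $A\times B$, pick some $b^*\in B$ with $((a_1,a_2),b^*)\in R$. Applying $m$ coordinatewise to these three tuples gives first coordinate $m(a_1,c_1,a_1)=a_1$, second coordinate $m(c_2,a_2,a_2)=a_2$, and third coordinate $m(b,b,b^*)=b$, so the result is precisely $((a_1,a_2),b)$.

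To conclude I need $m$ to act as a polymorphism on $R$. This follows because the majority identities defining $m$ are equations, hence persist under $H$, $S$, $P$, so $m$ is a majority term operation on every algebra in $\caA=\caV(\Alg(\bfA))$; in particular the subalgebra $\bbR$ of $\bbA\times\bbB$ (itself in $\caA$) is closed under $m$. I do not anticipate a genuine obstacle here: the only thing to be careful about is invoking subdirectness of $R$ in $A\times B$ to produce the third tuple $((a_1,a_2),b^*)$, without which the majority combination would not be available.
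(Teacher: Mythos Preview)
Your proof is correct and essentially identical to the paper's: both pick witnesses $c_2,c_1$ from the definitions of $R_1,R_2$, use subdirectness of $R$ in $A\times B$ to obtain a tuple $((a_1,a_2),d)\in R$, and apply the majority term of $\caA$ coordinatewise to the three tuples to produce $((a_1,a_2),b)$. Your extra remark that the majority identities propagate through $\caV(\Alg(\bfA))$ so that $m$ acts on $\bbR$ is a reasonable justification that the paper leaves implicit.
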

\begin{proof}
  One direction of this claim follows by construction.
  For the other, suppose that $(a_1,b) \in R_1$, $(a_2,b) \in R_2$ and $(a_1, a_2) \in A$.  Then there are elements $c_i \in A_i$, for $i = 1$, 2, with $(a_1, c_2)$, $(c_1, a_2) \in A$, $((a_1, c_2), b)$, $((c_1, a_2), b) \in R$.
  Since $R$ is subdirect in $A\times B$ and $(a_1, a_2) \in A$ then there is some $d \in B$ with $((a_1, a_2), d) \in R$.  Applying the majority term of $\caA$ coordinate-wise to the tuples $((a_1, c_2), b)$, $((c_1, a_2), b)$, and $((a_1, a_2), d)$  from $R$ we produce the tuple $((a_1, a_2), b) \in R$, as required.
%
\end{proof}

Lemma~\ref{lem:subdirectly-irreducible} allows us to split a domain of an instance of $\CSP(\caA)$ into subdirectly irreducible domains.
Formally, we define the splitting reduction as in Algorithm~\ref{alg:splitting-reduction}.
\begin{algorithm}
  \caption{}\label{alg:splitting-reduction}
  \begin{algorithmic}[1]
  \Procedure{Split}{$\caI = (V, \set{A_x}, \set{R_{xy}}, \bw), \epsilon, f$}
  \While{there exists $x \in V$ such that $\bbA_x$ is not subdirectly irreducible or trivial}
   \State{Replace $\bbA_x$ in $\caI$ with an isomorphic non-trivial subdirect product of $ \bbA_x^1 \times \bbA_x^2$ for some quotients $ \bbA_x^1$,  $ \bbA_x^2$ of $\bbA_x$ such that $\bbA_{x}^1$ is subdirectly irreducible.}
  \State{$V \leftarrow (V \setminus \set{x} )\cup \set{x_1,x_2}$, where $x_1$ and $x_2$ are newly introduced variables.}
  \State{Remove the domain $A_x$ and add the domains $A_{x}^1$ and $A_{x}^2$ over the variables $x_1$ and $x_2$ respectively.}
  \State{$\caC \leftarrow \caC \setminus \set{\langle (x,x), R_{xx}\rangle, \langle (x,y),R_{xy}\rangle, \langle (y,x), R_{yx}\rangle}_{y \in V \setminus \set{x}}$.}
  \State{$\caC \leftarrow \caC\cup \set{\langle (x_1, x_1), 0_{A_{x_1}}\rangle, \langle (x_2, x_2), 0_{A_{x_2}}\rangle, \langle (x_1,x_2),A_x\rangle, \langle (x_2,x_1),A_x^{-1}\rangle}$.}
   \State{$\caC \leftarrow \caC \cup \set{\langle (x_1,y),(R_{xy})_1 \rangle,\langle (x_2,y),(R_{xy})_2 \rangle,\langle (y,x_1),(R_{xy})_1^{-1} \rangle,\langle (y,x_2),(R_{xy})_2^{-1} \rangle}_{y \in V \setminus \set{x}}$.}
  \State{Remove $x$ from the domain of $\bw$ and add $x_1$ and $x_2$.}
  \State{Set $\bw(x_1) = \bw(x)/2$ and $\bw(x_2) = \bw(x)/2$.}
  \State{Remove $x$ from the domain of $f$ and add $x_1$ and $x_2$.}
  \State{Set $f(x_1) \in A_x^1$ and $f(x_2) \in A_x^2$ so that $(f(x_1),f(x_2)) = f(x)$.}

  \EndWhile
  \State{\Return $(\caI,\epsilon/2^{|A|},f)$.}
  \EndProcedure
  \end{algorithmic}
\end{algorithm}

Let $(\caI,\epsilon,f)$ be an input of $\CSP(\caA)$ and let $(\caI',\epsilon',f') = \Call{Split}{\caI,\epsilon,f}$.
It is clear that, since $\caI$ is assumed to be a 2-consistent instance of $\CSP(\caA)$ then the splitting reduction constructs another 2-consistent instance $\caI'$ of $\CSP(\caA)$ whose domains are all subdirectly irreducible and so have size bounded by $|A|$ (and are no bigger than the domains of $\caI$).
The next lemma shows that if a domain of an instance $\caI$ is prime, then after splitting it, the resulting subdirect factors will also be prime.
\begin{lemma}
  Let $\caI'$ be the instance of $\CSP(\caA)$ obtained by splitting a domain $\bbA_x$ of another instance $\caI$ into two subdirect factors $\bbA_{x_1}$ and $\bbA_{x_2}$ as in the $\Call{Split}{}$ procedure.  If the domain $\bbA_x$ is prime in $\caI$ then the domains $\bbA_{x_1}$ and $\bbA_{x_2}$ are prime in $\caI'$.
\end{lemma}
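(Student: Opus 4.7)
The plan is to prove the contrapositive for $\bbA_{x_1}$, the case of $\bbA_{x_2}$ being entirely symmetric. Suppose $\mu_{x_1} \neq 0_{A_{x_1}}$, and pick distinct $a_1, a_1' \in A_x^1$ with $(a_1, a_1') \in \theta_{x_1, z}$ for every variable $z$ other than $x_1$ that appears with $x_1$ in a constraint of $\caI'$. Among these variables are $x_2$, with constraint relation $A_x \subseteq A_x^1 \times A_x^2$, and each $y \in V \setminus \{x\}$, with constraint relation $(R_{xy})_1$. The goal is to produce distinct $a, a' \in A_x$ with $(a, a') \in \mu_x$, contradicting primeness of $\bbA_x$.

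First I exploit $(a_1, a_1') \in \theta_{x_1, x_2}$. Since $A_x$ is a thick mapping from $A_x^1$ to $A_x^2$ with respect to $\theta_{x_1, x_2}$ and $\theta_{x_2, x_1}$, two elements of $A_x^1$ lying in the same $\theta_{x_1, x_2}$-class have identical sets of $A_x^2$-partners in $A_x$; because $\bbA_x$ is subdirect in $\bbA_x^1 \times \bbA_x^2$, both $a_1$ and $a_1'$ have at least one partner, so they share one. Fix such a common $a_2 \in A_x^2$ and set $a = (a_1, a_2)$ and $a' = (a_1', a_2)$; these lie in $A_x$ and are distinct because $a_1 \neq a_1'$.

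It remains to show $(a, a') \in \theta_{xy}$ for every $y \in V \setminus \{x\}$. By the thick-mapping characterization of $R_{xy}$ with respect to $\theta_{xy}$ and $\theta_{yx}$, this reduces to $(a, b) \in R_{xy} \iff (a', b) \in R_{xy}$ for all $b \in A_y$. Assume $(a, b) \in R_{xy}$ and apply Lemma~\ref{lem:subdirectly-irreducible} to $R_{xy}$, regarded as a subdirect product in $(A_x^1 \times A_x^2) \times A_y$: this yields $(a_1, b) \in (R_{xy})_1$, $(a_2, b) \in (R_{xy})_2$, and $(a_1, a_2) \in A_x$. Now $(a_1, a_1') \in \theta_{x_1, y}$ together with the thick-mapping property of $(R_{xy})_1$ upgrades the first condition to $(a_1', b) \in (R_{xy})_1$; the second condition is unchanged; the third is $(a_1', a_2) \in A_x$, which holds by choice of $a_2$. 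Re-applying Lemma~\ref{lem:subdirectly-irreducible} in the opposite direction produces $(a', b) = ((a_1', a_2), b) \in R_{xy}$. The reverse implication is symmetric, so $(a, a') \in \theta_{xy}$. Intersecting over all $y \neq x$ yields $(a, a') \in \mu_x$ with $a \neq a'$, the desired contradiction.

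The main obstacle I anticipate is cleanly linking the congruence $\theta_{x_1, y}$ on the new domain $\bbA_x^1$ back to $\theta_{xy}$ on the old domain $\bbA_x$: $\theta_{x_1, y}$ only controls indistinguishability in the first coordinate under projection, not under the full relation $R_{xy}$. The leverage for bridging this gap is Lemma~\ref{lem:subdirectly-irreducible}, which lets us reconstruct $R_{xy}$ from its two projections together with $A_x$. That reconstruction is precisely what allows coordinate-wise $\theta_{x_1, y}$-indistinguishability, combined with the common second coordinate $a_2$ supplied by $\theta_{x_1, x_2}$, to be glued into full $\theta_{xy}$-indistinguishability.
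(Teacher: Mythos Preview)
Your proof is correct and follows the same overall strategy as the paper: use $(a_1,a_1')\in\theta_{x_1x_2}$ to obtain a common second coordinate $a_2$ (the paper calls it $c$), and then show for every $y\in V\setminus\{x\}$ that the resulting pair $((a_1,a_2),(a_1',a_2))$ lies in $\theta_{xy}$. The one difference is in the execution of that lifting step. The paper works directly inside $\theta_{xy}$: from $(a_1,a_1')\in\theta_{x_1y}$ it extracts witnesses $((a_1,d),u),((a_1',d'),u)\in R_{xy}$, hence $((a_1,d),(a_1',d'))\in\theta_{xy}$, and then applies the majority term to this pair together with the diagonal pairs $((a_1,c),(a_1,c))$ and $((a_1',c),(a_1',c))$ to force $((a_1,c),(a_1',c))\in\theta_{xy}$. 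You instead invoke Lemma~\ref{lem:subdirectly-irreducible} as a black box to decompose and then reassemble membership in $R_{xy}$ from the projections $(R_{xy})_1$, $(R_{xy})_2$, and $A_x$; the majority argument is hidden inside that lemma. Your route is arguably cleaner since it reuses an already-proved tool rather than repeating an ad hoc majority computation, while the paper's version is more self-contained and makes the role of the majority term explicit.
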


\begin{proof}
  Let $\caI = (V, \set{A_x}, \set{R_{xy}}, \bw)$ be given and suppose that the domain $\bbA_x$ is a subdirect product of the algebras $\bbA_{x_1}$ and $\bbA_{x_2}$. To produce $\caI'$ from $\caI$ by splitting $\bbA_x$, we replace the variable $x$ and the domain $A_x$ with the variables $x_1$ and $x_2$ and the corresponding domains $A_{x_1}$ and $A_{x_2}$.  For each $y \in V$ with $x \ne y$, we replace the constraint $\langle (x,y), R_{xy}\rangle$ with the constraints $\langle (x_1,y), (R_{xy})_1\rangle$ and $\langle (x_2,y), (R_{xy})_2\rangle$ and add the constraint $\langle (x_1, x_2), A_x\rangle$.

  If the domain $\bbA_x$ is prime in $\caI$ then there is $k \ge 1$ and  variables $y_i \in V\setminus \{x\}$, for $1 \le i \le k$, such that $\bigwedge _{1 \le i \le k} \theta_{xy_i} = 0_{A_x}$.  To show that $\bbA_{x_1}$ is prime in $\caI'$ it will suffice to show that
  \[
  \left(\bigwedge_{1 \le i \le k}\theta_{x_1y_i} \right) \wedge \theta_{x_1x_2} = 0_{A_{x_1}}.
  \]
  To establish this, suppose that $(a_1, a_1')$ belongs to the left hand side of this equality.  We will show that $a_1 = a_1'$.  We have that $(a_1, a_1') \in \theta_{x_1y_i}$ for $1 \le i \le k$ and $(a_1, a_1') \in \theta_{x_1x_2}$.  From the latter membership it follows that there is some $c \in A_{x_2}$ such that $(a_1,c)$, $(a_1',c) \in A_x$.
  From $(a_1, a_1') \in \theta_{x_1y_i}$ it follows that there is some $u \in A_{y_i}$ with $(a_1,u)$, $(a_1',u) \in (R_{xy_i})_1$. We can conclude that there are $d$, $d' \in A_{y_i}$ with $((a_1, d), u)$, $((a_1', d'), u) \in R_{xy_i}$.  We then have that $((a_1, d), (a_1',d')) \in \theta_{xy_i}$.  We can now apply the majority term of $\caA$ coordinate-wise to the following three pairs of members of $\theta_{xy_i}$ to establish that $((a_1, c), (a_1',c)) \in \theta_{xy_i}$:
  $((a_1, d), (a_1',d'))$, $((a_1,c), (a_1,c))$, and $((a_1',c), (a_1',c))$.  We've shown that $(a_1,c)$ and $(a_1',c)$ are $\theta_{xy_i}$-related for all $i \le k$ and so we have that $(a_1,c) = (a_1',c)$, which implies that $a_1 = a_1'$, as required. Thus $\bbA_{x_1}$ is prime in $\caI'$ and by symmetry, $\bbA_{x_2}$ is also prime.
  \end{proof}

Now we show that the splitting reduction is a gap-preserving local reduction.
\begin{lemma}\label{lem:si-reduction}
  Let $(\caI,\epsilon,f)$ be an input of $\CSP(\caA)$ and let $(\caI',\epsilon',f') = \Call{Split}{\caI,\epsilon,f}$.
  If $(\caI',\epsilon',f')$ is testable with $q(\epsilon')$ queries, then $(\caI,\epsilon,f)$ is testable with $q(O(\epsilon))$ queries.
\end{lemma}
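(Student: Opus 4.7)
The plan is to verify that $\Call{Split}{}$ is a gap-preserving local reduction with $t(n) = 2^{|A|}n$, $c_1 = 2^{-|A|}$, and $c_2 = 1$; the conclusion then follows from Lemma~\ref{lmm:gap-preserving-local-reduction}, since $|A|$ is a fixed constant and hence $q(c_1 \epsilon) = q(O(\epsilon))$.

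Three of the four conditions are routine. For the variable-count bound, each iteration of the while loop replaces a variable $x$ by two variables whose domain sizes are strictly smaller than $|A_x|$: the non-triviality of the subdirect decomposition of $\bbA_x$ forces both defining quotient congruences to lie strictly above the identity, so $|A_x^i| < |A_x|$ for both $i$. Consequently the binary tree of splits descending from any original variable has depth at most $|A|$ and contributes at most $2^{|A|}$ leaves. For query access, each descendant $y$ of $x$ corresponds to a fixed iterated coordinate projection of $A_x$, so one query to $f(x)$ determines $f'(y)$. For completeness, if $f$ satisfies $\caI$, then defining $f'(y)$ to be the appropriate iterated projection of $f(x)$ makes each newly-added constraint $\langle (x_1,x_2), A_x\rangle$ hold, since $(f'(x_1),f'(x_2)) = f(x) \in A_x$, and each split constraint $\langle (x_i,y), (R_{xy})_i\rangle$ hold by the very definition of the projected relations.

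The substantive work is the soundness direction. Given a satisfying assignment $g'$ of $\caI'$ with $\dist_{\caI'}(f',g') \leq \epsilon/2^{|A|}$, I would construct $g$ for $\caI$ by recombining descendants: for each original variable $x$ split into $x_1$, $x_2$, set $g(x) = (g'(x_1), g'(x_2))$, iterated recursively through any deeper splits. The constraint $\langle (x_1,x_2), A_x\rangle$ satisfied by $g'$ ensures $g(x) \in A_x$, so $g$ is well-defined. To verify that $g$ satisfies each original binary constraint $\langle (x,y), R_{xy}\rangle$, I would invoke Lemma~\ref{lem:subdirectly-irreducible} inductively along the splitting tree: at the splitting step for $x$, the lemma combines the split relations $(R_{xy})_1$, $(R_{xy})_2$ together with the membership $A_x$ to recover $R_{xy}$, and an analogous argument handles splits of $y$.

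For the distance bound, the total weight of descendants of any original variable $x$ equals $\bw(x)$, so the minimum descendant weight is at least $\bw(x)/2^{d(x)} \geq \bw(x)/2^{|A|}$ where $d(x) \leq |A|$ is the depth of the splitting tree below $x$. A disagreement $g(x) \ne f(x)$ forces at least one descendant $y$ of $x$ to satisfy $g'(y) \ne f'(y)$, contributing at least $\bw(x)/2^{|A|}$ to $\dist_{\caI'}(f',g')$ versus $\bw(x)$ to $\dist_\caI(f,g)$. Summing over original variables yields $\dist_\caI(f,g) \leq 2^{|A|} \dist_{\caI'}(f',g') \leq \epsilon$, as required. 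The main obstacle will be the inductive invocation of Lemma~\ref{lem:subdirectly-irreducible} when both endpoints of an original constraint have been repeatedly split, since the lemma handles only a single split at a time; I expect this to go through cleanly by induction on the total number of split operations performed.
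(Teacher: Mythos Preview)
Your proof is correct and follows essentially the same approach as the paper: verify that $\Call{Split}{}$ is a gap-preserving local reduction by using Lemma~\ref{lem:subdirectly-irreducible} for completeness and soundness, and the halving of weights at each split to obtain the $2^{|A|}$ factor in the distance comparison. The paper records the slightly sharper count $k_x \le |A|$ for the number of descendants of each variable, which comes from the observation that only the second factor $\bbA_x^2$ can require further splitting (since $\bbA_x^1$ is already subdirectly irreducible, the splitting ``tree'' is in fact a path), but this tightening is immaterial to the conclusion.
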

\begin{proof}
  We show that the splitting reduction is a linear reduction.

  Let $\caI = (V, \set{A_x}, \set{R_{xy}}, \bw)$ and $\caI' = (V',\set{A'_x}, \set{R'_{xy}}, \bw')$ be the original instance and the reduced instance, respectively.

  In the reduction, every variable $x$ of $ V$ is ultimately split into variables $x_1,\ldots,x_{k_x}$ from $V'$ and the domain $\bbA_x$ is replaced by subdirectly irreducible domains $\bbA_x^1, \ldots,  \bbA_{x}^{k_x}$ corresponding to these variables such that $\bbA_x$ is isomorphic to a subdirect product of these new domains. Since each of the domains has size bounded by $|A|$, then $k_x \le |A|$  for all $x \in V$ and so after completely splitting $\bbA_x$ into the $k_x$ factors, we have that $\bw(x) \le 2^{|A|}\bw'(x_i)$ for each $i \in [k_x]$.  We also have that $\sum_{i \in [k_x]}\bw'(x_i) = \bw(x)$ for each $x \in V$.


  We can determine the value of $f'(x_i)$, where $x_i$ is added when splitting the variable $x$,
  we only need to know the value of $f(x)$.

  If $f$ satisfies $\caI$, then $f'$ satisfies $\caI'$ by Lemma~\ref{lem:subdirectly-irreducible}.
  Suppose that $f'$ is $\epsilon/(2^{|A|})$-close to satisfying $\caI'$
   and let $g'$ be a satisfying assignment for $\caI'$ with $\dist(f',g') \leq \epsilon/(2^{|A|})$.
  Because the tuple $(g'(x_1),\ldots,g'(x_{k_x}))$ is in $A_x$, we can naturally define an assignment $g$ for $\caI$ by setting $g(x) = (g'(x_1),\ldots,g'(x_{k_x})) \in A_x$.
  Then $g$ is a satisfying assignment from Lemma~\ref{lem:subdirectly-irreducible}.
  Moreover, 
  \[
    \dist(f,g) = \sum_{x \in V: \exists i \in [k_x], g'(x_i) \neq f'(x_i)} \bw(x)
    \leq
    \sum_{x \in V}
    \sum_{i \in [k_x]: g'(x_i) \neq f'(x_i)}2^{|A|}\bw'(x_i)
    = 2^{|A|}\dist(f',g') \le  \epsilon.
  \]

  To summarize, the splitting reduction is a gap-preserving local reduction with $t(n) = |A|$, $c_1 = 1$, and $c_2 = 2^{|A|}$.
\end{proof}

\subsection{Isomorphism reduction}

By applying the factoring reduction and then the splitting reduction to an instance of $\CSP(\caA)$ we end up with an instance whose domains are either trivial or subdirectly irreducible and prime.
For such an instance, we have the following property.
\begin{lemma}\label{subprime}
  Let $\caI = (V, \set{A_x}, \set{R_{xy}}, \bw)$ be an instance of $\CSP(\caA)$ such that $|V|>1$ and such that every domain is either trivial or is subdirectly irreducible and prime.
  Then, for each variable $x \in V$,   there is at least one variable $y \ne x$ so that $\theta_{xy}=0_{A_x}$ and for such variables $y$, the relation $R_{yx}$ is the graph of a surjective homomorphism from $\bbA_y$ to $\bbA_x$.
\end{lemma}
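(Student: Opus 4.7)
The plan is to separate the argument into the cases where $\bbA_x$ is trivial and where it is subdirectly irreducible and prime, and to exploit the existence of a monolith in the latter case.

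\textbf{Case 1: $\bbA_x$ is trivial.} Since $|V|>1$, pick any $y\ne x$. By 2-consistency the binary constraint relation $R_{xy}$ is subdirect in $A_x\times A_y$, so it is non-empty, and because $|A_x|=1$ the only equivalence relation on $A_x$ is $0_{A_x}$, hence $\theta_{xy}=0_{A_x}$ trivially. The thick-mapping description of $R_{yx}$ then forces $\theta_{yx}$ to be the total congruence on $A_y$ and $R_{yx}$ to be the graph of the (unique, surjective) homomorphism from $\bbA_y$ to $\bbA_x$.

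\textbf{Case 2: $\bbA_x$ is subdirectly irreducible and prime.} Let $\sigma$ be the monolith of $\bbA_x$, that is, the unique smallest non-trivial congruence (which exists since $\bbA_x$ is subdirectly irreducible and non-trivial). I plan a short proof by contradiction: suppose that for every $y\ne x$ the congruence $\theta_{xy}$ is non-trivial. Then $\theta_{xy}\supseteq \sigma$ for each such $y$ by the defining property of the monolith, so
\[
\mu_x \;=\; \bigwedge_{y\ne x}\theta_{xy}\;\supseteq\;\sigma\;\supsetneq\;0_{A_x},
\]
contradicting primeness of $\bbA_x$. Hence there is at least one $y\ne x$ with $\theta_{xy}=0_{A_x}$.

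\textbf{Surjective homomorphism.} For any such $y$, I simply invoke the remark made just before the factoring subsection: since $R_{yx}$ is a thick mapping with respect to the congruences $\theta_{yx}$ on $\bbA_y$ and $\theta_{xy}=0_{A_x}$ on $\bbA_x$ via some isomorphism $\phi_{yx}\colon \bbA_y/\theta_{yx}\to \bbA_x/0_{A_x}=\bbA_x$, the relation $R_{yx}$ is exactly the graph of the composition of the natural projection $\bbA_y\to \bbA_y/\theta_{yx}$ with $\phi_{yx}$, which is a surjective homomorphism onto $\bbA_x$ (surjectivity follows from 2-consistency/subdirectness of $R_{yx}$). No real obstacle is expected here; the entire argument is a routine consequence of subdirect irreducibility, primeness, and the thick-mapping description of binary constraints over an arithmetic variety.
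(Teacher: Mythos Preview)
Your proposal is correct and follows essentially the same approach as the paper's proof: the paper also splits off the trivial case, uses primeness to get $\mu_x = 0_{A_x}$, and invokes subdirect irreducibility to conclude some $\theta_{xy} = 0_{A_x}$ (your explicit use of the monolith is just the standard unpacking of this step), then appeals to the thick-mapping description to obtain the surjective homomorphism.
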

\begin{proof}
If $|A_x| = 1$ then the result follows trivially.  Otherwise, we have that the congruence $\mu_x = \bigwedge_{y\ne x} \theta_{xy}$ of $\bbA_x$ is equal to $0_{A_x}$, since $\bbA_x$ is prime.  But, since this algebra is subdirectly irreducible, it follows that for some $y \ne x$, $\theta_{xy} = 0_{A_x}$.  Since $R_{yx}$ is a thick mapping with $\theta_{xy} = 0_{A_x}$ it follows that $R_{yx}$ is the graph of a surjective homomorphism from $\bbA_y$ to $\bbA_x$.
\end{proof}

Let $\caI = (V, \set{A_x}, \set{R_{xy}}, \bw)$ be an instance of $\CSP(\caA)$ with $|V|>1$ and with the property that every domain is either trivial or is subdirectly irreducible and prime.
Define the  relation $\sim$ on $V$ by $x \sim y$ if and only if the relation $R_{xy}$ is the graph of an isomorphism from $\bbA_x$ to $\bbA_y$.  Using the 2-consistency of $\caI$, the relation $\sim$ is naturally an equivalence relation on $V$.  The following corollary to Lemma~\ref{subprime} establishes that unless all of the domains of $\caI$ are trivial, the relation $\sim$ is non-trivial.

\begin{corollary}
  For $\caI = (V, \set{A_x}, \set{R_{xy}}, \bw)$ an instance of $\CSP(\caA)$ as in Lemma~\ref{subprime}, if $x \in V$ is such that the domain $A_x$ has maximal size and has at least two elements, then there is some $y \in V$ with $x \ne y$ and $x\sim y$.
\end{corollary}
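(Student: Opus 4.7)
The plan is to combine the maximality hypothesis on $|A_x|$ with the structural conclusion of Lemma~\ref{subprime}. Since $|A_x|\ge 2$, the domain $\bbA_x$ is subdirectly irreducible and prime, so Lemma~\ref{subprime} supplies a variable $y \ne x$ such that $\theta_{xy} = 0_{A_x}$ and the relation $R_{yx}$ is the graph of a surjective homomorphism $\phi : \bbA_y \to \bbA_x$.

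Next I would exploit the maximality of $|A_x|$. Surjectivity of $\phi$ forces $|A_y|\ge |A_x|$, while the choice of $x$ gives $|A_x|\ge |A_y|$; hence $|A_y| = |A_x|$. Between finite sets of the same cardinality a surjection is a bijection, so $\phi$ is an isomorphism between $\bbA_y$ and $\bbA_x$. Equivalently, $\theta_{yx}$ collapses to $0_{A_y}$, so the thick mapping $R_{yx}$ is in fact the graph of an isomorphism.

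Finally, since $R_{xy} = R_{yx}^{-1}$, the relation $R_{xy}$ is the graph of the inverse isomorphism $\phi^{-1}:\bbA_x \to \bbA_y$. By the definition of $\sim$ given immediately before the statement, this means $x \sim y$, which is the desired conclusion.

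The step that requires the most care is verifying that the surjective homomorphism identified by Lemma~\ref{subprime} is genuinely well-defined as a map $\bbA_y \to \bbA_x$ (rather than a mere thick mapping), but this is exactly the content of the lemma's last sentence, and no additional work is needed. The rest of the argument is a short counting observation, so I do not foresee a real obstacle.
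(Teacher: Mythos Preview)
Your proposal is correct and follows essentially the same argument as the paper: invoke Lemma~\ref{subprime} to obtain $y\ne x$ with $R_{yx}$ the graph of a surjective homomorphism $\bbA_y\to\bbA_x$, use maximality of $|A_x|$ to force $|A_y|=|A_x|$ and hence that this map is an isomorphism, and conclude $x\sim y$. Your version is slightly more explicit in passing to the inverse $R_{xy}=R_{yx}^{-1}$ to match the literal definition of $\sim$, but there is no substantive difference.
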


\begin{proof}
  If $A_x$ has maximal size and has at least two elements, then let $y \in V$ be a variable such that $x \ne y$ and $R_{yx}$ the graph of a surjective homomorphism from $\bbA_y$ to $\bbA_x$.  Since $A_x$ has maximal size, it follows that $|A_y| = |A_x|$ and so $R_{yx}$ is the graph of an isomorphism from $\bbA_y$ to $\bbA_x$.
\end{proof}

For a variable $x \in V$, let $[x] := x/\sim$ denote the $\sim$-class of $V$ that $x$ belongs to.
Let $S \subseteq V$ be an arbitrary complete system of representatives of this equivalence relation and for any $\sim$-class $u$, let $s(u) \in V$ be the unique element $x \in S$ such that $x\in u$.
In particular $[s(u)] = u$ holds.

Given an assignment $f$ for $\caI$, we can test the input $(\caI, \epsilon, f)$ in two steps.
First, we test whether the values of $f$ in the $\sim$-classes of $V$ are consistent using  a consistency algorithm (Algorithm~\ref{alg:consistency}) and then we test the input obtained by contracting the $\sim$-classes using Algorithm~\ref{alg:isomorphism-reduction}.
Explanations of these two steps are contained in the next two subsections.

\subsubsection{Testing $\sim$-consistency}
We say that the input $(\caI, \epsilon, f)$ is \emph{$\sim$-consistent} if, for each $x$, $y \in V$ with $x \sim y$, $(f(x), f(y)) \in R_{xy}$.

For a $\sim$-class $u \subseteq V$ and $b \in A_{s(u)}$,
we define
\begin{align*}
\overline{\bw}(u,b) &= \sum\limits_{y \in u: f(y) = R_{s(u) y}(b) }\bw(y),\\
\overline{\bw}(u) &= \sum\limits_{b \in A_{s(u)}}\overline{\bw}(u,b),\ \  \mbox{and}\\
\overline{\bw}_{\maj}(u)&= \max_{b \in A_{s(u)}} \overline{\bw}(u,b).
\end{align*}
Note that $\overline{\bw}(u)$ is also equal to $\sum_{x \in u}\bw(x)$, the sum of the weights of the variables in $u$.
In addition, we define $\epsilon_u$ to be $(\overline{\bw}(u) - \overline{\bw}_{\maj}(u)) / \overline{\bw}(u)$ and observe that $\epsilon_u \le (|A|-1)/|A|$ since $|A_{s(x)}| \le |A|$ and so $\overline{\bw}(u)$ is the sum of at most $|A|$ terms, each of which is at most $\overline{\bw}_{\maj}(u)$.
The quantity $\epsilon_u$ represents the fraction of values, by weight, of  $f|_u$ that need to be altered in order to establish  $\sim$-consistency of the assignment over the class $u$.
Let $f_{\maj}$ be the assignment obtained from $f$ in this way.
That is, for $x \in V$, $f_{\maj}(x) = R_{s([x])x}\left(\mathop{\mathrm{argmax}}_{b \in A_{s([x])}} \overline{\bw}([x],b)\right)$. 

We need the following simple proposition to analyze our algorithm.
\begin{proposition}\label{prp:simple-prob}
  Let $X$ be a random variable taking values in $[0,1]$ such that $\E[X] \geq \epsilon$ for some $\epsilon \geq 0$.
  Then, $\Pr[X \geq \epsilon/2] \geq \epsilon/2$ holds.
\end{proposition}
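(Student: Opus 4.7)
The plan is to use the standard reverse-Markov trick: split the expectation of $X$ according to whether $X$ falls above or below the threshold $\epsilon/2$, and bound each contribution separately.

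More concretely, I would write
\[
\E[X] = \E\bigl[X \cdot \mathbf{1}[X \geq \epsilon/2]\bigr] + \E\bigl[X \cdot \mathbf{1}[X < \epsilon/2]\bigr].
\]
For the first term, since $X$ is bounded by $1$, the contribution is at most $\Pr[X \geq \epsilon/2]$. For the second term, by the definition of the indicator, $X < \epsilon/2$ on that event, so the contribution is at most $(\epsilon/2)\cdot\Pr[X < \epsilon/2] \leq \epsilon/2$. Combining these with the hypothesis $\E[X] \geq \epsilon$ yields
\[
\epsilon \leq \Pr[X \geq \epsilon/2] + \epsilon/2,
\]
and rearranging gives the desired bound $\Pr[X \geq \epsilon/2] \geq \epsilon/2$.

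There is essentially no obstacle here: the two bounds that make the argument go through are precisely the two extreme cases of $X$ on each part of the partition ($X \le 1$ above the threshold, $X < \epsilon/2$ below it), and the fact that $X \in [0,1]$ is used exactly once, for the upper tail. The proof is a handful of lines and requires no probabilistic machinery beyond linearity of expectation.
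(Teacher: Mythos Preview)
Your proof is correct and is essentially identical to the paper's own argument: both split the expectation according to whether $X \geq \epsilon/2$, bound $X$ by $1$ on the upper part and by $\epsilon/2$ on the lower part, and rearrange to get $\Pr[X \geq \epsilon/2] \geq \epsilon/2$.
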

\begin{proof}
  Let $p = \Pr[X \geq \epsilon/2]$.
  Then,
  \[
    \epsilon
    \leq
    \E[X]
    \leq
    1 \cdot p + \frac{\epsilon}{2} (1-p)
    \leq p + \frac{\epsilon}{2}.
  \]
  Hence, $p \geq \epsilon/2$ holds.
\end{proof}

In order to test $\sim$-consistency, we run Algorithm~\ref{alg:consistency}. 
\begin{algorithm}[t!]
  \caption{}\label{alg:consistency}
  \begin{algorithmic}[1]
  \Procedure{Consistency}{$\caI,\epsilon,f$}
    \State{Sample a set $U$ of $\Theta(1/\epsilon)$ $\sim$-classes of $\caI$. In each sampling, $u$ is chosen with probability $\overline{\bw}(u)$.}
    \For{each $u \in U$}
    \State{Sample a set $S$ of $\Theta(1/\epsilon)$ variables in $u$.
    In each sampling, a variable $x \in u$ is chosen with probability $\bw(x) / \overline{\bw}(u)$.}
    \If{there are two variables $x,y \in S$ with $f(y) \neq R_{xy}(f(x))$}
    \State Reject.
    \EndIf
    \EndFor
    \State Accept.
  \EndProcedure
  \end{algorithmic}
\end{algorithm}

\begin{lemma}\label{lem:consistency}
  Algorithm~\ref{alg:consistency} tests $\sim$-consistency with query complexity $O(1/\epsilon^2)$.
\end{lemma}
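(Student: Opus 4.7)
The plan is to verify three properties of Algorithm~\ref{alg:consistency}. Completeness and the query bound are immediate. If the input $(\caI,\epsilon,f)$ is already $\sim$-consistent, then for any two variables $x,y$ in the same $\sim$-class we have $f(y)=R_{xy}(f(x))$, so no rejection ever occurs. The query complexity is bounded by the product of the two sampling sizes, giving $\Theta(1/\epsilon)\cdot\Theta(1/\epsilon)=O(1/\epsilon^2)$.

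For soundness, I would first observe that $f_{\maj}$ is exactly the $\sim$-consistent assignment closest to $f$: within each class $u$, any $\sim$-consistent assignment is determined by its value at $s(u)$, and among the $|A_{s(u)}|$ possibilities the weighted majority choice minimizes distance to $f|_u$, contributing exactly $\overline{\bw}(u)-\overline{\bw}_{\maj}(u)=\overline{\bw}(u)\cdot\epsilon_u$. Summing over classes gives the key identity $\sum_u \overline{\bw}(u)\cdot\epsilon_u=\dist(f,f_{\maj})$, which is at least $\epsilon$ under the $\epsilon$-farness hypothesis.

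Since $\sum_u \overline{\bw}(u)=\sum_x \bw(x)=1$, sampling $u$ with probability $\overline{\bw}(u)$ gives a random variable $\epsilon_u\in[0,1]$ with expectation at least $\epsilon$. By Proposition~\ref{prp:simple-prob}, $\Pr[\epsilon_u\geq\epsilon/2]\geq\epsilon/2$, so drawing $\Theta(1/\epsilon)$ independent classes locates some $u$ with $\epsilon_u\geq\epsilon/2$ except with probability at most $(1-\epsilon/2)^{\Theta(1/\epsilon)}\leq 1/10$. Now condition on such a class $u$. Each $x\in u$ determines a unique preimage $b_x:=R_{x\,s(u)}(f(x))\in A_{s(u)}$, with weighted distribution $p_b:=\overline{\bw}(u,b)/\overline{\bw}(u)$, and two variables of $u$ are $\sim$-inconsistent precisely when their $b$-values differ. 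Since $\max_b p_b=1-\epsilon_u$, drawing $t=\Theta(1/\epsilon)$ variables from $u$ yields samples that all share a common $b$-value with probability at most
\[
  \sum_b p_b^t \;\leq\; \Bigl(\max_b p_b\Bigr)^{t-1}\sum_b p_b \;=\; (1-\epsilon_u)^{t-1} \;\leq\; (1-\epsilon/2)^{t-1},
\]
which can be made at most $1/10$ by choosing the hidden constant in $t$ sufficiently large. In the complementary event two variables with distinct $b$-values are drawn, and the algorithm rejects. A union bound over the two stages gives total rejection probability at least $1-1/10-1/10\geq 2/3$, as required.

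The main obstacle is the within-class analysis: the tester must certify not only that a class $u$ with large $\epsilon_u$ is found, but that within such a class the sampled variables actually witness a collision of distinct $b$-values using only $O(1/\epsilon)$ samples. Once the distance identity $\sum_u\overline{\bw}(u)\epsilon_u=\dist(f,f_{\maj})$ is established and combined with the elementary moment bound above, the two sampling stages compose cleanly through Proposition~\ref{prp:simple-prob} and a union bound.
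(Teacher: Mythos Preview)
Your proof is correct and follows essentially the same two-stage strategy as the paper: use the identity $\sum_u \overline{\bw}(u)\epsilon_u = \dist(f,f_{\maj})$ together with Proposition~\ref{prp:simple-prob} to locate a class with $\epsilon_u \geq \epsilon/2$, then argue that $\Theta(1/\epsilon)$ samples from that class witness an inconsistency with high probability. Your within-class collision bound $\sum_b p_b^t \leq (1-\epsilon_u)^{t-1}$ is slightly cleaner than the paper's bound $(1-\epsilon_u)^t + \epsilon_u^t$ (which additionally invokes $\epsilon_u \leq (|A|-1)/|A|$ to control the second term), but the overall argument is the same.
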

\begin{proof}
  It is clear that Algorithm~\ref{alg:consistency} accepts if $f$ is $\sim$-consistent and the query complexity is $O(1/\epsilon^2)$.
Suppose that $f$ is $\epsilon$-far from $\sim$-consistency, which means that $\dist(f,f_{\maj}) \geq \epsilon$.
  Then, we have $\E[\epsilon_u] = \sum\limits_{u:\sim\text{-class}}(\overline{\bw}(u)) \epsilon_u\geq \epsilon$, where in the calculation of the expectation, a $\sim$-class $u$ is chosen with probability $\overline{\bw}(u)$.
  Note that $\epsilon_u \in [0,1]$ for every $\sim$-class $u$  and so we can apply  Lemma~\ref{prp:simple-prob}, to conclude that we sample a $\sim$-class $u$ with $\epsilon_u \geq \epsilon/2$ with probability at least $\epsilon/2$.
  Hence, the probability that $U$ contains a $\sim$-class $u$ with $\epsilon_u \geq \epsilon/2$ is at least $1 - (1-\epsilon/2)^{\Theta(1/\epsilon)} \geq 5/6$ by choosing the hidden constant large enough.
  For a $\sim$-class $u$ with $\epsilon_u \geq \epsilon/2$,
  the probability that we find two vertices $x,y \in u$ with $f(y) \neq R_{xy}(f(x))$ in $S$ is at least
  \[
  1 - (1-\epsilon_u)^{\Theta(1/\epsilon)} - (\epsilon_u)^{\Theta(1/\epsilon)}
  \ge 1 - (1-\epsilon/2)^{\Theta(1/\epsilon)} - ((|A|-1)/|A|)^{\Theta(1/\epsilon)}
  \]
  since $\epsilon_u \ge \epsilon/2$ for this class $u$ and, as noted earlier, $\epsilon_u \le (|A|-1)/|A|$ for every class $u$.
  By choosing the hidden constant large enough we can ensure that this value is at least $5/6$.
  By combining these bounds, we obtain two vertices $x,y$ with $f(y) \neq R_{xy}(f(x))$ with probability at least $2/3$.
\end{proof}

\subsubsection{Isomorphism reduction}
Using Algorithm~\ref{alg:consistency}, we can reject an input $(\caI,\epsilon,f)$ if it is far from satisfying $\sim$-consistency.
In this subsection we will consider a reduction from $(\caI,\epsilon,f)$ to another input $(\caI',\epsilon',f')$ assuming that $(\caI,\epsilon,f)$ is close to satisfying $\sim$-consistency.

\begin{algorithm}[t!]
  \caption{}\label{alg:isomorphism-reduction}
  \begin{algorithmic}[1]
  \Procedure{Isomorphism}{$\caI,\epsilon,f$}
    \For{each $\sim$-class $u$}
      \State{Sample a variable $x \in u$ with probability $\bw(x) / \overline{\bw}(u)$, and let $x_u$ be the sampled variable.}
      \State{$V' \leftarrow V' \cup \set{u}$.}
      \State{$A'_{u} \leftarrow A_{s(u)}$.}
      \State{$\bw'(u) \leftarrow \overline{\bw}(u)$.}
      \State{$f'(u) \leftarrow R_{x_u s(u)}(f(x_u))$.}
    \EndFor
    \For{each pair $(u,u')$ of  $\sim$-classes}
      \State{$R'_{u u'} \leftarrow R_{x_u x_{u'}}$.}
    \EndFor
    \State{\Return $((V',\set{A'_x}, \set{R'_{xy}}, \bw'),\epsilon/2,f')$.}
  \EndProcedure
  \end{algorithmic}
\end{algorithm}

Our reduction, as described in Algorithm~\ref{alg:isomorphism-reduction}, contracts the variables in each $\sim$-class to a single variable from that class.
It should be clear that since the instance $\caI$ of $\CSP(\caA)$ is assumed to be 2-consistent, the reduction will produce another 2-consistent instance $\caI'$ of $\CSP(\caA)$.  As the next lemma shows, unless the domains of $\caI$ all have size one, some of the domains of $\caI'$ will no longer be prime.

\begin{lemma}\label{shrink}
  Let $(\caI, \epsilon, f)$ be an input of $\CSP(\caA)$ for which domains of $\caI$ are either trivial or prime and subdirectly irreducible and let $(\caI',\epsilon',f') = \Call{Isomorphism}{\caI,\epsilon,f}$.  If some domain of $\caI$ has more than one element, then any domain of $\caI'$  of maximal size will  not be prime, unless $\caI'$ has only one variable.
\end{lemma}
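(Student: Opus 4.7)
My plan is to pick a $\sim$-class $u \in V'$ such that $A'_u$ has maximal size in $\caI'$, assume $|V'| > 1$, and show that $\bigwedge_{u' \neq u} \theta'_{u u'}$ is strictly larger than $0_{A'_u}$, where $\theta'_{u u'}$ is the congruence on $\bbA'_u$ arising from the thick-mapping structure of the binary constraint $R'_{u u'}$ of $\caI'$. By definition of prime this is exactly what is needed. Note that the hypothesis that some domain of $\caI$ has more than one element guarantees, since $\sim$-related variables share the same domain size, that the maximal domain size in $\caI'$ is at least two.

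First I would translate the question from $\caI'$ back to $\caI$. Since $x_u \sim s(u)$ and $x_{u'} \sim s(u')$, the relations $R_{x_u s(u)}$ and $R_{x_{u'} s(u')}$ are graphs of isomorphisms, and $2$-consistency forces $R'_{u u'} = R_{x_u x_{u'}}$ to correspond, via these isomorphisms, to $R_{s(u) s(u')}$. Consequently the thick-mapping data of $R'_{u u'}$ is the image of that of $R_{s(u) s(u')}$ under these canonical isomorphisms, and under the identification $A'_u = A_{s(u)}$ one has $\theta'_{u u'} = \theta_{s(u) s(u')}$. It therefore suffices to show $\bigwedge_{u' \neq u} \theta_{s(u) s(u')} \neq 0_{A_{s(u)}}$ in $\caI$.

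Next I would verify that each $\theta_{s(u) s(u')}$ is non-zero. Set $y = s(u')$. Because $y \not\sim s(u)$, $R_{s(u) y}$ is not the graph of an isomorphism. Suppose for contradiction that $\theta_{s(u) y} = 0_{A_{s(u)}}$; then the induced isomorphism $\bbA_{s(u)} \to \bbA_y/\theta_{y s(u)}$ gives $|A_{s(u)}| = |A_y/\theta_{y s(u)}|$. Since $A'_u$ has maximal size in $\caI'$ and $\sim$-related variables have equal-size domains, $|A_{s(u)}|$ is also maximal among the domains of $\caI$, so $|A_y| \le |A_{s(u)}| = |A_y/\theta_{y s(u)}|$, forcing $\theta_{y s(u)} = 0_{A_y}$ and hence $s(u) \sim y$, a contradiction. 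Finally I would invoke the subdirect irreducibility of $\bbA_{s(u)}$: its congruence lattice has a unique smallest non-zero element $\sigma$, and every non-zero congruence of $\bbA_{s(u)}$ contains $\sigma$. Therefore $\sigma \le \theta_{s(u) s(u')}$ for every $u' \neq u$, giving $\bigwedge_{u' \neq u} \theta_{s(u) s(u')} \ge \sigma > 0_{A_{s(u)}}$, which shows that $A'_u$ is not prime in $\caI'$.

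The main obstacle is the bookkeeping in the translation step: one must cleanly identify $\theta'_{u u'}$ with $\theta_{s(u) s(u')}$ despite the random choice of representative $x_u \in u$. This identification is forced by $2$-consistency and the definition of $\sim$ (and is independent of the particular $x_u$ chosen), so once stated correctly it amounts to transporting the thick-mapping congruences through the given isomorphisms. The remaining ingredients, namely maximality of $|A_{s(u)}|$ in the second step and subdirect irreducibility of $\bbA_{s(u)}$ in the third, then deliver non-primeness of $A'_u$ in a few lines.
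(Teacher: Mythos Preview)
Your proposal is correct and follows essentially the same approach as the paper: for a maximal-size domain in $\caI'$, show that every $\theta_{xy}$ (for $y$ in a different $\sim$-class) is nonzero by a maximality/isomorphism argument, then invoke subdirect irreducibility to conclude the meet of these congruences is nonzero. The paper's proof is slightly terser because it conflates variables of $\caI'$ with their chosen representatives in $\caI$, whereas you carry out the translation $\theta'_{uu'} = \theta_{s(u)s(u')}$ explicitly; this extra bookkeeping is harmless and arguably cleaner.
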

\begin{proof}
  Suppose that $\caI'$ has more than one variable.  This is equivalent to there being more than one $\sim$-class for $\caI$.
  Let $x$ be a variable of $\caI'$ with $|A_x|$ of maximal size and let $y$ be any other variable of $\caI'$. Note that according to the construction of $\caI'$ from $\caI$, both $x$ and $y$ are also variables of $\caI$ with $x \not\sim y$.  Furthermore, $|A_x|$ has maximal size amongst all of the domains of $\caI$ and so the relation $R_{yx}$ cannot be the graph of a surjective homomorphism from $\bbA_y$ to $\bbA_x$.  If it were, then it would be the graph of an isomorphism, contradicting that $x \not\sim y$.  Thus the congruence $\theta_{xy} \ne 0_{A_x}$.  Since $\bbA_x$ is subdirectly irreducible it follows that $\mu_x = \bigwedge_{y\ne x} \theta_{xy}$ is also not equal to $0_{A_x}$ and so $A_x$ is not prime in $\caI'$.
\end{proof}


\begin{lemma}\label{lem:isomorphism-reduction}
  Let $(\caI, \epsilon,f)$ be an input of $\CSP(\caA)$ and suppose that $f$ is $\epsilon/20$-close to satisfying $\caI$.
  Let $(\caI',\epsilon',f') = \Call{Isomorphism}{\caI,f}$.
  If $(\caI',\epsilon',f')$ is testable with $q(\epsilon')$ queries, then $(\caI,\epsilon,f)$ is testable with  $q(O(\epsilon))$ queries.
\end{lemma}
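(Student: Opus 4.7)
My plan is to show that Algorithm~\ref{alg:isomorphism-reduction} is a gap-preserving local reduction in the sense of Definition~\ref{def:gap-preserving-local-reduction}, so that Lemma~\ref{lmm:gap-preserving-local-reduction} delivers the stated query complexity. Two of the four conditions of that definition are immediate: $|V'|$ equals the number of $\sim$-classes of $\caI$ and is therefore at most $|V|$, and each query to $f'(u)$ can be answered by a single query to $f(x_u)$, so $c_2 = 1$.

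For the completeness direction, I would check that if $f$ satisfies $\caI$, then $f'$ satisfies $\caI'$, independently of the random choice of representatives $x_u$. Since each $R_{x_u s(u)}$ is the graph of an isomorphism (because $x_u \sim s(u)$), the value $f'(u) = R_{x_u s(u)}(f(x_u))$ lies in $A'_u = A_{s(u)}$, and $(f(x_u), f(x_{u'})) \in R_{x_u x_{u'}} = R'_{uu'}$ because $f$ already satisfies the original binary constraint.

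The crucial step is soundness: I want to show that if $\dist_\caI(f) \ge \epsilon$, then $\dist_{\caI'}(f') = \Omega(\epsilon)$ with probability at least $9/10$ over the sampling of the $x_u$. I will argue the contrapositive. Given a satisfying assignment $g'$ of $\caI'$ with $\dist(f', g') \le \epsilon'$, I lift it to an assignment $g$ of $\caI$ by $g(y) := R_{s([y])\, y}(g'([y]))$. The claim that $g$ satisfies $\caI$ splits into two cases. Within each $\sim$-class, $g$ is automatically $\sim$-consistent because it is defined via the coherent family of isomorphisms $R_{s([y])\, y}$. Between variables $y, z$ in different $\sim$-classes we have $(g'([y]), g'([z])) \in R'_{[y][z]} = R_{x_{[y]} x_{[z]}}$, and the 2-consistency of $\caI$ applied along the chain $y \sim x_{[y]} \to x_{[z]} \sim z$ (together with the fact that the $\sim$-relations are graphs of isomorphisms) transports this to $(g(y), g(z)) \in R_{yz}$.

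The main obstacle is bounding $\dist(f, g)$; this is where the hypothesis $\dist(f, f_{\maj}) \le \epsilon/20$, supplied by passing Algorithm~\ref{alg:consistency}, becomes essential. Let $B$ be the random set of $\sim$-classes $u$ with $f(x_u) \ne f_{\maj}(x_u)$ and set $W_B = \sum_{u \in B} \overline{\bw}(u)$. Since $x_u$ is drawn from $u$ with probability $\bw(x_u)/\overline{\bw}(u)$, we have $\Pr[u \in B] = \epsilon_u$, whence
\[
\E[W_B] \;=\; \sum_u \overline{\bw}(u)\,\epsilon_u \;=\; \sum_u \bigl(\overline{\bw}(u) - \overline{\bw}_{\maj}(u)\bigr) \;=\; \dist(f, f_{\maj}) \;\le\; \epsilon/20,
\]
and Markov's inequality forces $W_B \le \epsilon/2$ with probability at least $9/10$. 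On that event I split the sum $\dist(f,g) = \sum_y \bw(y)\,\mathbf{1}[f(y) \ne g(y)]$ by the cases (i) $[y] \in B$, (ii) $f(y) \ne f_{\maj}(y)$, and (iii) the good case where $[y] \notin B$ and $f(y) = f_{\maj}(y)$, in which the definitions give $f(y) = g(y)$ iff $f'([y]) = g'([y])$; this yields $\dist(f,g) \le W_B + \dist(f, f_{\maj}) + \dist(f', g') = O(\epsilon)$. Absorbing the constant into $c_1$ and appealing to Lemma~\ref{lmm:gap-preserving-local-reduction} then completes the proof.
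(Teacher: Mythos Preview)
Your argument is correct and follows the same overall framework as the paper---show that the isomorphism reduction is a gap-preserving local reduction, and lift a satisfying $g'$ back to $g$ via the class isomorphisms---but the key probabilistic step is handled differently. The paper controls the random quantity $\dist(f,f') = \sum_u \sum_{x \in u:\, f'(u) \ne R_{xs(u)}(f(x))} \bw(x)$ directly: it bounds $\E[\dist(f,f')] \le 2\sum_u \overline{\bw}(u)\epsilon_u$ and $\Var[\dist(f,f')] \le 2\sum_u \overline{\bw}(u)^2\epsilon_u$, then applies Chebyshev's inequality to force $\dist(f,f') < \epsilon/2$ with high probability, and finishes with the triangle-type inequality $\dist(f',g') \ge \dist(f,g) - \dist(f,f')$. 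You instead isolate the simpler random variable $W_B$ (the total weight of classes whose sampled representative misses the majority value), observe that $\E[W_B] = \sum_u \overline{\bw}(u)\epsilon_u = \dist(f,f_{\maj}) \le \epsilon/20$ exactly, and use only Markov's inequality; your three-way split of $\dist(f,g)$ then replaces the paper's triangle inequality. Your route is more elementary---no variance computation, no Chebyshev---and yields clean explicit constants. Both proofs read the hypothesis as $\epsilon/20$-closeness to $\sim$-consistency (i.e., $\dist(f,f_{\maj}) \le \epsilon/20$), which is indeed how the surrounding Lemma~\ref{lem:isomorphism-reduction-with-preprocessing} uses it.
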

\begin{proof}
  We show that the reduction in Algorithm~\ref{alg:isomorphism-reduction} is a linear reduction.
  Let $\caI = (V, \set{A_x}, \set{R_{xy}}, \bw)$ and $\caI' = (V',\set{A'_x}, \set{R'_{xy}}, \bw')$ be the original instance and the reduced instance, respectively.

  Note that $|V'| \leq |V|$ and we can determine the value of $f'(u)$ by querying $f(x_u)$.
  Also, if $f$ satisfies $\caI$, then it is clear that $f'$ satisfies $\caI'$.

  We want to show that, if $f$ is far from satisfying $\caI$, then $f'$ is also far from satisfying $\caI'$ with high probability.
  To this end, we first show that the following quantity is small with high probability:
  \[
    \dist(f,f') := \sum_{u:\sim\text{-class}}\sum_{\substack{x \in u: \\ f'(u) \neq R_{xs(u)}(f(x))}}\bw(x).
  \]

  For a $\sim$-class $u$,
  we define
  \[
    \dist_u(f,f') := \sum_{\substack{x \in u: \\ f'(u) \neq R_{x s(u)}(f(x))}}\frac{\bw(x)}{\overline{\bw}(u)}.
  \]
  Note that we have $\dist(f,f') = \sum\limits_{u:\sim\text{-class}}\overline{\bw}(u)\dist_u(f,f')$.

  Then for any $\sim$-class $u$,
  \begin{align*}
    \E_{x_u}[\dist_u(f,f')]
    & =
    \sum_{b \in A_{s(u)}}\frac{\overline{\bw}(u,b)}{\overline{\bw}(u)}\left(1 - \frac{\overline{\bw}(u,b)}{\overline{\bw}(u)}\right)  \\
    & \leq
    \frac{\overline{\bw}_{\maj}(u)}{\overline{\bw}(u)}\left(1 - \frac{\overline{\bw}_{\maj}(u)}{\overline{\bw}(u)}\right) + \left(1 - \frac{\overline{\bw}_{\maj}(u)}{
  \overline{\bw}(u)}\right)\cdot 1 \\
    &\leq
    2\left(1- \frac{\overline{\bw}_{\maj}(u)}{\overline{\bw}(u)}\right) = 2\epsilon_u.
  \end{align*}
  Thus, $\E_{\set{x_u}_{u:\sim\text{-class}}}[\dist(f,f')]$ is equal to
  \begin{eqnarray*}
    \E_{\set{x_u}}\Bigl[\sum_{u:\sim\text{-class}}\overline{\bw}(u)\dist_u(f,f')\Bigr]
    \leq
    \sum_{u:\sim\text{-class}} 2\overline{\bw}(u)\epsilon_u
    <
    \frac{\epsilon}{10}.
  \end{eqnarray*}
  Also, for any $\sim$-class $u$, $\Var_{x_u}[\dist_u(f,f')]$ is at most
  \begin{eqnarray*}
    \E_{x_u}[\dist_u(f,f')^2]
    \leq
    \E[\dist_u(f,f')]
    \leq
    2 \epsilon_u.
  \end{eqnarray*}
    Here we have used the fact that $0 \leq \dist_u(f,f') \leq 1$.


    Since random variables $\{\dist_u(f,f')\}_{u:\sim\text{-class}}$ are independent, we have
  \begin{align*}
    \Var_{\set{x_u}}[\dist(f,f')] & =  \Var_{\set{x_u}}[\sum_{u:\sim\text{-class}}\overline{\bw}(u)\dist_u(f,f')]
    \leq
    \sum_{u:\sim\text{-class}} \overline{\bw}(u)^2 \cdot 2\epsilon_u
    \leq
    \sum_{u:\sim\text{-class}} 2 \overline{\bw}(u) \epsilon_u
    \leq
    \frac{\epsilon}{10}.
  \end{align*}
  From Chebyshev's inequality,
  we have $\Pr_{\set{x_u}}[\dist(f,f') \geq \epsilon/2] \leq \Pr_{\set{x_u}}[|\dist(f,f') - \epsilon/10| \geq 4 \cdot \epsilon/10 ] \leq  1/16$.

  Let $g'$ be a satisfying assignment for $\caI'$ closest to $f'$.
  We define an assignment $g$ for $\caI$ as $g(x) =  R_{s([x])x}g'([x])$.
  It is clear that $g$ is a satisfying assignment.
  Since we have $\dist(f,f') + \dist(f',g') \geq \dist(f,g) \geq \epsilon$,
  it follows that $\Pr[\dist(f',g') \geq \epsilon / 2] \geq 15/16$.

  To summarize, the isomorphism reduction is a gap-preserving local reduction with $t(n) \leq n$, $c_1 = 1$, and $c_2 = 2$.
\end{proof}

\begin{algorithm}[t!]
  \caption{}\label{alg:isomorphism-reduction-with-preprocessing}
  \begin{algorithmic}[1]
  \Procedure{$\mbox{Isomorphism}^\prime$}{$\caI,\epsilon,f$}
    \If{\Call{Consistency}{$\caI, \epsilon/20,f$} rejects}
    \State{Reject.}
    \Else
    \State{\Return \Call{Isomorphism}{$\caI,\epsilon,f$}}
    \EndIf
  \EndProcedure
  \end{algorithmic}
\end{algorithm}

Finally, we combine Algorithm~\ref{alg:consistency} and Algorithm~\ref{alg:isomorphism-reduction}.
to produce Algorithm~\ref{alg:isomorphism-reduction-with-preprocessing} and make use of it in the following.
\begin{lemma}\label{lem:isomorphism-reduction-with-preprocessing}
  Let $(\caI, \epsilon,f)$ be an input of $\CSP(\caA)$ and suppose that $\Call{Isomorphism}{\caI,f}$ returned another instance $(\caI',\epsilon',f')$.
  If $(\caI',\epsilon',f')$ is testable with $q(\epsilon')$ queries, then $(\caI,\epsilon,f)$ is testable with  $q(O(\epsilon))$ queries.
\end{lemma}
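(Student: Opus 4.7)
The strategy is to compose the two procedures exactly as displayed in Algorithm~\ref{alg:isomorphism-reduction-with-preprocessing}: first run the $\sim$-consistency tester with parameter $\epsilon/20$, tuned so that its individual success probability is at least $5/6$; reject if it rejects; otherwise feed the output $(\caI',\epsilon',f')$ of the isomorphism reduction into the assumed $q(\epsilon')$-tester and return its answer. The overall query count is $O(1/\epsilon^2)+q(\epsilon/2)$, which is of the claimed form $q(O(\epsilon))$.

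For completeness, a satisfying assignment $f$ is automatically $\sim$-consistent, since for every $x\sim y$ the binary relation $R_{xy}$ is the graph of an isomorphism that $f$ must respect. Lemma~\ref{lem:consistency} therefore guarantees that the consistency phase accepts with probability at least $5/6$, and the argument already recorded in the proof of Lemma~\ref{lem:isomorphism-reduction} shows that $f'$ satisfies $\caI'$, so the inner tester accepts with probability at least $5/6$. A union bound then gives overall acceptance probability at least $2/3$.

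The main care is needed for soundness. Suppose $f$ is $\epsilon$-far from satisfying $\caI$. If $f$ is moreover $\epsilon/20$-far from $\sim$-consistency, Lemma~\ref{lem:consistency} rejects with probability at least $5/6$, and we are done. Otherwise $\dist(f,f_{\maj}) = \sum_{u} \overline{\bw}(u)\epsilon_u \le \epsilon/20$. Here lies the main subtlety: the stated hypothesis of Lemma~\ref{lem:isomorphism-reduction} is closeness to \emph{satisfying} $\caI$, not merely to $\sim$-consistency. However, inspecting that proof reveals that the satisfaction hypothesis is used \emph{only} to derive exactly this $\sim$-consistency bound, via the observation that every satisfying assignment is $\sim$-consistent and hence $\dist(f,f_{\maj}) \le \dist_{\caI}(f)$. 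Thus the same Chebyshev calculation yields $\dist(f,f') < \epsilon/2$ with probability at least $15/16$, and combining this with $\dist_{\caI}(f)\ge \epsilon$ through the pull-back construction of a satisfying assignment of $\caI$ from any satisfying assignment of $\caI'$ forces $\dist_{\caI'}(f') \ge \epsilon/2$; the inner tester then rejects with probability at least $5/6$, and a final union bound completes the argument. The key insight is simply recognising that Algorithm~\ref{alg:consistency} supplies exactly the ingredient that Lemma~\ref{lem:isomorphism-reduction} actually consumes, which is what allows us to invoke it even when $f$ is far from satisfying $\caI$.
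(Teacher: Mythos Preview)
Your proof is correct and follows essentially the same approach as the paper's, which also splits into the two cases according to whether $f$ is $\epsilon/20$-close to $\sim$-consistency and then invokes Lemmas~\ref{lem:consistency} and~\ref{lem:isomorphism-reduction} respectively. You add a useful observation---that the stated hypothesis of Lemma~\ref{lem:isomorphism-reduction} (closeness to satisfying $\caI$) is stronger than what its proof actually uses (closeness to $\sim$-consistency)---and this is precisely what the paper's own proof of Lemma~\ref{lem:isomorphism-reduction-with-preprocessing} silently relies on when it invokes Lemma~\ref{lem:isomorphism-reduction} under the weaker hypothesis.
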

\begin{proof}
  Consider Algorithm~\ref{alg:isomorphism-reduction-with-preprocessing}.
  If $f$ satisfies $\caI$, then the $\sim$-consistency test always accepts, and hence we always accept with probability $2/3$ from Lemma~\ref{lem:isomorphism-reduction}.
  Suppose that $f$ is $\epsilon$-far from satisfying $\caI$.
  If $f$ is $\epsilon/20$-far from satisfying $\sim$-consistency, then the $\sim$-consistency test rejects with probability at least $2/3$.
  If $f$ is $\epsilon/20$-close to satisfying $\sim$-consistency, then we reject with probability at least $2/3$ by Lemma~\ref{lem:isomorphism-reduction}.
\end{proof}

\subsection{Putting things together}

Combining the reductions introduced so far we can design a shrinking reduction, which shrinks the maximum size of the domains of an instance of $\CSP(\caA)$.

\begin{algorithm}
  \caption{}\label{alg:domain-shrink}
  \begin{algorithmic}[1]
  \Procedure{Shrink}{$\caI,\epsilon,f$}
    \State{$(\caI,\epsilon,f) \leftarrow \Call{Factor}{\caI,\epsilon,f}$.}
    \State{$(\caI,\epsilon,f) \leftarrow \Call{Split}{\caI,\epsilon,f}$.}
    \If{\Call{$\mbox{Isomorphism}^\prime$}{$\caI,\epsilon,f$} rejects}
    \State{Reject.}
    \Else
    \State{$(\caI,\epsilon,f) \leftarrow$ the input returned by \Call{$\mbox{Isomorphism}^\prime$}{}.}
    \EndIf
    \State{$(\caI,\epsilon,f) \leftarrow \Call{Factor}{\caI,\epsilon,f}$.}
    \State{\Return $(\caI,\epsilon,f)$.}
  \EndProcedure
  \end{algorithmic}
\end{algorithm}

\begin{lemma}\label{lem:reduction}
  Let $(\caI,\epsilon,f)$ be an input of $\CSP(\caA)$, and suppose that \Call{Shrink}{$\caI,\epsilon,f$} returned another instance $(\caI',\epsilon',f')$.
  If we can test $(\caI',\epsilon',f')$ with $q(\epsilon')$ queries, then we can test $(\caI,\epsilon,f)$ with $q(O(\epsilon))$ queries.
  Moreover, the reduction reduces the maximum size of a domain of the given input, if this maximum is greater than one and the reduced instance has more than one variable.
\end{lemma}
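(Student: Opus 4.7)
The plan is to prove the two assertions of the lemma separately. Assertion (a), the transfer of query complexity, will be obtained by composing the gap-preserving local reductions already established for each sub-routine of \Call{Shrink}{}. Assertion (b), the strict decrease of the maximum domain size, will follow from the careful pairing of \Call{Isomorphism^\prime}{} with the final \Call{Factor}{}: the former turns any maximum-size domain into a non-prime one (Lemma~\ref{shrink}), and the latter then strictly shrinks it. The main subtlety, and the ``hard part'' of the argument, is (b); although each individual sub-routine only weakly affects domain sizes, we must argue that the composition of four of them produces a guaranteed strict reduction under the stated hypothesis.

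For (a), observe that \Call{Shrink}{} is the sequential composition of \Call{Factor}{}, \Call{Split}{}, \Call{Isomorphism^\prime}{}, and a second \Call{Factor}{}. Each has been shown to be a gap-preserving local reduction with $t(n)\le n$ and with constants $c_1$, $c_2$ bounded in terms of $|A|$: the two \Call{Factor}{} calls contribute only domain quotients with $c_1=c_2=1$, \Call{Split}{} is a gap-preserving local reduction by Lemma~\ref{lem:si-reduction} with $c_2 = 2^{|A|}$, and \Call{Isomorphism^\prime}{} is handled by Lemma~\ref{lem:isomorphism-reduction-with-preprocessing}. Composing these reductions in sequence yields a gap-preserving local reduction whose composite constants remain $O(1)$ (depending only on $|A|$), so applying Lemma~\ref{lmm:gap-preserving-local-reduction} delivers the claimed $q(O(\epsilon))$ query bound. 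The one branch that needs attention is when the internal $\sim$-consistency test of \Call{Isomorphism^\prime}{} rejects; in that case \Call{Shrink}{} itself rejects, which is exactly the two-step argument already packaged into Lemma~\ref{lem:isomorphism-reduction-with-preprocessing}.

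For (b), suppose that the input instance has maximum domain size greater than one and that the reduced instance has more than one variable. The first call to \Call{Factor}{} makes every domain prime and does not increase any domain's size. The subsequent call to \Call{Split}{} makes every domain either trivial or subdirectly irreducible while preserving primeness, and again does not increase any domain's size. The resulting instance now satisfies the hypotheses of Lemma~\ref{shrink}: after \Call{Isomorphism}{} contracts each $\sim$-class to a chosen representative, and provided the result still has more than one variable, every domain of \emph{maximal} size fails to be prime. The final call to \Call{Factor}{} then quotients each such non-prime domain $\bbA_x$ by the nontrivial congruence $\mu_x = \bigwedge_{y\ne x}\theta_{xy}$, strictly reducing its size. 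Since no step of \Call{Shrink}{} can increase any domain's size, the overall maximum strictly decreases, which is the desired conclusion.
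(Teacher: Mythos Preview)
Your proposal is correct and follows essentially the same approach as the paper. The paper's own proof is extremely terse (two sentences, appealing only to Lemma~\ref{shrink} and the non-increase of domain sizes), and you have simply spelled out the composition of reductions for part (a) and the chain of implications for part (b) more explicitly; there is no substantive difference in strategy.
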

\begin{proof}
  We note that at each step of the algorithm, the domains of the instances that are produced are no larger than the domains of the original instance.  Furthermore, if any of the domains of the  original instance has size greater than one, then it follows from Lemma~\ref{shrink} that the maximal size of the domains of the output instance will be smaller than that of the original instance, as long as the output instance has more than one variable.
\end{proof}

\begin{theorem}
  Let $\bfA$ be a structure that has majority and Maltsev polymorphisms.
  Then, $\CSP(\bfA)$ and $\eCSP(\bfA)$ are constant-query testable with one-sided error.
\end{theorem}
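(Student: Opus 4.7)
The plan is to piece together the reductions established in the previous subsections. By Lemma~\ref{lem:reduction-from-ecsp-to-csp}, since a majority operation is a $3$-ary near-unanimity polymorphism, it suffices to test $\CSP(\bfA)$; so I would first reduce to the $\CSP(\bfA)$ case and then, using the fact that the variety $\caA$ is arithmetic, reduce further to $\CSP(\caA)$ with binary constraints and $2$-consistency. This preprocessing is standard and does not affect the query complexity up to constants.

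I would then argue that iteratively applying \Call{Shrink}{} a bounded number of times drives the instance to a trivial one. Specifically, by Lemma~\ref{lem:reduction}, each call either rejects outright, or produces an instance whose maximum domain size is strictly smaller than that of the input (unless the output has a single variable or all domains have size one). Since the initial maximum domain size is at most $|A|$, after at most $|A|$ iterations one reaches an instance that is trivial in the sense that either it has a single variable, or every variable's domain has size one. In either case, the associated assignment testing problem is constant-query testable with one-sided error: if $|V|=1$ we just query that variable and check membership in the (single unary) constraint; if every $|A_x|=1$ then the unique assignment is the only candidate and any satisfying assignment is trivial to verify by a constant number of queries.

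To conclude, I would compose these reductions using Lemma~\ref{lmm:gap-preserving-local-reduction}. Each invocation of \Call{Shrink}{} is a gap-preserving local reduction with $t(n)\le n$, constant $c_2$, and only a constant blowup of the farness parameter (specifically, $\epsilon$ is divided by a factor depending on $|A|$ coming from \Call{Split}{} and \Call{$\mbox{Isomorphism}^\prime$}{}, while \Call{Factor}{} preserves $\epsilon$ exactly). Composing at most $|A|$ such reductions still results in a gap-preserving local reduction with constant $c_1$, $c_2$ and $t(n)\le n$, so Lemma~\ref{lmm:gap-preserving-local-reduction} yields a constant-query tester. One-sidedness is preserved throughout: \Call{Factor}{}, \Call{Split}{}, and \Call{Isomorphism}{} all satisfy condition (2) of Definition~\ref{def:gap-preserving-local-reduction} deterministically, and \Call{Consistency}{} accepts with probability $1$ on $\sim$-consistent assignments, so no false rejections are ever introduced.

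The main obstacle is verifying that the iteration genuinely terminates with the correct query bound. The delicate point is that after \Call{Isomorphism}{} one must reinvoke \Call{Factor}{} (as Lemma~\ref{shrink} only guarantees non-primeness of a maximal-size domain, not direct size reduction); only then does the subsequent \Call{Split}{} step shrink the maximum domain. I would therefore be careful to track, across a single \Call{Shrink}{} call, that the combined effect of factoring after isomorphism reduction strictly decreases the maximum domain size whenever the resulting instance is non-trivial, ensuring the loop terminates within $|A|$ rounds and the final tester has query complexity depending only on $|A|$ and $\epsilon$.
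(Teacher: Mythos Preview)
The proposal is correct and follows essentially the same approach as the paper: reduce $\eCSP$ to $\CSP$ via Lemma~\ref{lem:reduction-from-ecsp-to-csp}, then iterate \Call{Shrink}{} at most $|A|$ times until the instance is trivial, composing the resulting gap-preserving local reductions. One minor imprecision: in your final paragraph you attribute the domain-size drop to ``the subsequent \Call{Split}{} step,'' but it is the trailing \Call{Factor}{} inside \Call{Shrink}{} that shrinks the maximal non-prime domain guaranteed by Lemma~\ref{shrink}; this is already encapsulated in Lemma~\ref{lem:reduction}, so your argument stands regardless.
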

\begin{proof}
  By applying the shrinking reduction at most $|A|$ times, we get an instance for which every variable has a domain of size one or which has only one variable.  In either case, the testing becomes trivial.
\end{proof}


\section{Non Constant-Query Testability}\label{sec:non-cp}

In this section we consider structures $\bfA$ that do not have a majority polymorphism or do not have a Maltsev polymorphism.  As noted in the previous section, this is the same as the variety $\caV(\Alg(\bfA))$ failing to be arithmetic.  For such structures we will show that $\CSP(\bfA)$ is not constant-query testable.

From~\cite{Ho-McK} we know that for a structure $\bfA$, having both  majority and Maltsev polymorphisms is equivalent to $\caV(\Alg(\bfA))$ being congruence meet semidistributive and congruence permutable.

 First suppose  that $\caV(\Alg(\bfA))$ is not congruence meet semidistributive.
We  observe that $\CSP(\bfA')$ will be sublinear-query testable if $\CSP(\bfA)$ is, where $\bfA'$ is obtained from $\bfA$ by adding all the unary constant relations (see Lemma~5 of~\cite{Bhattacharyya:2013fa}).
Although the original proof of the Lemma only considers the unweighted Boolean case, it is straightforward to generalize it to the weighted finite domain case, and we do not repeat it here.
By adding all of the unary constant relations to $\bfA$ to produce $\bfA'$ it follows that the variety $\caV(\Alg(\bfA'))$ is idempotent and  will also not be congruence meet semidistributive.
For such a structure, it is known that testing $\CSP(\bfA')$ requires a linear number of queries~\cite{Yoshida:2014zn}, and hence testing $\CSP(\bfA)$ will also require a linear number of queries.

From the argument above, in order to complete the proof of Theorem~\ref{the:non-arith} it suffices to show that $\CSP(\bfA)$ is not constant-query testable when $\bfA$ does not have a Maltsev polymorphism (or equivalently when $\caV(\Alg(\bfA))$ is not congruence permutable).
We use the following fact. 
\begin{lemma}\label{lem:non-cp->gamma}
  Let $\bfA$ be a relational structure that does not have a Maltsev polymorphism.
  Then, there is some finite algebra $\bbB$ in $\caV(\Alg(\bfA))$ and some subuniverse $\gamma$ of $\bbB^2$ such that there are elements  $0$ and $1\in B$ with $(0,0),(0,1),(1,1) \in \gamma$ and $(1,0) \not \in \gamma$.
\end{lemma}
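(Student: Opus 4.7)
The plan is to apply Maltsev's classical characterization of congruence permutable varieties via a free-algebra argument. Let $\caV = \caV(\Alg(\bfA))$. Since the polymorphisms of $\bfA$ coincide with the term operations of $\Alg(\bfA) = (A; \Pol(\bfA))$, which in turn realize the terms of $\caV$ in $\Alg(\bfA)$, the hypothesis that $\bfA$ has no Maltsev polymorphism is equivalent to the statement that $\caV$ has no Maltsev term, i.e., no ternary term $p$ satisfies $\caV \models p(y,x,x) \approx y \approx p(x,x,y)$.

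Next, I would pass to the free algebra $\bbF = \bbF_{\caV}(a, b)$ on two generators in $\caV$. Since $\Alg(\bfA)$ is finite, $\caV$ is locally finite, hence $\bbF$ is finite; and by definition $\bbF \in \caV$. Define $\gamma$ to be the subuniverse of $\bbF \times \bbF$ generated by the three pairs $(a,a),\ (a,b),\ (b,b)$. Trivially $(a,a), (a,b), (b,b) \in \gamma$. Setting $\bbB := \bbF$, $0 := a$, $1 := b$, it remains only to verify that $(b,a) \notin \gamma$.

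For this verification, I would argue by contradiction. If $(b,a)$ were in $\gamma$, there would exist a ternary term $t$ in the language of $\caV$ with $t^{\bbF \times \bbF}((a,a),(a,b),(b,b)) = (b,a)$. Reading this coordinatewise yields $t^{\bbF}(a,a,b) = b$ and $t^{\bbF}(a,b,b) = a$. Because $a$ and $b$ are free generators of $\bbF$, each of these equalities lifts to an identity of $\caV$, namely $\caV \models t(x,x,y) \approx y$ and $\caV \models t(x,y,y) \approx x$. Renaming the variables in the second identity gives $\caV \models t(y,x,x) \approx y$, so $t$ witnesses the Maltsev identities in $\caV$. Interpreting $t$ in $\Alg(\bfA)$ then yields a Maltsev polymorphism of $\bfA$, contradicting the hypothesis. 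The only subtle point is the interplay between polymorphisms of $\bfA$, term operations of $\Alg(\bfA)$, and identities of $\caV$; beyond this bookkeeping the argument is essentially the standard proof of Maltsev's theorem, and I do not foresee any serious obstacle.
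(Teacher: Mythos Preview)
Your argument is correct. It differs from the paper's proof, which invokes Maltsev's theorem in the form ``no Maltsev term $\Rightarrow$ $\caV$ is not congruence permutable'' and then picks a finite $\bbB \in \caV$ with congruences $\alpha$, $\beta$ satisfying $\alpha \circ \beta \not\subseteq \beta \circ \alpha$; the desired relation is $\gamma = \alpha \circ \beta$, which is reflexive and a subuniverse of $\bbB^2$, and any $(0,1) \in (\alpha \circ \beta) \setminus (\beta \circ \alpha)$ furnishes the two elements. Your route instead works directly inside the free algebra $\bbF_{\caV}(a,b)$ and takes $\gamma$ to be the subalgebra of $\bbF^2$ generated by $(a,a),(a,b),(b,b)$; the key step $(b,a)\notin\gamma$ is exactly the free-algebra computation underlying Maltsev's theorem. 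Your approach is slightly more self-contained (it makes the finiteness of $\bbB$ explicit via local finiteness and avoids citing the congruence-permutability characterization), while the paper's approach ties the lemma to the congruence-theoretic picture used elsewhere in the paper. Both are standard and of comparable length.
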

\begin{proof}
Since $\bfA$ does not have a Maltsev polymorphism, then $\caV(\Alg(\bfA))$ is not congruence permutable and so there is some finite algebra $\bbB \in \caV(\Alg(\bfA))$  having congruences  $\alpha$ and $\beta$ such that $\alpha \circ \beta \neq \beta \circ \alpha$.  We may assume that $\alpha \circ \beta \not\subseteq\beta \circ \alpha$ and so there will be elements $0$, $1 \in B$ with $(0,1) \in \alpha \circ \beta$ but $(1,0) \notin \alpha \circ \beta$.  Since $\alpha \circ \beta$ is a reflexive relation, then setting $\gamma = \alpha \circ \beta$ works.
\end{proof}

We now establish a super-constant lower bound for $\CSP((B;\gamma))$ for $B$ and $\gamma$ as in  Lemma~\ref{lem:non-cp->gamma}.
Although the argument is similar to a super-constant lower bound for monotonicity testing given in~\cite{Fischer:2002ev}, we present it here  for completeness.

Let $G=(V; E)$ be an undirected graph and let $M \subseteq E$ be a matching in $G$, i.e., no two edges in $M$ have a vertex in common.
Let $V(M)$ be the set of the endpoints of edges in $M$.
A matching $M$ is called \emph{induced} if the subgraph induced by $M$ contains only the edges of $M$.
A bipartite graph $G=(X,Y;E)$ is called \emph{$(s,t)$-Ruzs{\'a}-Szemer{\'e}di} if its edge set can be partitioned into at least $s$ induced matchings $M_1,\ldots,M_s$, each of size at least $t$.

\begin{lemma}[Theorem 16 of~\cite{Fischer:2002ev}]\label{lem:RS-graph}
  There exist an $(n^{\Omega(1/\log \log n)}, n/3-o(n))$-Ruzs{\'a}-Szemer{\'e}di graphs $G = (X, Y; E)$ with $|X|=|Y|=n$.
\end{lemma}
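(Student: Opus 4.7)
The plan is to use the classical construction of Ruzsá-Szemerédi graphs based on dense subsets of integers that avoid three-term arithmetic progressions. First, I would invoke Behrend's construction to obtain a 3-AP-free subset $S \subseteq [N]$ of size $|S| \geq N \cdot 2^{-O(\sqrt{\log N})}$, which in fact provides far more elements than the $n^{\Omega(1/\log \log n)}$ bound we need on the number of induced matchings; even weaker, more elementary lower bounds on $|S|$ would suffice.

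Next, I would set $n \approx 3N$ and take $X = Y = [n]$. For each $s \in S$, define a matching $M_s$ whose edges encode the shift by $s$: for instance, pairs $(i, i+s)$ with $i$ ranging over a block of roughly $N$ consecutive values chosen so that both endpoints fall in the appropriate intervals of $[n]$ and so that each $M_s$ has $N - o(N) \geq n/3 - o(n)$ edges. Declaring $E = \bigcup_{s \in S} M_s$ makes the matchings automatically pairwise edge-disjoint, since for any edge $(i,j)$ of $E$ the difference $j - i$ uniquely identifies the value $s \in S$ to which it belongs, giving an edge-partition of $E$ into $|S|$ matchings of the required size.

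The main obstacle, and the only nontrivial step, is verifying that each $M_s$ is an \emph{induced} matching: no edge of $E$ joins the endpoints of two distinct edges of $M_s$. I would argue by contradiction. Suppose a stray edge $(i, j') \in M_{s'}$ (with $s' \neq s$) joins the left endpoint of $(i, j) \in M_s$ to the right endpoint of $(i', j') \in M_s$, where $j = i+s$, $j' = i'+s$, and $j' - i = s'$. Then $s' - s = i' - i$ and $s' + s = j' + j - 2i$, which, together with the membership of $s, s', s'' \in S$ for an appropriate third shift $s''$ read off from the configuration, produces a nontrivial three-term arithmetic progression among elements of $S$, contradicting its 3-AP-freeness. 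Combining these ingredients, the construction yields at least $|S| \geq n^{\Omega(1/\log \log n)}$ induced matchings each of size $n/3 - o(n)$, which is precisely the claimed $(n^{\Omega(1/\log \log n)}, n/3 - o(n))$-Ruzsá-Szemerédi graph.
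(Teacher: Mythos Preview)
The paper does not prove this lemma; it is quoted as Theorem~16 of~\cite{Fischer:2002ev}, so there is no in-paper argument to compare against. Your outline follows the classical Ruzs\'a--Szemer\'edi idea, but the construction you wrote down fails at the key step. With $M_s = \{(i, i+s)\}$ indexed by $s \in S$, the matching $M_s$ is \emph{not} induced: for $S = \{1,3\}$ (which is 3-AP-free) one has $(1,2), (3,4) \in M_1$, while $(1,4) \in M_3$ is an edge of the graph joining the left endpoint of the first pair to the right endpoint of the second. In your argument a single stray edge only yields two elements $s, s' \in S$ with $s' - s = i' - i$; the ``appropriate third shift $s''$'' you invoke does not exist unless you also assume the \emph{other} cross edge $(i', j)$ is present, and nothing forces it to be. Indeed, for every $M_s$ in your construction to be induced the elements of $S$ would have to lie at pairwise distance at least the block length, which forces $|S| = O(1)$.

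The standard Ruzs\'a--Szemer\'edi construction reverses the roles, taking $M_a = \{(a+s, a+2s) : s \in S\}$ indexed by the base point $a$; then a stray edge between two edges of $M_a$ genuinely forces $2s' - s$ (or $2s - s'$) into $S$, giving a nontrivial 3-AP with $s, s'$. However, this yields roughly $n$ matchings each of size $|S| = n\,e^{-O(\sqrt{\log n})}$, not $n^{\Omega(1/\log\log n)}$ matchings of size $n/3 - o(n)$. Since the application in Theorem~\ref{the:non-cp} needs matchings of \emph{linear} size for the $\Omega(1)$-farness step, and every 3-AP-free subset of $[N]$ has density $o(1)$ by Roth's theorem, the direct Behrend route cannot by itself deliver the stated parameters; one needs the actual construction from~\cite{Fischer:2002ev}.
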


\begin{theorem}\label{the:non-cp}
Let $\bfB = (B;\gamma)$ where $\gamma$ is a binary relation such that for some $0$, $1 \in B$, $(0,1)$, $(0,0$), and $(1,1) \in \gamma$ but $(1,0) \notin\gamma$.
  Then, $\CSP(\bfB)$ is not constant-query testable.
\end{theorem}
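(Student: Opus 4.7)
The plan is to adapt the Ruzs\'{a}--Szemer\'{e}di based lower bound for monotonicity testing from~\cite{Fischer:2002ev} to our relation $\gamma$, which acts like the $\leq$ relation on the two-element subset $\set{0, 1} \subseteq B$. For each large $n$, fix an $(s, t)$-Ruzs\'{a}--Szemer\'{e}di graph $G = (X, Y; E)$ via Lemma~\ref{lem:RS-graph} with $|X| = |Y| = n$, $s = n^{\Omega(1/\log \log n)}$, $t \geq n/3 - o(n)$, and induced matchings $M_1, \ldots, M_s$. Build the $\CSP(\bfB)$ instance $\caI_n$ with variable set $V = X \cup Y$, uniform weights $\bw(v) = 1/(2n)$, and one binary constraint $\langle (x, y), \gamma\rangle$ per edge $(x, y) \in E$. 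The plan is to exhibit two distributions $D_Y, D_N$ over assignments $V \to B$ such that $D_Y$ produces only satisfying assignments, $D_N$ is $\Omega(1)$-far from satisfying with high probability, yet no tester making $o(\sqrt{n})$ queries can reliably tell them apart---giving the desired super-constant lower bound.

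Both distributions share the randomness: pick $i \in [s]$ uniformly, and independently pick $b_j \in \set{0, 1}$ uniformly for each edge $(x_j, y_j) \in M_i$. In both, set $f(x) = 0$ for $x \in X \setminus V(M_i)$ and $f(y) = 1$ for $y \in Y \setminus V(M_i)$. Under $D_Y$ set $(f(x_j), f(y_j)) = (b_j, b_j)$, and under $D_N$ set $(f(x_j), f(y_j)) = (b_j, 1 - b_j)$. Because $M_i$ is \emph{induced}, every edge of $E$ not in $M_i$ has at most one endpoint in $V(M_i)$; a short case analysis then confirms that under $D_Y$ each constraint evaluates to one of $(0,0), (0,1), (1,1)$, all in $\gamma$. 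Under $D_N$, exactly the edges $(x_j, y_j) \in M_i$ with $b_j = 1$ produce the forbidden pair $(1, 0)$; by a Chernoff bound at least $t/3$ such edges occur with high probability. As the violated edges form a matching, each repair requires flipping at least one of two vertex-disjoint endpoints, so $\dist_{\caI_n}(f) \geq t/(6n) = \Omega(1)$ with high probability.

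The indistinguishability argument will rest on the observation that $D_Y$ and $D_N$ have \emph{identical one-variable marginals and identical two-variable marginals}, except on pairs $(v_1, v_2) = (x_j, y_j)$ that form an edge of the (random) active matching. For single variables this is immediate since $b_j$ and $1 - b_j$ are both uniform on $\set{0,1}$. For pairs, the joint distribution factorizes into the product of the single-variable marginals except when both $v_1, v_2$ sit in the same matching edge of $V(M_i)$---in that one case $D_Y$ outputs the correlated pair $(b_j, b_j)$ while $D_N$ outputs the anti-correlated pair $(b_j, 1 - b_j)$. Hence a $q$-query tester can gain any distinguishing advantage only by querying both endpoints of some edge in the active matching. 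For a non-adaptive algorithm, the expected number of such ``revealing'' pairs queried is at most
\[
|E| \cdot \frac{1}{s} \cdot O\!\left(\frac{q^2}{n^2}\right) = O\!\left(\frac{t q^2}{n^2}\right) = O\!\left(\frac{q^2}{n}\right),
\]
since $|E| = st$ and each fixed edge belongs to the active matching with probability $1/s$. By Markov, the distinguishing probability is $o(1)$ whenever $q = o(\sqrt{n})$.

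The main obstacle will be to extend this bound to \emph{adaptive} testers. The plan is a standard hybrid/coupling argument along the query sequence: as long as no revealing pair has been fully queried, the equality of single- and two-variable marginals guarantees that the conditional distribution of each subsequent answer given the transcript is the same under $D_Y$ and $D_N$; therefore the tester's choice of next query is distributed identically under both models, and the non-adaptive pair-counting bound carries over. This turns the $o(\sqrt{n})$ bound into an unconditional lower bound for adaptive testers, and since $\sqrt{n} = \omega(1)$ we conclude that $\CSP(\bfB)$ is not constant-query testable.
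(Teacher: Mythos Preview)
Your approach---Ruzs\'{a}--Szemer\'{e}di graphs, matching-based yes/no distributions, and Yao's principle---is exactly the paper's strategy. Two points need fixing.

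First, the non-adaptive count is miscalculated. After invoking Yao's principle the tester is deterministic, so its query set $V'$ of size $q$ is \emph{fixed}; there is no ``probability $O(q^2/n^2)$ that both endpoints of a given edge are queried''. The correct bound is
\[
\E_i\bigl[\#\{e\in M_i:\ e\subseteq V'\}\bigr]
=\sum_{\substack{e\in E\\ e\subseteq V'}}\Pr_i[e\in M_i]
=\frac{|\{e\in E: e\subseteq V'\}|}{s}\le \frac{q^2}{4s},
\]
since the matchings partition $E$. This gives indistinguishability only for $q=o(\sqrt{s})=o\bigl(n^{\Omega(1/\log\log n)}\bigr)$, not $q=o(\sqrt n)$; as $s\ll n$ your claimed bound is strictly stronger and does not follow. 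The weaker bound is still super-constant, which is all the theorem needs.

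Second, the adaptive extension is the real gap. Under your coupling (same $i$, same $\{b_j\}$) the transcripts already diverge the first time you query any $y_j\in V(M_i)$, since $f(y_j)=b_j$ under $D_Y$ but $1-b_j$ under $D_N$; so the hybrid does not stay aligned ``until a revealing pair is queried''. One can build a smarter coupling that re-randomizes each answer to keep marginals aligned, but then bounding the probability that the \emph{adaptive} tester---whose query set is now correlated with $i$ through the answers---hits both endpoints of an $M_i$-edge requires a genuine argument; the non-adaptive pair count does not simply ``carry over''. The paper avoids this entirely with a one-line reduction: any $q$-query adaptive tester can be simulated by a $|B|^q$-query non-adaptive one, so to rule out constant-query testers it suffices to rule out constant-query non-adaptive testers. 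I would recommend replacing the hybrid sketch with that reduction.
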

\begin{proof}

  If $\CSP(\bfB)$ is testable with $q$ queries, then $\CSP(\bfB)$ is non-adaptively testable with $|A|^q$ queries.
  Hence, in order to show that $\CSP(\bfB)$ is not constant-query testable, it suffices to show that $\CSP(\bfB)$ is not constant-query testable non-adaptively.

  Let $G = (X,Y;E)$ be an $(s,n/3-o(n))$-Ruzs{\'a}-Szemer{\'e}di graph provided as in Lemma~\ref{lem:RS-graph}, where $s = n^{\Omega(1/\log\log n)}$.
  Then, we construct an instance $\caI = (V, \caC, \bw)$ of $\CSP(\bfB)$, where $V = X \cup Y$, $\caC = \set{\langle (x,y), \gamma  \rangle \mid (x,y) \in E}$, and $\bw(x) = 1/|V|$ for all $x \in V$.

  We use Yao's principle, which states that to establish a lower bound on the complexity of a randomized test, it is enough to present an input distribution on which any deterministic test with that complexity is likely to fail.
  Namely, we define distributions $D_P$, $D_N$ on positive (satisfying) and negative (far from satisfying) assignments, respectively.
  Our assignment distribution first chooses $D_P$ or $D_N$ with equal probability and then draws an assignment according to the chosen distribution.
  We show that every deterministic non-adaptive test with $q = o(\sqrt{s})$ queries has error probability larger than $1/3$ (with respect to the induced probability on assignments).

  We now define the distributions $D_P$ and $D_N$, as well as the auxiliary distribution $\widetilde{D}_N$.
  For $D_P$ and $D_N$, choose a random $i \in \set{1,\ldots,s}$ uniformly.
  For all variables $x \in X$ and $y \in Y$ outside of matching $M_i$, set $f(x) = 0$ and $f(y)=1$.
  For $D_P$, uniformly choose $f(x) = f(y) = 0$ or $f(x)=f(y)=1$ independently for all edges $(x,y) \in M_i$.
  For $\widetilde{D}_N$, uniformly choose $f(x) = 1-f(y) = 0$ or $f(x) = 1- f(y) = 1$ independently for all $(x,y) \in M_i$.

  Note that $D_P$ is supported only on positive assignments, but $\widetilde{D}_N$ is not supported only on negative assignments.
  However, for $n$ large enough, with probability more than $8/9$ at least $1/3$ of the constraints on the edges of $M_i$ are violated when the assignment is chosen according to  $\widetilde{D}_N$, making the assignment $\Omega(1)$-far from satisfying $\caI$.
  Denote the latter event by $A$ and define $\widetilde{D}_N|_A$, namely, $D_N$ is $\widetilde{D}_N$ conditioned on the event $A$.
  Note that for $\widetilde{D}_N$, a constraint is violated only if it belongs to $M_i$, since the matchings are induced.

  Given a deterministic non-adaptive test that makes a set $V'$ of $q$ queries, the probability that one or more of the edges of $M_i$ have both endpoints in $V'$ is at most $q^2/(4s)$ for both $D_P$ and $D_N$.
  This is because the matchings are disjoint, and the vertex set $V'$ induces at most $q^2/4$ edges of $G$.
  For $q = o(\sqrt{s})$, with probability more than $1-o(1)$, no edge of $M_i$ has both endpoints in $V'$.
  Conditioned on any choice of $i$ for which $M_i$ has no such edge, the distribution of $f|_{V'}$ is identical for both $D_N$ and $D_P$:
  every vertex outside of $M_i$ is fixed to $0$ if it is in $X$ and to $1$ if it is in $Y$, and the value of every other vertex is uniform and independent over $\set{0,1}$.
  Let $C(\phi)$ denote the set of assignments consistent with query answers $\phi:V' \to \set{0,1}$.
  Then, we have $\Pr_{D_P}[C(\phi) \mid \text{no edge in }M_i] = \Pr_{\widetilde{D}_P}[C(\phi) \mid \text{no edge in }M_i]$.
  For every tuple of answers $\phi$, the error probability under the above conditioning (with negative assignments chosen under $\widetilde{D}_N$ rather than $D_N$) is $1/2$.
  As the probability of the condition is at least $1-o(1)$, the overall error probability without the conditioning is at least $1/2-o(1)$.
  Since negative assignments are chosen under $D_N$, not $\widetilde{D}_N$, the success probability is $(1/2+o(1)) \cdot (\Pr[A]^{-1}) \leq (1/2+o(1)) \cdot 9/8 \leq 9/16+o(1)$.
  Thus, the error probability is $\geq 7/16-o(1)$.
\end{proof}







We can now prove the following theorem.
\begin{theorem}\label{the:non-arith}
   If the relational structure $\bfA$ does not have a majority polymorphism or does not have a Maltsev polymorphism, then $\CSP(\bfA)$ is not constant-query testable.
\end{theorem}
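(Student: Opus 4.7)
The plan is to split into two cases based on which polymorphism is missing, reusing the machinery assembled earlier in Section~\ref{sec:non-cp}. Recall from \cite{Ho-McK} that $\bfA$ having both majority and Maltsev polymorphisms is equivalent to $\caV(\Alg(\bfA))$ being congruence meet semidistributive and congruence permutable; so if either polymorphism is absent, at least one of these two conditions fails.

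First I would handle the case where $\caV(\Alg(\bfA))$ is not congruence meet semidistributive. This case has already been essentially completed in the paragraphs preceding the theorem: form $\bfA'$ by adjoining all unary constant relations to $\bfA$, note that the generalized form of Lemma~5 of~\cite{Bhattacharyya:2013fa} gives that constant-query testability of $\CSP(\bfA)$ would transfer to $\CSP(\bfA')$, observe that $\caV(\Alg(\bfA'))$ is idempotent and still fails to be congruence meet semidistributive, and invoke the linear lower bound for such idempotent structures from \cite{Yoshida:2014zn}. This contradicts constant-query testability of $\CSP(\bfA)$.

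The remaining case is when $\bfA$ has no Maltsev polymorphism. Here I would use Lemma~\ref{lem:non-cp->gamma} to extract a finite algebra $\bbB \in \caV(\Alg(\bfA))$ together with a binary subuniverse $\gamma$ of $\bbB^2$ that is non-rectangular in the sense required by Theorem~\ref{the:non-cp}: there exist $0,1 \in B$ with $(0,0),(0,1),(1,1) \in \gamma$ and $(1,0) \notin \gamma$. The relation $\gamma$ is preserved by the operations of $\bbB$ by construction, and $\bbB \in \caV(\Alg(\bfA))$, so Lemma~\ref{lem:reduction-for-csps} applies to give that constant-query testability of $\CSP(\bfA)$ would imply constant-query testability of $\CSP((B;\gamma))$. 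However, Theorem~\ref{the:non-cp} (the Ruzsa--Szemer\'edi argument adapted from \cite{Fischer:2002ev}) establishes a super-constant lower bound for $\CSP((B;\gamma))$, yielding a contradiction.

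The two cases together cover every structure failing the hypothesis of Theorem~\ref{the:intro-csp-main}(1), so the theorem follows. The only step with any subtlety is invoking Lemma~\ref{lem:reduction-for-csps} in the no-Maltsev case, since one has to be sure that the relation $\gamma$ (sitting inside $\bbB^2$ for $\bbB \in \caV(\Alg(\bfA))$ rather than inside a power of $\Alg(\bfA)$ itself) satisfies the preservation hypothesis of the lemma; this is immediate from $\gamma$ being a subuniverse of $\bbB^2$ and hence preserved by all operations of $\bbB$.
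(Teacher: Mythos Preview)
Your proposal is correct and follows essentially the same approach as the paper's own proof: split into the failure of congruence meet semidistributivity (handled by the discussion preceding Lemma~\ref{lem:non-cp->gamma}) versus the absence of a Maltsev polymorphism (handled by combining Lemma~\ref{lem:non-cp->gamma}, Theorem~\ref{the:non-cp}, and Lemma~\ref{lem:reduction-for-csps}). Your added remark about why Lemma~\ref{lem:reduction-for-csps} applies to the relation $\gamma$ is accurate and makes explicit a point the paper leaves implicit.
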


\begin{proof}
  As noted earlier, it suffices to establish  hardness under the assumption that $\caV(\Alg(\bfA))$ is not congruence meet semidistributive or that $\bfA$ does not have a Maltsev polymorphism.  The discussion prior to Lemma~\ref{lem:non-cp->gamma} handles the former case, while the combination of that Lemma with Theorem~\ref{the:non-cp} and Lemma~\ref{lem:reduction-for-csps} handles the latter case.
\end{proof}


\section{Sublinear-Query Testability}\label{sec:cd}

Let $\bfA = (A; \Gamma)$ be a relational structure having, for some $k \ge 2$ a $(k+1)$-ary near unanimity  polymorphism.
In this section, we will show that $\CSP(\bfA)$ and $\eCSP(\bfA)$ are sublinear-query testable.
From Lemma~\ref{lem:reduction-from-ecsp-to-csp}, it suffices only to consider $\CSP(\bfA)$.
We note that a sublinear-query tester for $\CSP(\bfA)$ is already known for the unweighted case~\cite{Bhattacharyya:2013fa}, and we will slightly modify their argument to handle weights.


Let $\caI = (V,\caC,\bw)$ be an instance of $\CSP(\bfA)$.
Since $\Gamma$ has a $(k+1)$-ary near-unanimity  polymorphism, we can assume that each constraint in $\caC$ has arity exactly $k$~\cite{Feder:1998iu}.
Hence, we can write $\caI $ as $(V,\set{R_S}_{S \subseteq V^k}, \bw)$ or simply $\caI = (V,\set{R_S},\bw)$.
Moreover, we can assume that $\caI$ is $k$-consistent.
Recall that a subset of variables $S \subseteq V$ of size $k$ is said to be \emph{violated} with respect to an assignment $f:V \to A$ if $f|_S \not \in R_S$.
We have the following fact.

\begin{lemma}\label{lem:many-violated-sets}
  If an assignment $f:V \to A$ is $\epsilon$-far from satisfying $\caI$,
  then there is a family $\caS$ of disjoint violated sets (that are contained in $V$) such that $\sum_{S \in \caS}\prod_{x \in S}\bw(x) \geq \frac{\epsilon^k}{k2^k  n^{k-1}}$.
\end{lemma}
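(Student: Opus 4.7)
The plan is to find a maximal disjoint family of violated $k$-subsets lying in a set of sufficiently heavy variables, and then to lower bound the product of weights within each such set by a thresholding argument on weights.

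First I would split the variables by weight, setting $V^* = \{x \in V : \bw(x) \geq \epsilon/(2n)\}$, so that the light variables have total weight $\bw(V \setminus V^*) < \epsilon/2$. Let $\caS$ be a maximal family of pairwise disjoint $k$-subsets of $V^*$ that are violated by $f$, and let $U = V^* \setminus \bigcup_{S \in \caS}S$. By the maximality of $\caS$, every $k$-subset $T \subseteq U$ satisfies $f|_T \in R_T$.

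The key step, which I expect to be the main obstacle, is to produce a satisfying assignment $g$ of $\caI$ that agrees with $f$ on $U$. For this, I would invoke the $k$-Helly property of $\caI$: because $\caI$ is $k$-consistent and $\bfA$ admits a $(k+1)$-ary near unanimity polymorphism, any locally consistent partial assignment (one that lies in $R_T$ on every $k$-subset $T$ of its domain) extends to a satisfying assignment. The subtlety is that the $k$-Helly property as stated in the preliminaries concerns only partial assignments on a single $k$-subset; to extend the larger partial assignment $f|_U$, one would argue that fixing the values $f(x)$ for $x \in U$ leaves a $k$-consistent reduced instance which may then be iteratively solved. This stronger extension property is a standard consequence of the $(k+1)$-ary near unanimity polymorphism (strict width $k$).

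Given such $g$, since $g$ and $f$ agree on $U$,
\[
\epsilon < \dist_\caI(f) \leq \dist(f,g) \leq \bw\Bigl(\bigcup_{S \in \caS}S\Bigr) + \bw(V \setminus V^*) < \bw\Bigl(\bigcup_{S \in \caS}S\Bigr) + \frac{\epsilon}{2},
\]
so by disjointness $\sum_{S \in \caS}\bw(S) = \bw(\bigcup_{S \in \caS}S) > \epsilon/2$. Since $|S| = k$ gives $\bw(S) \leq k\max_{x \in S}\bw(x)$, we obtain $\sum_{S \in \caS}\max_{x \in S}\bw(x) > \epsilon/(2k)$. Every $x \in S \subseteq V^*$ satisfies $\bw(x) \geq \epsilon/(2n)$, so for each $S \in \caS$,
\[
\prod_{x \in S}\bw(x) \geq \Bigl(\frac{\epsilon}{2n}\Bigr)^{k-1}\max_{x \in S}\bw(x).
\]
Summing over $S \in \caS$ then gives $\sum_{S \in \caS}\prod_{x \in S}\bw(x) > (\epsilon/(2n))^{k-1}\cdot\epsilon/(2k) = \epsilon^k/(k\cdot 2^k\cdot n^{k-1})$, the desired bound.
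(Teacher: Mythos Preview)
Your proposal is correct and follows essentially the same route as the paper: threshold the variables at weight $\epsilon/(2n)$, take a maximal disjoint family of violated $k$-sets inside the heavy variables (whose union is then a hitting set), use the $k$-Helly/strict-width extension property to produce a satisfying assignment agreeing with $f$ outside that union, and finish with the same product-of-weights estimate. Your explicit acknowledgment that the extension step requires the stronger ``any locally $k$-consistent partial assignment extends'' form of the Helly property (not just the single-$k$-set version stated in the preliminaries) is well taken; the paper invokes this implicitly.
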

\begin{proof}
  Let $V'$ be the set of variables $x$ such that $\bw(x) \geq \epsilon/(2n)$ and let $\caU$ be the family of violated sets $S \subseteq V'$ of size $k$.
  We say that a subset $H \subseteq V'$ is a \emph{hitting set} of $\caU$ if, for any subset $S \in \caU$, $H$ and $S$ intersect.
  We first observe that, for any hitting set $H \subseteq V'$ of $\caU$, the partial assignment $f|_{V' \setminus H}$ is extendable to a satisfying assignment.
  Indeed, if $f|_{V' \setminus H}$ is not extendable,
  then there must be a variable set $S \subseteq V' \setminus H$ with $|S| = k$ and $f|_S \not \in R_S$ from the Helly property.
  However, such a set $S$ must be contained in $\caU$,  a contradiction.

  Since $f$ is $\epsilon$-far and every variable in $V \setminus V'$ has a weight at most $\epsilon/(2n)$,
  we have $\sum_{x \in H}\bw(x) \geq \epsilon - \epsilon / (2n) \cdot n = \epsilon/2$.
  Then, we can take a family $\caS$ of disjoint violated sets such that $\sum_{S \in \caS}\sum_{x \in S}\bw(x) \geq \epsilon/2$.
  In particular, this means that
  \[
    \sum_{S \in \caS}\prod_{x \in S}\bw(x) \geq
    \Bigl(\frac{\epsilon}{2n}\Bigr)^{k-1}\sum_{S \in \caS} \max_{x \in S}\bw(x)
    \geq
    \frac{1}{k}\Bigl(\frac{\epsilon}{2n}\Bigr)^{k-1}\sum_{S \in \caS} \sum_{x \in S}\bw(x)
    \geq
    \frac{\epsilon}{2k} \Bigl(\frac{\epsilon}{2n}\Bigr)^{k-1}
    =
    \frac{\epsilon^k}{k2^k  n^{k-1}}.
    \qedhere
  \]
\end{proof}

Now we establish the main theorem of this section.
\begin{theorem}
  Let $\bfA$ be a relational structure that has a $(k+1)$-ary near unanimity polymorphism for some $k \geq 2$.
  Then, $\CSP(\bfA)$ (and hence $\eCSP(\bfA)$) is sublinear-query testable with one-sided error.
\end{theorem}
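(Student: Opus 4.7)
By Lemma~\ref{lem:reduction-from-ecsp-to-csp} it suffices to build a one-sided error sublinear-query tester for $\CSP(\bfA)$. My plan is to adapt the unweighted tester of~\cite{Bhattacharyya:2013fa} by replacing their uniform sampling with an importance-sampling step driven by the weight function $\bw$, and then combine this with Lemma~\ref{lem:many-violated-sets}. The preprocessing is standard: make the input instance $\caI = (V,\caC,\bw)$ $k$-consistent, so that every constraint can be taken to be exactly $k$-ary, and write $\caI = (V,\{R_S\}_{|S|=k},\bw)$; the $k$-Helly property then asserts that $f$ satisfies $\caI$ iff no $k$-subset $S \subseteq V$ is violated, which is exactly what the tester will look for.

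The tester I propose fixes a large constant $\alpha = \alpha(k)$, splits $V$ into $V_H := \{x : \alpha \bw(x) n^{1-1/k}/\epsilon \ge 1\}$ and $V_L := V\setminus V_H$, and sets $W := V_H \cup W_L$, where $W_L$ contains each $x \in V_L$ independently with probability $p(x) := \alpha \bw(x) n^{1-1/k}/\epsilon \le 1$. It then queries $f$ on all of $W$ and rejects iff $f|_S \notin R_S$ for some $k$-subset $S \subseteq W$. Completeness (one-sided) is immediate. For query complexity, $|V_H|\le \alpha n^{1-1/k}/\epsilon$ since weights sum to one, and $\E[|W_L|] = \sum_{x \in V_L} p(x) \le \alpha n^{1-1/k}/\epsilon$, so $\E[|W|] = O(n^{1-1/k}/\epsilon)$; a Chernoff bound upgrades this to a worst-case sublinear guarantee.

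For soundness, suppose $f$ is $\epsilon$-far from satisfying $\caI$. Apply Lemma~\ref{lem:many-violated-sets} to obtain a family $\caS$ of pairwise disjoint violated $k$-sets, contained in $\{x : \bw(x) \ge \epsilon/(2n)\}$, with $\sum_{S \in \caS} \prod_{x \in S} \bw(x) \ge \epsilon^k/(k 2^k n^{k-1})$. Because $V_H \subseteq W$ deterministically and the Bernoulli choices defining $W_L$ are independent across variables, the disjointness of the sets in $\caS$ makes the events $\{S \subseteq W\}_{S \in \caS}$ mutually independent, so
\[
  \Pr[\text{the tester accepts}] \;\le\; \prod_{S \in \caS} \bigl(1 - \Pr[S \subseteq W]\bigr) \;\le\; \exp\Bigl(-\sum_{S \in \caS}\Pr[S \subseteq W]\Bigr).
\]
For each $S \in \caS$, $\Pr[S \subseteq W] = \prod_{x \in S \cap V_L} p(x) = (\alpha n^{1-1/k}/\epsilon)^{|S \cap V_L|} \prod_{x \in S \cap V_L}\bw(x)$. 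Summing over $S$ and using Lemma~\ref{lem:many-violated-sets} should yield $\sum_{S \in \caS}\Pr[S \subseteq W] \ge 2$ once $\alpha$ is taken large enough, giving an acceptance probability below $e^{-2} < 1/3$, hence rejection with probability $\ge 2/3$.

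The main technical obstacle is the clean aggregation of these lower bounds across $S$ with and without heavy variables: when $S$ has a heavy coordinate, the formula $\prod p(x)$ loses a factor of $\alpha \bw(x) n^{1-1/k}/\epsilon \ge 1$ compared with the all-light case, and this factor could \emph{a priori} be arbitrarily large. The plan to deal with this is to refine the proof of Lemma~\ref{lem:many-violated-sets} so that either (i) we may assume $\caS$ contains enough sets lying entirely in $V_L$ that the purely light contribution alone sums to $\Omega(1/\alpha^k) \cdot \epsilon^k/(k 2^k n^{k-1})$ and the calculation goes through verbatim, or (ii) the heavy variables carry a constant fraction of the distance on their own and, since they are all queried deterministically, the $k$-Helly property already forces the tester to detect a violation among $V_H$. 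This dichotomy, combined with the Poisson-type bound above, should complete the argument.
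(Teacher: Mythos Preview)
Your tester is essentially the paper's: sample each variable $x$ with probability $\min(1,q\bw(x))$ for $q=\Theta(n^{1-1/k}/\epsilon)$, reject if some violated $k$-set appears in the sample, and analyze soundness through the disjoint family $\caS$ of Lemma~\ref{lem:many-violated-sets}. The paper does not separate heavy from light variables and simply writes $\Pr[S\subseteq W]=q^k\prod_{x\in S}\bw(x)$; you are right that this identity is only literally valid when every coordinate of $S$ is light, and the paper glosses over the point you flag.

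Your proposed dichotomy, however, does not close the gap. Take $k=2$, domain $\{0,1\}$, relation $\le$ (which has a majority polymorphism), the $2$-consistent instance with constraints $x_0\le x_i$ for $1\le i<n$, weights $\bw(x_0)=\tfrac12$ and $\bw(x_i)=\tfrac{1}{2(n-1)}$, and the assignment $f(x_0)=1$, $f(x_i)=0$. Then $f$ is $\tfrac12$-far, and \emph{every} violated pair contains the single heavy variable $x_0$, so any disjoint family $\caS$ produced by Lemma~\ref{lem:many-violated-sets} has exactly one element and contains no all-light set---case~(i) is vacuous. Meanwhile $V_H=\{x_0\}$ contains no $2$-subset, so there is no violation ``among $V_H$''---case~(ii) is vacuous too. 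Your tester actually succeeds on this input (it deterministically queries $x_0$ and with overwhelming probability some light $x_i$), but only because there are $n-1$ \emph{overlapping} violated pairs all sharing $x_0$; a disjoint family cannot witness this. A correct analysis must exploit such overlapping sets, e.g.\ by first substituting the observed values $f|_{V_H}$ into the instance and then re-running the hitting-set argument of Lemma~\ref{lem:many-violated-sets} on the residual constraints over $V_L$, where all remaining coordinates are light and the product formula for $\Pr[S\subseteq W]$ is genuinely valid.
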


\begin{proof}
  First, we describe our algorithm.
  Let $(\caI = (V,\caC,\bw),\epsilon,f)$ be an input of $\CSP(\bfA)$.
  We query each variable $x \in V$ with probability $q \cdot \bw(x)$,
  where $q = \Theta((\frac{k2^k  n^{k-1}}{\epsilon^k})^{1/k}) = \Theta(\frac{2}{\epsilon}k^{1/k}n^{(k-1)/k})$.
  If we query more than $100q$ times along the way, we immediately stop and accept.
  Suppose that the number of queries is at most $100q$.
  Then, we reject if there is some subset $S \subseteq V$ of size $k$ such that $f|_S \not \in R_S$,
  and we accept otherwise.
  The query complexity is $100q$, which is sublinear in $n$.

  It is easy to see that the algorithm always accepts if $f$ is a satisfying assignment (no matter whether we stopped as we have queried more than $100q$ times).

  Now, we see that the algorithm rejects with high probability when the input is $\epsilon$-far.
  From Markov's inequality, the query complexity is at most $100q$ with probability at least $\frac{99}{100}$.

  From Lemma~\ref{lem:many-violated-sets},
  there is a family $\caS$ of disjoint violated sets such that $\sum_{S \in \caS}\prod_{x \in S}\bw(x) \geq  \frac{\epsilon^k}{k2^k  n^{k-1}}$.
  Note that for each violated variable set $S \in \caS$,
  the probability that we do not find $S$ is $1 - \prod_{x \in S}(q \cdot \bw(x) ) = 1 - q^k \prod_{x \in S}\bw(x)$.
  Thus,
  because violated sets in $\caS$ are disjoint,
  the probability that we do not find any violated variable set  is at most
  \begin{align*}
    \prod_{S \in \caS}\left(1 - q^k\prod_{x \in S}\bw(x)\right)
    & \leq
    \prod_{S \in \caS}\exp\left(-q^k\prod_{x \in S}\bw(x)\right)
    =
    \exp\left(-q^k\sum_{S \in \caS}\prod_{x \in S}\bw(x)\right)
    \leq
    \exp\left(- q^k \frac{\epsilon^k}{k2^k  n^{k-1}} \right)
  \end{align*}
  If we choose the constant hidden in $q$ large enough,
  the probability above is bounded by $\frac{1}{100}$.
  Thus, with probability at least $\frac{98}{100}$,
  we reject the instance.
\end{proof}



\newcommand{\leval}{\mathsf{Lattice\mbox{-}Eval}}
\newcommand{\peval}{\mathsf{Pent\mbox{-}Eval}}

\newcommand{\eq}{\mathrm{Eq}}

\section{Non Sublinear-Query Testability}\label{sec:non-cd}

Let $\bfA$ be a relational structure that does not have, for any $k \ge 2$, a $(k+1)$-ary near unanimity polymorphism.
In this section, we show that $\eCSP(\bfA)$ is not sublinear-query testable with one-sided error.

Using Barto's proof of Z\'adori's Conjecture~\cite{Barto:2013uo}, we know that a finite relational structure $\bfA$ has a $(k+1)$-ary near unanimity polymorphism for some $k \ge 2$ if and only if the variety $\caV(\Alg(\bfA))$ is congruence distributive.  Furthermore, this condition is equivalent to $\caV(\Alg(\bfA))$ being congruence meet semidistributive and congruence modular~\cite{Ho-McK}.


From the argument in Section~\ref{sec:non-cp} we know that if $\caV(\Alg(\bfA))$ is not congruence meet semidistributive, then testing $\CSP(\bfA)$ and hence $\eCSP(\bfA)$ requires a linear number of queries.
Hence, to establish the main result of this section, Theorem~\ref{the:non-cm-ecsp}, it suffices to show that $\eCSP(\bfA)$ is not sublinear-query testable with one-sided error when $\caV(\Alg(\bfA))$ is not congruence modular.
The results in this section make use of ideas developed in~\cite{BovaChenValeriote13-generic} and~\cite{Chen:2015wh}.

A \emph{lattice} is an algebra $(L; \wedge, \vee)$
where $L$ is a domain, and each of the operations $\wedge, \vee$ is idempotent, commutative, and associative; and,
the absorption law $a \wedge (a \vee b) = a \vee (a \wedge b) = a$ holds.
A lattice naturally induces a partial order $\leq$
defined by $a \leq b$ if and only if $a \wedge b = a$.
A lattice is \emph{distributive} if
it satisfies the identity
$x \wedge (y \vee z) = (x \wedge y) \vee (x \wedge z)$.
A lattice is \emph{finite} if its domain is finite,
and is \emph{non-trivial} if its domain has size
strictly larger than $1$.
It is known that a finite lattice has a
\emph{bottom element} $\bot$
and a
\emph{top element} $\top$
such that for each element $a$, it holds that
$\bot \leq a \leq \top$.


Let $\bbL = (L; \meet, \join)$ be a finite lattice and let $D > 1$ be a constant.
We define $\leval(\bbL, D)$ to be the assignment problem
where an instance consists of the following:
\begin{itemize}
\itemsep=0pt
\item A circuit, on a variable set $V$,
over basis $\{ \wedge, \vee \}$
of depth less than $D + D \log_2 |V|$.
Here, $\wedge$ and $\vee$ are always assumed to have fan-in $2$.

\item An element $\ell \in L$.

\item A weight function $\bw: V \to [0,1]$.

\end{itemize}
The assignments associated to an instance are
the mappings from $V$ to $L$; such a mapping
$f: V \to L$ is considered to be \emph{satisfying}
if $C(f) \geq \ell$.
Here and in general,
when $C$ is a circuit on variable set $V$
and $f$ is an assignment defined on $V$, we use $C(f)$
to denote the result of evaluating $C$ under $f$.

\begin{lemma}\label{lem:non-cm-reduction-1}
There exists a constant $D > 1$ such that
testing $\leval((\set{ 0, 1 }, \set{\wedge, \vee}), D)$
(where $\ell = 1$) with one-sided error requires a linear number of queries.
\end{lemma}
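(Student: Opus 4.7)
The plan is to use a certificate-style argument for one-sided testing, applied to the Boolean majority function, translated into the $\leval$ setting via Valiant's construction of short monotone formulas for majority. By Valiant's theorem~\cite{Valiant:1984eu}, there is an absolute constant $c$ such that for every $n \ge 1$ the $n$-ary majority function admits a monotone $\{\wedge, \vee\}$-formula (with fan-in $2$) of depth at most $c(1 + \log_2 n)$; I set $D := c + 1$. For every odd $n$ I build an instance $\caI_n$ of $\leval((\{0,1\}, \{\wedge, \vee\}), D)$ with $V = \{x_1, \ldots, x_n\}$, circuit $C$ equal to Valiant's formula for majority on $V$, threshold $\ell = 1$, and uniform weights $\bw(x_i) = 1/n$; the depth bound $< D + D \log_2 |V|$ is immediate from the choice of $D$, with the additive $+D$ providing the slack needed to handle very small $n$.

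Fix a small constant $\epsilon > 0$ (for concreteness $\epsilon = 1/20$) and let $f$ be the assignment putting $0$ on $\lceil 0.6\,n \rceil$ of the variables and $1$ on the rest. Then $C(f) = 0$ so $f$ does not satisfy $\caI_n$. Any satisfying assignment $g$ has $C(g) = 1$, hence a strict majority of $1$s and at most $(n-1)/2$ zero-coordinates; converting $f$ into such a $g$ requires flipping at least $\lceil 0.6\,n \rceil - (n-1)/2 > 0.05\,n$ of the zero-positions of $f$, so under the uniform weights $\dist(f,g) > \epsilon$ for all sufficiently large $n$. Thus $f$ is $\epsilon$-far from satisfying $\caI_n$.

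I then invoke the one-sided property: a one-sided tester may reject only when the values it has observed form a certificate of unsatisfiability, i.e.\ when \emph{every} completion of the partial assignment fails to satisfy $C$. If the tester queries a set $Q \subseteq V$ and observes $z$ zeros among the answers, the completion that assigns $1$ to every variable in $V \setminus Q$ has $n - z$ ones, which exceeds the majority threshold whenever $z < \lceil n/2 \rceil$; so as long as the tester has witnessed fewer than $\lceil n/2 \rceil$ zeros it must accept. In particular any one-sided tester that makes fewer than $\lceil n/2 \rceil$ queries accepts with probability $1$, contradicting the required $\ge 2/3$ rejection probability on the $\epsilon$-far $f$. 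Therefore any one-sided tester must make $\Omega(n)$ queries, establishing the lemma. The only nontrivial ingredient is Valiant's construction, which is what makes the clean Boolean majority lower bound usable inside the $O(\log n)$-depth regime that $\leval$ demands; everything else is elementary.
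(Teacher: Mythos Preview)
Your proposal is correct and takes essentially the same approach as the paper: invoke Valiant's monotone $O(\log n)$-depth formulas for majority to place the problem inside $\leval$, then use the standard certificate argument that a one-sided tester cannot reject while a satisfying completion of the observed partial assignment exists. The paper's version is marginally simpler in that it uses the all-zero assignment (which is $1/2$-far) rather than your $60\%$-zeros assignment, but the argument is otherwise identical.
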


\begin{proof}
We can view the stated problem as that of testing
assignments to logarithmic depth monotone circuits.
It is known~\cite{Valiant:1984eu} that
there are such circuits for the \emph{majority} function.
Hence, it suffices to argue that testing the (uniformly weighted) majority function with one-sided error requires a linear number of queries.

To see this, let us think about the behavior of a one-sided error tester $T$, given the all-zero assignment $f:V \to \set{0,1}$.
Notice that $f$ is $1/2$-far from satisfying the majority function.
Hence, when $\epsilon < 1/2$, the tester $T$ must reject $f$ with probability at least $2/3$.

Suppose that $T$ has queried variables in $S \subseteq V$ with $|S| < n/2$.
Then, the assignment $f'$ with $f'(x) = 0$ for every $x \in S$ and $f'(x) = 1$ for every $x \in V \setminus S$ is consistent with what $T$ has seen.
However, $f'$ satisfies the majority function, and hence $T$ cannot reject $f$.
This means that the query complexity of $T$ must be $\Omega(n)$.
\end{proof}

\begin{lemma}\label{lem:non-cm-reduction-2}
  Let $\bbL$ be a non-trivial finite lattice.
  For each constant $D > 1$, there exists a constant $D' > 1$ such that there is a linear reduction from the problem $\leval((\{ 0, 1 \}; \wedge, \vee), D)$ (where $\ell = 1$) to the problem $\leval(\bbL, D')$.
\end{lemma}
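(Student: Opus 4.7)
The plan is a direct syntactic reduction. Take $V' := V$ with weights $\bw' := \bw$, let $C'$ be the circuit obtained by reinterpreting each $\wedge$ and $\vee$ gate of $C$ as the lattice meet and join of $\bbL$, set the threshold $\ell' := \top$, and define the assignment $f'$ by $f'(x) := \top$ if $f(x) = 1$ and $f'(x) := \bot$ otherwise; each query to $f'(x)$ is answered by a single query to $f(x)$. Since the depth is unchanged and $|V'| = |V|$, one may take $D' := D$.

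Completeness is immediate from the fact that $\{\bot, \top\}$ is a two-element sublattice of $\bbL$ on which meet and join coincide with Boolean conjunction and disjunction; thus if $C(f) = 1$, then $C'(f') = \top = \ell'$.

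The soundness step is the point that requires care. Define a rounding $\rho : L \to \{0,1\}$ by $\rho(\bot) := 0$ and $\rho(a) := 1$ for all $a \neq \bot$, and for any lattice assignment $g'$ set $g_0 := \rho \circ g'$. The key inductive claim is that $C_0(g_0) \geq \rho(C'_0(g'))$ for every subcircuit $C_0$ of $C$ and its lattice counterpart $C'_0$, proved by structural induction using two inequalities: $\rho(a \vee b) = \rho(a) \vee \rho(b)$, since $a \vee b = \bot$ iff $a = b = \bot$; and $\rho(a \wedge b) \leq \rho(a) \wedge \rho(b)$, since $a \wedge b \neq \bot$ forces $a, b \neq \bot$. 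Monotonicity of the Boolean operations combines these with the inductive hypothesis. Hence whenever $C'(g') = \top$, one gets $C(g_0) \geq \rho(\top) = 1$, so $g_0$ satisfies $C$. Moreover, $g_0(x) = f(x)$ at every variable where $g'(x) = f'(x)$, because the encoding $f \mapsto f'$ and $\rho$ agree on $\{\bot, \top\}$, so $\dist(f, g_0) \leq \dist(f', g')$. Taking the infimum over satisfying $g'$ yields $\dist_{\caI}(f) \leq \dist_{\caI'}(f')$ deterministically, giving a linear reduction with $t(n) = n$, $c_1 = 1$, and $c_2 = 1$.

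The main obstacle is the asymmetry at meet gates: $\rho$ is not a lattice homomorphism in general, since a meet of two non-bottom elements can collapse to $\bot$ (as happens for incomparable atoms of $M_3$, for instance). What saves the argument is that the failure of homomorphism goes in the direction $\rho(a \wedge b) \leq \rho(a) \wedge \rho(b)$, which, together with monotonicity of the Boolean operations, is precisely the direction needed to propagate a top value through the circuit and conclude that $g_0$ satisfies $C$.
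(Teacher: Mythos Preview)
Your proof is correct and takes a genuinely simpler route than the paper's. The paper proceeds by a two-case induction on the size of $\bbL$: if $\bbL$ is distributive it embeds $\bbL$ into a finite Boolean power $\{0,1\}^k$, picks a nonzero tuple $t$ as the threshold, and recovers a Boolean satisfying assignment by projecting onto a coordinate where $t_r = 1$; if $\bbL$ is non-distributive it finds witnesses $a,b,c$ to the failure of distributivity, passes to the proper interval sublattice $[s(a,b,c),s'(a,b,c)]$ (padding the circuit with a constant-depth gadget in the new variables $z_1,z_2,z_3$), and recurses. This induction is why the paper needs $D' > D$ in general.

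Your argument bypasses all of this structure theory. By choosing the threshold $\ell' = \top$ and rounding via $\rho(\bot)=0$, $\rho(a)=1$ otherwise, you only need the two order-theoretic facts $\rho(a\vee b)=\rho(a)\vee\rho(b)$ and $\rho(a\wedge b)\le\rho(a)\wedge\rho(b)$, both of which hold in any lattice with a bottom element; the monotone induction over subcircuits then goes through uniformly. The payoff is a one-shot reduction with $t(n)=n$, $c_1=c_2=1$, and even $D'=D$, with no appeal to the embedding theorem for distributive lattices and no inductive shrinking of $\bbL$. The paper's more elaborate construction would be needed if one had to hit an arbitrary threshold $\ell \neq \top$, but for the present lemma your direct argument is strictly cleaner.
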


\begin{proof}
We first consider the case where $L$ is a distributive lattice.
It is well-known and straightforward to verify that each finite distributive
lattice embeds into a finite power of the two-element lattice.
We thus view $L$ as a sublattice of a finite power of the two-element lattice,
and in particular assume that the domain of $L$
is a subset of $\{ 0, 1 \}^k$.
We may further assume that the bottom element of $L$ is $(0, \ldots, 0)$.

Let $\caI = (C, \ell = 1, \bw)$ be an instance of
the problem $\leval((\{ 0, 1 \}; \wedge, \vee), D)$.
Fix $t \in \{ 0, 1 \}^k$ to be an element of $L$ such that
$t \neq (0, \ldots, 0)$.
The instance created is
$\caI' = (C, t, \bw)$.
An assignment $f: V \to \{0,1\}$ is mapped to
the assignment $f': V \to \{ 0, 1 \}^k$
defined by $f'(x) = (0, \ldots, 0)$ if $f(x) = 0$,
and $f'(x) = t$ if $f(x) = 1$.
It is straightforward to verify that $C(f')$ is equal to $0$ or $t$
depending on whether or not $C(f)$ is equal to $0$ or $1$, respectively.
Hence, if $f$ is a satisfying assignment of $\caI$,
then $f'$ is a satisfying assignment of $\caI'$.

We claim that if $g'$ is a satisfying assignment of $\caI'$ such that
$\dist_{\caI'}(f',g') < \epsilon$,
then there exists a satisfying assignment $g$ of $\caI$
such that
$\dist_{\caI}(f,g) < \epsilon$.
This implies that the constant $c_1$ in the definition of reduction
can be taken as $c_1 = 1$.
In particular, define $g: V \to \{ 0, 1 \}$
so that $g(x) = 0$ if $g'(x) = (0, \ldots, 0)$,
and $g(x) = 1$ otherwise.
The inequality
$\dist_{\caI}(f,g) < \epsilon$
holds as a consequence of the fact that
(for each $x \in V$) $f(x) \neq g(x)$ implies $f'(x) \neq g'(x)$;
this fact can be verified by a case analysis of the possible values
$(0, 1)$, $(1,0)$ for $(f(x),g(x))$.
Now fix $r$ to be an index such that the $r$th entry $t_r$
of the tuple $t$ is equal to $1$.  (We extend this subscript notation
to assignments mapping to $L$ in the natural fashion.)
Since $g'$ is a satisfying assignment, $C(g') \geq t$,
implying that $C(g'_r) = C(g')_r \geq t_r = 1$; since
$g \geq g'_r$ (by definition of $g$), $C(g) = 1$
and
the assignment $g$ is satisfying, with respect to $\caI$.

We now consider the case where $L$ is a non-distributive lattice.
Define $s(x,y,z)$ to be $(x \wedge y) \vee (x \wedge z)$
and define $s'(x,y,z)$ to be $x \wedge (y \vee z)$.
Under any assignment to the variables $\{ x, y, z \}$, it holds that
$s \leq s'$.
Fix values $a, b, c \in L$ such that $s(a,b,c) \neq s'(a,b,c)$;
such values exist
by the assumption that $L$ is non-distributive.
For any elements $d, d' \in L$, define
$[d,d']$ as $\{ e \mid d \leq e \leq d' \}$.
It is straightforward to verify that
$[s(a,b,c), s'(a,b,c)] \neq L$.
Define $\bbL^-$ to be the sublattice of $\bbL$
with domain $[s(a,b,c), s'(a,b,c)]$.

By induction, it suffices to show that,
for each $D > 1$, there exists $D' > 1$ such that
there is a reduction from $\leval(\bbL^-,D)$ to $\leval(\bbL, D')$.
Let $\caI^- = (C^-(x_1, \ldots, x_n), \ell, \bw^-)$ be an instance of
$\leval(\bbL^-,D)$ where we use $V = \{ x_1, \ldots, x_n \}$
to denote the variable set of $C^-$,
and let $f^-: V \to L^-$ be an assignment of $\caI^-$.
The instance $\caI$ produced by the reduction is
$(C, \ell, \bw)$, where
$C(z_1, z_2, z_3, x_1, \ldots, x_n)$
is the circuit on variable set $V \cup \{ z_1, z_2, z_3 \}$
defined as $C^-(x^*_1, \ldots, x^*_n)$;
each $x^*_i$ is equal to the circuit
$(x_i \vee s(z_1, z_2, z_3)) \wedge s'(z_1, z_2, z_3)$.
Observe that the depth of the created circuit $C$
is equal to that of $C^-$, plus a constant.
The weight function $\bw$ is defined as
$\bw(z_1) = \bw(z_2) = \bw(z_3) = 1/4$
and $\bw(x) = \bw^-(x) / 4$ for each $x \in V$.
The assignment $f: V \cup \{ z_1, z_2, z_3 \} \to L$
produced by the reduction is the extension of $f^-$
that maps $(z_1, z_2, z_3)$ to $(a,b,c)$.

Clearly, if $f^-$ is a satisfying assignment of $\caI^-$,
then $f$ is a satisfying assignment of $\caI$,
since $C^-(f^-) = C(f)$.
We claim that the constant $c_1$ in the definition of reduction
can be taken as $1/4$.
Suppose that $g$ is a satisfying assignment of $\caI$
such that $\dist_{\caI}(f,g) < \epsilon / 4$.
Then $g$ must be equal to $f$ on $\{ z_1, z_2, z_3 \}$, since
each of those variables has weight $1/4$.
Define $g^-(x) = (g(x) \vee s(a,b,c)) \wedge s'(a,b,c)$,
for each $x \in V$; by definition of $C$, it holds that
$g^-$ is a satisfying assignment of $\caI$.
Observe that $f(x) = g(x)$ implies that
$f^-(x) = g^-(x)$,
since $f^-(x) = (f(x) \vee s(a,b,c)) \wedge s'(a,b,c)$.
We conclude that $\dist_{\caI}(f^-,g^-) < \epsilon$, establishing the claim.
\end{proof}

Let $A$ be a set.
Recall that for binary relations $\theta$ and $\theta'$  on $A$,
we use $\theta \circ \theta'$ to denote their relational product.
We use $\eq(A)$ to denote the lattice of equivalence relations on $A$,
and we use $0_A = \{ (a, a) \mid a \in A \}$
and $1_A = A^2$ to denote the bottom and top elements of
$\eq(A)$, respectively.
We define a \emph{pentagon} to be a finite relational structure $\bfP$
over the signature (or relational structure language) $\{ \alpha, \beta, \gamma \}$
containing three binary relation symbols such that
$\alpha^{\bfP}$, $\beta^{\bfP}$, and $\gamma^{\bfP}$
are equivalence relations on $P$, and the following conditions hold
in $\eq(P)$:
$\alpha^{\bfP} \leq \beta^{\bfP}$,
$\beta^{\bfP} \wedge \gamma^{\bfP} = 0_P$,
$\beta^{\bfP} \circ \gamma^{\bfP} = 1_P$,
and
$\alpha^{\bfP} \vee \gamma^{\bfP} = 1_P$.
The domain $P$ of a pentagon $\bfP$
can be naturally decomposed as a direct product $P = B \times C$
in such a way that $\beta^{\bfP}$ and $\gamma^{\bfP}$
are the kernels of the projections of $P$ onto $B$ and $C$, respectively.
Then, via the equivalence relation $\alpha^{\bfP}$,
each element $b \in B$ induces an equivalence relation
$\alpha^{\bfP}_b =
\{ (c, c') \in C \times C \mid ((b,c), (b,c')) \in \alpha^{\bfP} \}$
on $C$.
For each pentagon $\bfP$, we define
$\bbL(\bfP)$ to be the lattice which is
the sublattice of $\eq(C)$ generated by
the equivalence relations $\alpha^{\bfP}_b$ (over $b \in B$).

To each pentagon $\bfP$,
we associate a $2$-sorted relational structure,
denoted by $\bfP_2$,
which has $B_\bfP$ and $C_\bfP$ as first and second domain,
respectively; here, $B_\bfP$ and $C_\bfP$ denote the sets
in the decomposition of the domain $P$ as described above.
The structure $\bfP_2$ is defined on signature $\{ R \}$
and has $R^{\bfP_2} =
\{ (b, c, c') \in B_\bfP \times C_\bfP \times C_\bfP \mid (c, c') \in \alpha^{\bfP}_b \}$.
The definition of $\bfP_2$ comes from~\cite{BovaChenValeriote13-generic}.
In forming conjunctive queries over this signature $\{ R \}$
each variable has a sort (first or second) associated with each variable;
an atom $R(x, y, y')$ may be formed if $x$ is of the first sort
and $y$ and $y'$ are of the second sort.


When $\bfP$ is a pentagon,
we define $\peval(\bfP)$
to be the assignment problem where an instance consists of
the following:
\begin{itemize}

\item A pp-formula $\phi(X, Y)$
on the signature $\{ R \}$
of $\bfP_2$, where the variables in the sets $X$ and $Y$
are of the first and second sort, respectively.
(We assume $X \cap Y = \emptyset$.)

\item A weight function $\bw: X \cup Y \to [0,1]$.

\end{itemize}
The assignments associated to an instance
are the mappings $h = h_1 \cup h_2$
where $h_1$ is a mapping from $X$ to $B_{\bfP}$
and $h_2$ is a mapping from $Y$ to $C_{\bfP}$.
Such a mapping is satisfying if it causes $\phi$
to evaluate to true over $\bfP_2$.

\begin{lemma}\label{lem:non-cm-reduction-3}
Let $\bfP$ be a pentagon.
For each $D \geq 1$, there exists a linear reduction
from the problem $\leval(\bbL(\bfP),D)$ to $\peval(\bfP)$.
\end{lemma}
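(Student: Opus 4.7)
The plan is to encode each lattice-valued variable of the source $\leval$ instance by a constant number of $B_\bfP$-sorted free $\peval$-variables via a single ``universal'' lattice term, and to translate the Boolean circuit gate by gate into a pp-formula over $\bfP_2$ that asserts, for each pair $(c,c') \in \ell$, that $(c,c')$ belongs to the value of the output gate. First, I would fix a lattice term $t(z_1,\ldots,z_s)$ such that every element of $\bbL(\bfP)$ equals $t(\alpha^{\bfP}_{b_1},\ldots,\alpha^{\bfP}_{b_s})$ for some tuple $(b_1,\ldots,b_s) \in B_\bfP^s$. Such a universal term exists because $\bbL(\bfP)$ is a finite lattice generated by $\{\alpha^{\bfP}_b : b \in B_\bfP\}$; a sufficiently deep join-of-meets over enough generator slots realizes every element (e.g.,  $(z_1 \wedge z_2) \vee (z_3 \wedge z_4)$ is already universal for the four-element Boolean lattice, and the construction is analogous in general).

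Given an $\leval$ instance $(\Phi,\ell,\bw)$ with variable set $V$, I would introduce free $B_\bfP$-sorted $\peval$-variables $\{b_x^i : x \in V,\ i \in [s]\}$ with weights $\bw'(b_x^i) = \bw(x)/s$. Then I would recursively build, for each gate $g$ of $\Phi$ and each pair $(c,c')$ of $C_\bfP$-terms (constants or variables), a pp-formula $\phi_{g,c,c'}$ asserting that $(c,c')$ lies in the value of $g$: for a leaf labeled by variable $x$, unfold the constant-size pp-formula for $(c,c') \in t(\alpha^{\bfP}_{b_x^1},\ldots,\alpha^{\bfP}_{b_x^s})$; for a $\wedge$-gate, take the conjunction $\phi_{g_1,c,c'} \wedge \phi_{g_2,c,c'}$; for a $\vee$-gate, use the identity $\alpha \vee \beta = (\alpha \circ \beta)^{|C_\bfP|-1}$ in $\eq(C_\bfP)$ (which holds because shortest paths in the transitive closure of $\alpha \cup \beta$ have length at most $|C_\bfP|-1$, and reflexivity lets a single $\alpha$- or $\beta$-step be viewed as an $\alpha \circ \beta$-step), and existentially quantify a length-$(|C_\bfP|-1)$ chain of $C_\bfP$-sorted intermediate variables together with the composition witnesses. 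The produced pp-formula is $\phi = \bigwedge_{(c,c') \in \ell} \phi_{g_{\mathrm{out}},c,c'}$.

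For the assignment correspondence, I would map $f:V \to \bbL(\bfP)$ to $f':X \to B_\bfP$ by picking, for each $x$, a canonical tuple $(g_1,\ldots,g_s)$ with $f(x) = t(\alpha^{\bfP}_{g_1},\ldots,\alpha^{\bfP}_{g_s})$ and setting $f'(b_x^i) = g_i$. Completeness and soundness both follow by structural induction on the circuit from the correctness of each $\phi_{g,c,c'}$: given a satisfying $\peval$-assignment $g'$ that is close to $f'$, the assignment $g(x) := t(\alpha^{\bfP}_{g'(b_x^1)},\ldots,\alpha^{\bfP}_{g'(b_x^s)})$ satisfies the $\leval$-instance, and whenever $f(x) \neq g(x)$ the tuples $(f'(b_x^i))_i$ and $(g'(b_x^i))_i$ must disagree in at least one coordinate, yielding $\dist_\caI(f,g) \leq s \cdot \dist_{\caI'}(f',g')$. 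This gives $c_1 = 1/s$ and $c_2 = 1$ in Definition~\ref{def:gap-preserving-local-reduction}, and the free-variable count is $s \cdot |V| = O(|V|)$, as required for a linear reduction.

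The main obstacle I expect is twofold: first, producing the universal term $t$ together with its constant-size pp-encoding $\psi_t$ of ``$(c,c') \in t(\ldots)$'' so that the encoding is genuinely surjective onto all of $\bbL(\bfP)$; second, verifying that the $(\alpha \circ \beta)^{|C_\bfP|-1}$ trick yields equality (not merely containment) with $\alpha \vee \beta$ for the joinands that arise at interior $\vee$-gates, where the joinands are themselves pp-definable only through the recursion rather than as fixed equivalence relations. The pp-formula itself may grow polynomially in $|V|$ once all nested $\vee$-chains have been expanded, but this blowup lives entirely in the existentially quantified variables and is irrelevant for the linear-reduction bookkeeping, which tracks only the free-variable count and the $O(1)$ query locality.
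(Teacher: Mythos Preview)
Your approach is essentially the paper's: a universal lattice term to encode each $\bbL(\bfP)$-valued variable by a constant number of $B_\bfP$-sorted free variables, and the recursive translation of the circuit into a pp-formula over $\bfP_2$ (with $\wedge$ as conjunction and $\vee$ via an $(\alpha\circ\beta)^{|C_\bfP|}$ chain of existentially quantified $C_\bfP$-variables). The paper presents it as two reductions---first to the generator-valued special case of $\leval$, then to $\peval$---but that is only a packaging difference.

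There is, however, one genuine gap. In your final step you write $\phi=\bigwedge_{(c,c')\in\ell}\phi_{g_{\mathrm{out}},c,c'}$ with $c,c'$ treated as \emph{constants} from $C_\bfP$. But $\peval(\bfP)$ is defined over the bare signature $\{R\}$, which has no constant symbols, so such a formula is not a legal instance; and if you instead existentially quantify the pairs $(c,c')$, soundness collapses (take $c=c'$). The paper's fix is exactly what your write-up is missing: introduce, for each of the at most $|C_\bfP|^2$ pairs needed to pin down $\ell$, a pair of \emph{free} $C_\bfP$-sorted variables $(y^1_i,y^2_i)$, set $f'(y^1_i),f'(y^2_i)$ to the intended constants, and give each of these variables weight $1/(2|C_\bfP|^2+1)$ while scaling all $B_\bfP$-sorted weights by the same factor. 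Then any satisfying assignment within distance $\epsilon/(2|C_\bfP|^2+1)$ of $f'$ must agree with $f'$ on all the $y$-variables, recovering the ``constant'' semantics. With this device added, your argument and the paper's coincide (with $c_1=1/(s(2|C_\bfP|^2+1))$ rather than $1/s$).
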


\begin{proof}
We first observe that there exists a constant $E$
such that $\leval(\bbL(\bfP),D)$ linearly reduces to
the special case of $\leval(\bbL(\bfP),D+E)$
where the assignment must map to
the set of generators $G = \{ \alpha_b^{\bfP} \mid b \in B_{\bfP} \}$
of $L(\bfP)$.
Let $s(x_1, \ldots, x_q)$ be a fixed lattice term
that maps $G^q$ surjectively onto $L(\bfP)$.
The reduction, on $(T(x_1, \ldots, x_m), \ell, \bw), f$,
produces
$$(T(s(x_1^1, \ldots, x_1^q), \ldots, s(x_m^1, \ldots, x_m^q)), \ell, \bw'), f'.$$
Here, the assignment $f'$ is defined so that,
for each $i \in [m]$, it holds that
$s(f'(x_i^1), \ldots, f'(x_i^q)) = f(x_i)$;
the function $\bw'$ is defined by $\bw(x_i^j) = w(x_i)/q$
for all $i \in [m]$, $j \in [q]$.
It is straightforward to verify that the constant $c_1$
in the definition of reduction can be taken as $1/q$;
the key point is that, in order to modify a value
$s(x_i^1, \ldots, x_i^q)$ of the new instance,
which corresponds to the input $x_i$ to $T$,
it is necessary to change at least one of the values
$x_i^1, \ldots, x_i^q$, whose weight is $1/q$ times the
weight $w(x_i)$.

It thus suffices to give a linear reduction from
this special case of $\leval(L(\bfP), D+E)$
to $\peval(\bfP)$, which is what we now do.
Let $(T(x_1, \ldots, x_m), \ell, \bw), f$ be an input
to the first problem.
We make use of a construction in the literature
(introduced in~\cite{BovaChenValeriote13-generic}
and also employed in~\cite{Chen:2015wh})
which allows us to create, from the circuit $T(x_1, \ldots, x_m)$,
a pp-formula $\phi_T(x_1, \ldots, x_m, y_1, y_2)$ (over $\bfP_2$)
such that, for all $b_1, \ldots, b_m \in B_{\bfP}$
and all $c_1, c_2 \in C_{\bfP}$, we have that
$\phi_T(b_1, \ldots, b_m, c_1, c_2)$ holds on $\bfP_2$
if and only if $(c_1, c_2)$ is in the equivalence relation
given by $T(\alpha^{\bfP}_{b_1}, \ldots, \alpha^{\bfP}_{b_m})$,
where here it is understood that $T$ is evaluated
in the lattice $\bbL(\bfP)$.
For the sake of completeness, we briefly specify the version
of the construction used here.
The construction is defined inductively.
When $T = x_i$, we have $\phi_T(x_1, \ldots, x_m, y_1, y_2) = R(x_i, y_1, y_2)$.
When $T = T_1 \wedge T_2$, we have
$\phi_{T_1}(x_1, \ldots, x_m, y_1, y_2) \wedge
 \phi_{T_2}(x_1, \ldots, x_m, y_1, y_2)$.
When $T = T_1 \vee T_2$,
set $u = |C_{\bfP}|$.
Let $z_{0,2}$ and $z_{i,j}$, where $i = 1, \ldots, u$ and
$j = 1, 2$, be variables of the second sort,
and identify $y = z_{0,2}$ and $y' = z_{u,2}$.
Then $\phi_T$ is the formula
$\exists z_{1,1} \exists z_{1,2} \ldots \exists z_{u-1,1} \exists z_{u-1,2} \exists z_{u,1} \bigwedge_{i=1}^u (\phi_{T_1}(x_1, \ldots, x_m, z_{i-1,2},z_{i,1}) \wedge \phi_{T_2}(x_1, \ldots, x_m, z_{i,1}, z_{i,2}))$,
where here all of the variables $z_{\cdot,\cdot}$ are existentially quantified, other than $y$ and $y'$.

The reduction produces the input
$(\bigwedge_{i \in [u^2]} \phi_T(x_1, \ldots, x_m, y^1_i, y^2_i), \bw'), f'$, where
$u = |C_{\bfP}|$, and
$f'$ and $\bw'$ are described as follows.
Define $f'(x_i)$ so that $f(x_i) = \alpha^{\bfP}_{f'(x_i)}$,
and define $f'$ on the $(y^1_i, y^2_i)$ so that
$\ell = \{ (f'(y^1_i), f'(y^2_i)) \mid i \in [u^2] \}$.
Define $\bw'(y^1_i) = \bw'(y^2_i) = 1 / (2u^2 + 1)$,
and define $\bw'(x_i) = \bw(x_i) / (2u^2 + 1)$.
The reduction works with $c_1 = 1/ (2u^2 + 1)$, for
if $f'$ is within distance $c_1 \epsilon$ of a satisfying
assignment, then the satisfying assignment must be equal
on the variables $(y^1_i, y^2_i)$.
\end{proof}

\begin{lemma}  
[\cite{BovaChenValeriote13-generic}]
\label{lemma:non-cm-info}
Let $\bfB$ be a finite relational structure such that
$\caV(\Alg(\bfB))$ is not congruence modular.
There exists a relational structure $\bfA$
defined on a signature including three binary relation symbols
$\alpha$, $\beta$, and $\gamma$
which are preserved by the operations of some finite algebra in
$\caV(\Alg(\bfB))$,
such that the following hold:
\begin{itemize}

\item There exists a finite set $\caP$ of pentagons
where for each $\bfP \in \caP$,
the domain $P$ of $\bfP$ is a subset of $A$, and
it holds that
$\alpha^{\bfP} = \alpha^{\bfA} \cap P^2$,
$\beta^{\bfP} = \beta^{\bfA} \cap P^2$, and
$\gamma^{\bfP} = \gamma^{\bfA} \cap P^2$.
Moreover, there exists $\bfP \in \caP$ such that
$\bbL(\bfP)$ is a non-trivial lattice.

\item For each $k \geq 1$, there exists a relation $D_k \subseteq A^k$
which is pp-definable over $\bfA$ such that
for any elements $a_1, \ldots, a_k \in A$,
the tuple $(a_1, \ldots, a_k)$ is in $D_k$ if and only if
there exists a $\bfP \in \caP$ such that
all of the elements $a_1, \ldots, a_k$ are contained in the domain
$P$ of $\bfP$.
\end{itemize}
\end{lemma}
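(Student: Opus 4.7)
The plan is to reproduce the construction of~\cite{BovaChenValeriote13-generic}, which yields such a structure precisely because congruence modularity fails in $\caV(\Alg(\bfB))$. The starting point is the classical fact that $\caV(\Alg(\bfB))$ is not congruence modular if and only if some finite algebra $\bbC$ in this variety has congruences $\alpha \leq \beta$ and $\gamma$ with $\beta \wedge \gamma = 0_C$ and $\alpha \vee \gamma = 1_C$, witnessing an $N_5$ sublattice in $\Con(\bbC)$. By passing to a suitable subdirect power of $\bbC$ if necessary, one can further arrange that $\beta \circ \gamma = 1_C$, so that $(C;\alpha,\beta,\gamma)$ literally satisfies the pentagon axioms used in this section.

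Next I would take the universe $A$ of the target structure $\bfA$ to be $C$ (or a coordinatized version of it via the decomposition $C \cong B \times C'$ induced by the kernels of $\beta$ and $\gamma$), and put $\alpha^{\bfA}=\alpha$, $\beta^{\bfA}=\beta$, $\gamma^{\bfA}=\gamma$. Each of these is a congruence of $\bbC$ and hence is preserved by every basic operation of $\bbC$, giving the required invariance under a finite algebra in $\caV(\Alg(\bfB))$. The family $\caP$ is then taken to consist of all sub-pentagons $\bfP$ obtained by restricting $\alpha,\beta,\gamma$ to subsets $P \subseteq A$ on which the four pentagon axioms persist. That at least one such $\bfP$ has $\bbL(\bfP)$ non-trivial follows from the failure of modularity itself, since if every $\bbL(\bfP)$ were trivial the fibers $\alpha^{\bfP}_b$ would all coincide and the original $N_5$ diagram would collapse to a modular one.

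The delicate step is producing the relations $D_k$ as pp-definable over $\bfA$. The idea is to enrich $\bfA$ with a finite collection of auxiliary relations that distinguish and glue together the pentagon universes: unary predicates carving out each $P$, and incidence relations recording how the various $P$'s sit inside $A$. Joint membership of $(a_1,\ldots,a_k)$ in a common $P \in \caP$ can then be asserted by a primitive-positive formula that existentially quantifies over a representative of the pentagon and enforces, conjunctively, that each $a_i$ belongs to it. Each added relation must itself be pp-definable from $\alpha,\beta,\gamma$ together with a finite set of constants, so that a free-algebra argument shows the expanded $\bfA$ is still invariant under the operations of a single finite algebra in $\caV(\Alg(\bfB))$.

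The main obstacle, and the reason the construction in~\cite{BovaChenValeriote13-generic} is technically involved, is arranging simultaneously that (i) some $\bbL(\bfP)$ is non-trivial, (ii) the $D_k$ are uniformly pp-definable, and (iii) the resulting $\bfA$ remains preserved by operations of one finite algebra in $\caV(\Alg(\bfB))$. Any naive choice of $\bbC$ or of the auxiliary relations tends to break one of these three. The fix, following the cited work, is to perform the construction inside a relatively free algebra of the variety and then extract $\bfA$ as an invariant substructure of a finite power of $\bbC$; this is a standard but careful maneuver from tame congruence theory that controls all three properties at once.
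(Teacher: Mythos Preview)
The paper does not prove this lemma at all: it is stated with a citation to~\cite{BovaChenValeriote13-generic} and no argument is given. So there is nothing in the paper to compare your proposal against; the authors simply import the result as a black box.

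Your write-up is therefore not a competing proof but an attempted outline of what the cited paper does. As such it is reasonable at the level of a roadmap, but you should be aware that several of the steps you describe as routine are where the actual work lies. In particular: (i) the passage ``by passing to a suitable subdirect power \ldots\ one can further arrange that $\beta \circ \gamma = 1_C$'' is not automatic and requires a specific construction; (ii) taking $\caP$ to be ``all sub-pentagons on which the axioms persist'' is too loose---the cited construction produces a specific finite family tied to the algebraic structure, and an arbitrary such family need not admit the uniform pp-definitions $D_k$; and (iii) your argument that some $\bbL(\bfP)$ is non-trivial (``if every $\bbL(\bfP)$ were trivial the $N_5$ would collapse'') is a heuristic, not a proof, since triviality of the generated sublattices $\bbL(\bfP)$ does not obviously force $\alpha = \beta$ globally. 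You correctly identify these as the delicate points and defer to the reference, which is exactly what the paper itself does; just do not present the sketch as if it closes the argument on its own.
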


\begin{theorem}\label{thm:non-cm}
Let $\bfA$ be a relational structure
satisfying the conditions described in
Lemma~\ref{lemma:non-cm-info},
and let $\bfP \in \caP$ be a non-trivial pentagon
whose domain is not contained in that of any other pentagon
in $\caP$.
There exists a linear reduction from
$\peval(\bfP)$ to
$\eCSP(\bfA)$.
\end{theorem}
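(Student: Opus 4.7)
The plan is to translate each $R$-atom of a $\peval(\bfP)$-instance into a short conjunction of $\eCSP(\bfA)$-atoms built from $\alpha, \beta, \gamma$, while using the pp-definable relations $D_k$ of Lemma~\ref{lemma:non-cm-info} to confine every variable to the domain $P$ of the chosen pentagon. Write an instance of $\peval(\bfP)$ as a pp-formula $\phi = \exists Z \bigwedge_{i \in I} R(x_i, y_i^1, y_i^2)$ over the signature of $\bfP_2$, with free variables $X \cup Y$ and weights $\bw$. In the $\eCSP(\bfA)$-instance $\caI'$ that I build I introduce a free variable $\tilde v$ for each $v \in X \cup Y$ with $\bw'(\tilde v) = \bw(v)$, an existential $\tilde z$ for each $z \in Z$, and for each atom $R(x_i, y_i^1, y_i^2)$ two further existentials $u_i, v_i$. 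Each such atom is replaced by the conjunction
\[
\beta^{\bfA}(u_i, \tilde x_i) \wedge \beta^{\bfA}(v_i, \tilde x_i) \wedge \gamma^{\bfA}(u_i, \tilde y_i^1) \wedge \gamma^{\bfA}(v_i, \tilde y_i^2) \wedge \alpha^{\bfA}(u_i, v_i).
\]
Because on $P = B_{\bfP} \times C_{\bfP}$ the restrictions of $\alpha^{\bfA}, \beta^{\bfA}, \gamma^{\bfA}$ coincide with $\alpha^{\bfP}, \beta^{\bfP}, \gamma^{\bfP}$, and the latter two are the projection kernels, on $P$ this conjunction precisely asserts $R^{\bfP_2}(\tilde x_i, \tilde y_i^1, \tilde y_i^2)$.

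To force every variable of $\caI'$ into $P$, I exploit the hypothesis that $P$ is not contained in the domain of any other $\bfP' \in \caP$: for each such $\bfP'$ pick $q_{\bfP'} \in P$ that does not lie in the domain of $\bfP'$, and let $Q \subseteq P$ be the resulting finite (fixed-size) set of representatives. I introduce existential ``landmark'' variables $\tilde q_1, \ldots, \tilde q_{|Q|}$ pinned to the values of $Q$ via singleton relations available in $\bfA$, and for every variable $w$ of $\caI'$ I add the single constraint $D_{|Q|+1}(w, \tilde q_1, \ldots, \tilde q_{|Q|})$. By Lemma~\ref{lemma:non-cm-info} this asserts that $w$ and all landmarks lie in a common pentagon, which by the choice of $Q$ must be $\bfP$, so $w \in P$.

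Correctness is then routine. If $h = h_1 \cup h_2$ satisfies $\phi$, then for any fixed $b_0 \in B_{\bfP}, c_0 \in C_{\bfP}$ the extension $h'(\tilde x) = (h_1(x), c_0)$, $h'(\tilde y) = (b_0, h_2(y))$, together with $u_i = (h_1(x_i), h_2(y_i^1))$ and $v_i = (h_1(x_i), h_2(y_i^2))$ and the natural extension of the $\tilde z$'s, satisfies $\caI'$. Conversely, if $g'$ satisfies $\caI'$ then all of its values lie in $P$, so $g_1(x) = \pi_{B_{\bfP}}(g'(\tilde x))$, $g_2(y) = \pi_{C_{\bfP}}(g'(\tilde y))$ yields a satisfying $g$ of $\phi$; since $h_1(x) \neq g_1(x)$ forces $h'(\tilde x) \neq g'(\tilde x)$ (and similarly for $y$), the instance distances satisfy $\dist_{\caI}(h) \leq \dist_{\caI'}(h')$, giving $c_1 = 1$. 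A query to $h'(\tilde x)$ or $h'(\tilde y)$ is answered by a single query to $h$, so $c_2 = 1$; the number of free variables is preserved exactly, and each added pp-constraint contributes only a constant number of existentials, so $t(n) = O(n)$.

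The main obstacle will be the landmark step: pinning the values of $\tilde q_1, \ldots, \tilde q_{|Q|}$ through pp-formulas over $\bfA$. This requires that the structure of Lemma~\ref{lemma:non-cm-info} supply singleton relations (or pp-definitions thereof) for the elements of $Q$; such definability is customarily built into the structure inherited from~\cite{BovaChenValeriote13-generic}. If it is not directly available, one first expands $\bfA$ to a structure $\bfA'$ in which these singletons are primitive — remaining compatible with the finite algebra in $\caV(\Alg(\bfB))$ of Lemma~\ref{lemma:non-cm-info} — and then applies Lemma~\ref{lem:reduction-for-ecsps} to transfer the linear reduction back to $\eCSP(\bfA)$.
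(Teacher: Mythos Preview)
Your overall strategy matches the paper's: translate each $R$-atom into a short $\alpha,\beta,\gamma$-gadget, use the pp-definable relations $D_k$ of Lemma~\ref{lemma:non-cm-info} together with landmark values from $P$ to confine every variable to the domain of the chosen pentagon, and then project back. The explicit atom-by-atom encoding you give is exactly the content of the translation from~\cite{BovaChenValeriote13-generic} that the paper cites, and your argument that $\dist_{\caI}(h,g) \le \dist_{\caI'}(h',g')$ via the coordinate projections is correct.

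The genuine gap is in how you pin the landmarks. You make $\tilde q_1,\ldots,\tilde q_{|Q|}$ existential and fix their values via singleton relations, but nothing in Lemma~\ref{lemma:non-cm-info} promises that singleton relations on $A$ are pp-definable over $\bfA$, and your fallback via Lemma~\ref{lem:reduction-for-ecsps} does not go through as stated: to reduce $\eCSP(\bfA')$ (the expansion by singletons) to $\eCSP(\bfA)$ you would need the singletons to be preserved by the operations of some finite algebra in $\caV(\Alg(\bfA))$, which amounts to an idempotence assumption you have not justified. The paper sidesteps this entirely by making the landmarks \emph{free} variables rather than existential ones: it introduces variables $v'_1,\ldots,v'_{|P|}$ enumerating all of $P$, assigns each weight $1/(|P|+1)$, and rescales the remaining weights by the same factor. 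Any satisfying assignment within distance $\epsilon/(|P|+1)$ of $h'$ is then forced to agree with $h'$ on every landmark, so the single constraint $D_{m'+\ell'+|P|}$ on all variables and landmarks confines everything to $P$ without any appeal to singleton relations. The price is $c_1 = 1/(|P|+1)$ instead of your $c_1 = 1$, but this is a constant depending only on $\bfA$, so the reduction remains linear. Replacing your existential-plus-singleton pinning with this free-variable-with-high-weight device closes the gap and makes your argument complete.
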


Note that this theorem makes use of a construction
from~\cite[Theorem 7]{BovaChenValeriote13-generic}
and shares elements in common with the proof of
\cite[Theorem 13]{Chen:2015wh}.

\begin{proof}
Let $(\phi(x_1, \ldots, x_m, y_1, \ldots, y_{\ell}), \bw), h$
denote the input to the first problem.
Let $x_{m+1}, \ldots, x_{m'}$ denote the quantified variables
of the first sort in $\phi$,
and let $y_{\ell+1}, \ldots, y_{\ell'}$ denote the
quantified variables of the second sort in $\phi$.
We use the translation of \cite[Theorem 7]{BovaChenValeriote13-generic}
to obtain a formula
$\phi'(x'_1, \ldots, x'_{m}, y'_1, \ldots, y'_{\ell}, v'_1, \ldots, v'_{|P|})$,
but instead of adding the conjunct of the form $\Delta_{\cdot}(\ldots)$,
we add the conjunct
$D_{m'+\ell'+|P|}(x'_1, \ldots, x'_{m'}, y'_1, \ldots, y'_{\ell'}, v'_1, \ldots, v'_{|P|})$
where $|P|$ denotes the size of the domain $P$ of $\bfP$
and the $v_i$ are fresh variables.
Let $b^*$ be a fixed element of $B_{\bfP}$,
and let $c^*$ be a fixed element of $C_{\bfP}$.
Define the assignment $h'$ as follows:
$h'(x'_i) = (h(x_i), c^*)$ for each $x_i$,
$h'(y'_i) = (b^*, h(y_i))$ for each $y_i$,
and let $h(v'_1), \ldots, h(v'_{|P|})$ be an enumeration of the
elements of $P$.
Set $\bw'$ so that $\bw'(v'_i) = 1/(|P|+1)$ for each variable $v'_i$,
and so that $\bw'(u) = \bw(u) / (|P|+1)$ for each other variable $u$
of $\phi'$.
It is straightforward to verify that the reduction
that outputs
$(\phi', \bw'), h'$ is correct.
\end{proof}

Combining Lemmas~\ref{lem:non-cm-reduction-1},~\ref{lem:non-cm-reduction-2}, and~\ref{lem:non-cm-reduction-3} and Theorem~\ref{thm:non-cm}, using Lemma~\ref{lem:reduction-for-ecsps}, we get the following.
\begin{theorem}\label{the:non-cm-ecsp}
  Let $\bfA$ be a relational structure such that $\caV(\Alg(\bfA))$ is not congruence modular.
  Then, $\eCSP(\bfA)$ is not sublinear-query testable with one-sided error.
\end{theorem}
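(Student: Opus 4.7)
The plan is to chain together the reductions built up earlier in the section. Since the case when $\caV(\Alg(\bfA))$ fails to be congruence meet semidistributive is already disposed of, I will assume $\caV(\Alg(\bfA))$ is not congruence modular and invoke Lemma~\ref{lemma:non-cm-info} to obtain an auxiliary relational structure $\bfA^*$ (on a signature containing three binary symbols $\alpha, \beta, \gamma$) whose relations are preserved by operations of some finite algebra in $\caV(\Alg(\bfA))$, together with the associated finite family $\caP$ of pentagons on subsets of $A^*$, at least one of which is non-trivial. I will then select a non-trivial pentagon $\bfP \in \caP$ whose domain $P$ is not properly contained in the domain of any other pentagon in $\caP$ (needed to match the hypotheses of Theorem~\ref{thm:non-cm}).

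With $\bfP$ chosen, the chain of reductions runs as follows. By Lemma~\ref{lem:non-cm-reduction-1} there is a constant $D>1$ such that $\leval((\{0,1\};\wedge,\vee),D)$ with $\ell=1$ requires a linear number of queries to test with one-sided error. By Lemma~\ref{lem:non-cm-reduction-2} applied to the non-trivial finite lattice $\bbL(\bfP)$, there is a constant $D'>1$ and a linear reduction from the Boolean lattice-evaluation problem to $\leval(\bbL(\bfP),D')$, so the latter also requires linearly many queries. By Lemma~\ref{lem:non-cm-reduction-3}, $\leval(\bbL(\bfP),D')$ reduces linearly to $\peval(\bfP)$; by Theorem~\ref{thm:non-cm}, $\peval(\bfP)$ reduces linearly to $\eCSP(\bfA^*)$; and finally by Lemma~\ref{lem:reduction-for-ecsps}, using the preservation property from Lemma~\ref{lemma:non-cm-info}, $\eCSP(\bfA^*)$ reduces linearly to $\eCSP(\bfA)$. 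Since linear reductions preserve sublinear-query testability by Lemma~\ref{lmm:gap-preserving-local-reduction}, any one-sided sublinear-query tester for $\eCSP(\bfA)$ would yield one for the Boolean lattice-evaluation problem, contradicting Lemma~\ref{lem:non-cm-reduction-1}.

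The conceptual content is entirely concentrated in the lemmas, which are either proved earlier in the section or are cited from~\cite{BovaChenValeriote13-generic,Chen:2015wh}; the only step needing care is the selection of a non-trivial pentagon $\bfP \in \caP$ that is maximal under domain inclusion. This is the main obstacle to a fully routine argument: if one chooses a domain-maximal $\bfP^* \in \caP$ outright it may happen to be trivial. However, $\caP$ is finite and contains at least one non-trivial pentagon $\bfP_0$; one can then restrict attention to the subfamily of pentagons whose domain contains $P_0$ and, using the structural description of pentagons and their induced lattices $\bbL(\cdot)$, argue that a domain-maximal element of this subfamily inherits non-triviality from $\bfP_0$ (since the relevant generators $\alpha^{\bfP^*}_b$ restrict on $P_0$ to the generators of $\bbL(\bfP_0)$). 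Once this technical selection is justified, the theorem follows by assembling the chain of linear reductions above.
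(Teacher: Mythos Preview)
Your proposal is correct and matches the paper's approach exactly: the paper's entire proof is the one-line assertion that combining Lemmas~\ref{lem:non-cm-reduction-1}, \ref{lem:non-cm-reduction-2}, \ref{lem:non-cm-reduction-3}, Theorem~\ref{thm:non-cm}, and Lemma~\ref{lem:reduction-for-ecsps} yields the result. Your extra paragraph on selecting a pentagon in $\caP$ that is simultaneously non-trivial and domain-maximal is in fact more careful than the paper (which leaves this point implicit), and your restriction argument---that a domain-maximal $\bfP^*$ above a non-trivial $\bfP_0$ inherits non-triviality because the generators $\alpha^{\bfP^*}_b$ restrict to those of $\bbL(\bfP_0)$---is sound.
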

From the argument at the beginning of this section, we obtain the following as a corollary.
\begin{theorem}\label{the:non-cm-ecsp-2}
  Let $\bfA$ be a relational structure that has not $(k+1)$-ary near unanimity polymorphism for any $k \ge 2$.
  Then, $\eCSP(\bfA)$ is not sublinear-query testable with one-sided error.
\end{theorem}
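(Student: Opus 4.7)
The plan is to derive Theorem~\ref{the:non-cm-ecsp-2} by combining Theorem~\ref{the:non-cm-ecsp} with the lower bound from Section~\ref{sec:non-cp}, using the algebraic characterization stated at the beginning of Section~\ref{sec:non-cd}. Concretely, I would invoke the equivalence (via Barto's resolution of Zádori's Conjecture) that $\bfA$ has a $(k+1)$-ary near-unanimity polymorphism for some $k \geq 2$ if and only if $\caV(\Alg(\bfA))$ is congruence distributive, and further that congruence distributivity is equivalent to the conjunction of congruence meet semidistributivity and congruence modularity.

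The reduction then proceeds by case analysis on how congruence distributivity fails. Assume $\bfA$ has no $(k+1)$-ary near-unanimity polymorphism for any $k \geq 2$. Then $\caV(\Alg(\bfA))$ is not congruence distributive, so at least one of the following holds: (i) $\caV(\Alg(\bfA))$ is not congruence meet semidistributive, or (ii) $\caV(\Alg(\bfA))$ is not congruence modular. In case (i), the discussion at the beginning of Section~\ref{sec:non-cp} (leaning on~\cite{Yoshida:2014zn} and the generalization of Lemma~5 of~\cite{Bhattacharyya:2013fa} to weighted finite domains) shows that testing $\CSP(\bfA)$ already requires a linear number of queries; since any tester for $\eCSP(\bfA)$ is in particular a tester for $\CSP(\bfA)$, the linear lower bound transfers, and a linear lower bound certainly precludes sublinear-query testability with one-sided error. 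In case (ii), Theorem~\ref{the:non-cm-ecsp} directly yields that $\eCSP(\bfA)$ is not sublinear-query testable with one-sided error.

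In both cases we obtain the desired conclusion, completing the proof. The argument is short because all the heavy lifting has already been done: the algebraic dichotomy between CMSD failure and CM failure is supplied by~\cite{Ho-McK, Barto:2013uo}, the CMSD-failure hardness by Section~\ref{sec:non-cp}, and the CM-failure hardness by Theorem~\ref{the:non-cm-ecsp}. There is no real obstacle here beyond bookkeeping, since the main technical work for case (ii) is the chain of reductions \Call{Pent-Eval}{} $\to$ $\eCSP(\bfA)$ built through Lemmas~\ref{lem:non-cm-reduction-1}--\ref{lem:non-cm-reduction-3} and Theorem~\ref{thm:non-cm}, which is already in place.
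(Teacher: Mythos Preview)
Your proposal is correct and matches the paper's own argument essentially verbatim: the paper also derives Theorem~\ref{the:non-cm-ecsp-2} as an immediate corollary of Theorem~\ref{the:non-cm-ecsp} together with the discussion at the start of Section~\ref{sec:non-cd}, using the equivalence from~\cite{Ho-McK, Barto:2013uo} and handling the non-CMSD case via the linear lower bound from Section~\ref{sec:non-cp}.
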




\bibliographystyle{abbrv}
\bibliography{main}

\begin{thebibliography}{10}

\bibitem{Baker:1975wn}
K.~A. Baker and A.~F. Pixley.
\newblock Polynomial interpolation and the chinese remainder theorem for
  algebraic systems.
\newblock {\em Mathematische Zeitschrift}, 143(2):165--174, 1975.

\bibitem{Barto:2013uo}
L.~Barto.
\newblock Finitely related algebras in congruence distributive varieties have
  near unanimity terms.
\newblock {\em Canad. J. Math.}, 65(1):3--21, 2013.

\bibitem{Barto:2014ul}
L.~Barto.
\newblock The collapse of the bounded width hierarchy.
\newblock {\em Journal of Logic and Computation}, 2014.

\bibitem{barto-survey}
L.~Barto.
\newblock The constraint satisfaction problem and universal algebra.
\newblock {\em The Bulletin of Symbolic Logic}, 21:319--337, 9 2015.

\bibitem{Barto:2014gn}
L.~Barto and M.~Kozik.
\newblock Constraint satisfaction problems solvable by local consistency
  methods.
\newblock {\em Journal of the ACM}, 61(1), 2014.

\bibitem{BenSasson:2005we}
E.~Ben-Sasson, P.~Harsha, and S.~Raskhodnikova.
\newblock Some {3CNF} properties are hard to test.
\newblock {\em SIAM Journal on Computing}, 35(1), 2005.

\bibitem{Bhattacharyya:2013fa}
A.~Bhattacharyya and Y.~Yoshida.
\newblock An algebraic characterization of testable {Boolean CSPs}.
\newblock In {\em Proceedings of the 40th International Colloquium on Automata,
  Languages, and Programming (ICALP)}, pages 123--134, 2013.

\bibitem{BovaChenValeriote13-generic}
S.~Bova, H.~Chen, and M.~Valeriote.
\newblock Generic expression hardness results for primitive positive formula
  comparison.
\newblock {\em Inf. Comput.}, 222:108--120, 2013.

\bibitem{Bulatov-Marx}
A.~Bulatov and D.~Marx.
\newblock The complexity of global cardinality constraints.
\newblock {\em Logical Methods in Computer Science}, 6:1--27, 2010.

\bibitem{Bulatov11-conservative-csp}
A.~A. Bulatov.
\newblock Complexity of conservative constraint satisfaction problems.
\newblock {\em {ACM} Trans. Comput. Log.}, 12(4):24, 2011.

\bibitem{Bulatov13-counting}
A.~A. Bulatov.
\newblock The complexity of the counting constraint satisfaction problem.
\newblock {\em J. {ACM}}, 60(5):34, 2013.

\bibitem{Bu-Sa}
S.~Burris and H.~P. Sankappanavar.
\newblock {\em A course in universal algebra}, volume~78 of {\em Graduate Texts
  in Mathematics}.
\newblock Springer-Verlag, New York-Berlin, 1981.

\bibitem{ChanLRS13}
S.~O. Chan, J.~R. Lee, P.~Raghavendra, and D.~Steurer.
\newblock Approximate constraint satisfaction requires large {LP} relaxations.
\newblock In {\em 54th Annual {IEEE} Symposium on Foundations of Computer
  Science, {FOCS} 2013, 26-29 October, 2013, Berkeley, CA, {USA}}, pages
  350--359, 2013.

\bibitem{Chen:2015wh}
H.~Chen and M.~Valeriote.
\newblock Learnability of solutions to conjunctive queries: The full dichotomy.
\newblock In {\em Proceedings of The th Conference on Learning Theory}, pages
  326--337, 2015.

\bibitem{DyerR13}
M.~E. Dyer and D.~Richerby.
\newblock An effective dichotomy for the counting constraint satisfaction
  problem.
\newblock {\em {SIAM} J. Comput.}, 42(3):1245--1274, 2013.

\bibitem{Feder:1998iu}
T.~Feder and M.~Y. Vardi.
\newblock The computational structure of monotone monadic snp and constraint
  satisfaction: A study through datalog and group theory.
\newblock {\em SIAM Journal on Computing}, 28(1):57--104, 1998.

\bibitem{Fischer:2002ev}
E.~Fischer, E.~Lehman, I.~Newman, S.~Raskhodnikova, R.~Rubinfeld, and
  A.~Samorodnitsky.
\newblock Monotonicity testing over general poset domains.
\newblock In {\em Proceedings of the 34th Annual ACM Symposium on Theory of
  Computing (STOC)}, pages 474--483, 2002.

\bibitem{Goldreich:2011cg}
O.~Goldreich, editor.
\newblock {\em {Property Testing}}, volume 6390 of {\em Lecture Notes in
  Computer Science}.
\newblock Springer Berlin Heidelberg, Berlin, Heidelberg, 2011.

\bibitem{Ho-McK}
D.~Hobby and R.~McKenzie.
\newblock {\em The structure of finite algebras}, volume~76 of {\em
  Contemporary Mathematics}.
\newblock American Mathematical Society, Providence, RI, 1988.
\newblock Revised edition: 1996.

\bibitem{IdziakIdziak}
K.~Idziak and P.~M. Idziak.
\newblock Decidability problem for finite {H}eyting algebras.
\newblock {\em J. Symbolic Logic}, 53(3):729--735, 1988.

\bibitem{IMMVW10-tractabilityfewsubpowers}
P.~Idziak, P.~Markovic, R.~McKenzie, M.~Valeriote, and R.~Willard.
\newblock Tractability and learnability arising from algebras with few
  subpowers.
\newblock {\em {SIAM} J. Comput.}, 39(7):3023--3037, 2010.

\bibitem{Jeavons:1998gb}
P.~Jeavons, D.~Cohen, and M.~C. Cooper.
\newblock Constraints, consistency and closure.
\newblock {\em Artificial Intelligence}, 101(1-2):251--265, 1998.

\bibitem{Khot:2002ju}
S.~Khot.
\newblock On the power of unique 2-prover 1-round games.
\newblock In {\em Proceedings of the 34th Annual ACM Symposium on Theory of
  Computing (STOC)}, pages 767--775, 2002.

\bibitem{kkvw}
M.~Kozik, A.~Krokhin, M.~Valeriote, and R.~Willard.
\newblock Characterizations of several {M}altsev conditions.
\newblock {\em Algebra Universalis}, 73(3-4):205--224, 2015.

\bibitem{Newman:2010da}
I.~Newman.
\newblock Property testing of massively parametrized problems - a survey.
\newblock {\em Property Testing}, 6390(Chapter 8):142--157, 2010.

\bibitem{Raghavendra08}
P.~Raghavendra.
\newblock Optimal algorithms and inapproximability results for every csp?
\newblock In {\em Proceedings of the 40th Annual {ACM} Symposium on Theory of
  Computing, Victoria, British Columbia, Canada, May 17-20, 2008}, pages
  245--254, 2008.

\bibitem{Ron:2010ua}
D.~Ron.
\newblock Algorithmic and analysis techniques in property testing.
\newblock {\em Foundations and Trends{\textregistered} in Theoretical Computer
  Science}, 5:73--205, 2010.

\bibitem{Rubinfeld:2011ik}
R.~Rubinfeld and A.~Shapira.
\newblock Sublinear time algorithms.
\newblock {\em SIAM Journal on Discrete Mathematics}, 25(4):1562--1588, 2011.

\bibitem{DBLP:conf/focs/ThapperZ12}
J.~Thapper and S.~Zivny.
\newblock The power of linear programming for valued {CSP}s.
\newblock In {\em 53rd Annual {IEEE} Symposium on Foundations of Computer
  Science, {FOCS} 2012, New Brunswick, NJ, USA, October 20-23, 2012}, pages
  669--678, 2012.

\bibitem{DBLP:conf/stoc/ThapperZ13}
J.~Thapper and S.~Zivny.
\newblock The complexity of finite-valued {CSP}s.
\newblock In {\em Symposium on Theory of Computing Conference, STOC'13, Palo
  Alto, CA, USA, June 1-4, 2013}, pages 695--704, 2013.

\bibitem{DBLP:conf/icalp/ThapperZ15}
J.~Thapper and S.~Zivny.
\newblock Sherali-adams relaxations for valued {CSP}s.
\newblock In {\em Automata, Languages, and Programming - 42nd International
  Colloquium, {ICALP} 2015, Kyoto, Japan, July 6-10, 2015, Proceedings, Part
  {I}}, pages 1058--1069, 2015.

\bibitem{Valiant:1984eu}
L.~G. Valiant.
\newblock Short monotone formulae for the majority function.
\newblock {\em Journal of Algorithms}, 5(3):363--366, 1984.

\bibitem{Yoshida:2014zn}
Y.~Yoshida.
\newblock Testing list {$H$}-homomorphisms.
\newblock {\em Computational complexity}, pages 1--37, 2014.

\end{thebibliography}

\end{document}